\documentclass[letterpaper,11pt]{article}
\usepackage[margin=1in]{geometry}  

\newenvironment{keywords}
{\par\small\noindent\textbf{Keywords: } \ignorespaces}
{\par\vskip 0.25ex}

\RequirePackage[numbers]{natbib}

\usepackage{microtype}
\usepackage{graphicx}
\usepackage{subcaption}
\usepackage{booktabs} 

\usepackage{hyperref}

\usepackage{tikz}
\usetikzlibrary{decorations.pathmorphing}
\usepackage{pgfplots}
\usetikzlibrary{shapes.misc}
\usetikzlibrary{arrows.meta}
\usetikzlibrary{patterns}

\usepackage{amsmath}
\usepackage{amssymb}
\usepackage{mathtools}
\usepackage{amsthm}
\usepackage[capitalize,noabbrev]{cleveref}
\crefname{lemma}{Lemma}{Lemmas}
\crefname{proposition}{Proposition}{Propositions}
\crefname{remark}{Remark}{Remarks}
\crefname{corollary}{Corollary}{Corollaries}
\crefname{definition}{Definition}{Definitions}
\crefname{conjecture}{Conjecture}{Conjectures}
\crefname{figure}{Fig.}{Figures}
\crefname{assumption}{Assumption}{Assumptions}
\crefname{abstraction}{Abstraction}{Abstractions}

\usepackage[linesnumbered,algoruled,boxed,lined]{algorithm2e}

\theoremstyle{plain}
\newtheorem{theorem}{Theorem}

\newtheorem{lemma}{Lemma}
\newtheorem{remark}{Remark}

\newtheorem{definition}{Definition}

\newtheorem{assumption}{Assumption}

\usepackage{bbm,dsfont}

\newcommand{\Var}{\mathrm{Var}}
\newcommand{\hr}{\widehat{r}}
\newcommand{\br}{\overline{r}}
\newcommand{\hG}{\widehat{G}}
\newcommand{\hp}{\widehat{p}}
\newcommand{\htheta}{{\widehat{\theta}}}
\newcommand{\hb}{\widehat{b}}

\newcommand {\E} {{\mathbb{E}}}
\newcommand{\reg}{{R}}
\newcommand{\R}{{\mathbb{R}}}
\newcommand{\Bcal}{{\mathcal{B}}}
\newcommand{\Ical}{{\mathcal{I}}}
\newcommand{\indic}{{\mathds{1}}}

\DeclareMathOperator*{\argmax}{arg\,max}
\DeclareMathOperator*{\argmin}{arg\,min}

\newcommand\numberthis{\addtocounter{equation}
{1}\tag{\theequation}}

\newcommand{\stepa}[1]{\overset{\text{(a)}}{#1}}
\newcommand{\stepb}[1]{\overset{\text{(b)}}{#1}}
\newcommand{\stepc}[1]{\overset{\text{(c)}}{#1}}
\newcommand{\stepd}[1]{\overset{\text{(d)}}{#1}}
\newcommand{\stepe}[1]{\overset{\text{(e)}}{#1}}

\DeclarePairedDelimiter{\abs}{\lvert}{\rvert}
\DeclarePairedDelimiter{\nor}{\|}{\|}
\DeclarePairedDelimiter{\parr}{(}{)}
\DeclarePairedDelimiter{\parq}{[}{]}

\DeclarePairedDelimiter{\bra}{\lbrace}{\rbrace}

\begin{document}

\title{The (Marginal) Value of a Search Ad: An Online Causal Framework for Repeated Second-price Auctions}

\author{Yuxiao Wen, Zihao Hu, Yanjun Han, Yuan Yao, Zhengyuan Zhou\thanks{Yuxiao Wen is with the Courant Institute of Mathematical Sciences, New York University, email: \url{yuxiaowen@nyu.edu}. Yanjun Han is with the Courant Institute of Mathematical Sciences and the Center for Data Science, New York University. Zhengyuan Zhou is with the Stern School of Business, New York University, and Arena Technologies. Zihao Hu and Yuan Yao are with Department of Mathematics, Hong Kong University of Science
and Technology.}}

\maketitle

\begin{abstract}%
Existing auto-bidding algorithms in digital advertising often treat the value of an ad opportunity as the revenue obtained when an ad is shown and/or clicked, and bid accordingly. This can lead to wasteful spending because the true value is the marginal gain from paid exposure: even without winning a sponsored slot, an advertiser may still earn revenue via an organic search result (e.g., on Google or Amazon). Motivated by recent work, we model ad value as a treatment effect—the outcome difference between winning and losing the auction—and study online learning for bidding in second-price (Vickrey) auctions under this causal perspective. We develop algorithms that attain rate-optimal regret under several feedback models. A key ingredient exploits the information revealed by the second-price payment rule, which strictly improves regret relative to analogous learning problems in first-price auctions.
\end{abstract}

\keywords{
Contextual bandits; causal inference; digital advertising; learning to bid. %
}

\section{Introduction}

Over the past years, advertisement has largely shifted from traditional promotions to digital advertising \citep{wagner2019digital}. Online advertising platforms---such as Amazon, Google, and Meta---have access to rich customer information and can help advertisers better target the intended audience tailored to their brands. In many of these advertising platforms and in particular the sponsored advertisement in mainstream search engines, second-price auctions (SPAs) \citep{vickrey1961counterspeculation} are employed to sell the ad inventory for their truthful nature \citep{lucking2000vickrey,klemperer2018auctions,lucking2007pennies}. In SPAs, the best practice of the bidder is to simply bid her own valuation of the item.\footnote{Throughout this work, we use advertisers and bidders interchangeably.}

In practice, however, advertisers face the crucial challenge of evaluating the actual value of the ad opportunity they are bidding on. This value is typically measured by the user's click-through rate (CTR) or conversion rate, which varies with the specific user or other environmental factors and is not known to the bidder. Note that an inaccurate valuation can lead to either overbidding or underbidding and thereby largely hurt the bidder's utility. To address this challenge and maximize the utility, one must rely on the rich contextual information to jointly estimate the ad value and make near-optimal bids on the fly \citep{weed2016online,feng2018learning,cesa2024role}.

A key and yet vastly overlooked aspect in the literature is that the bidder's product may still be included in the organic search results and receive a click, even if the bidder has lost the auction for a sponsored slot in the user's search. In other words, the utility of losing an auction is \textit{not necessarily zero}, and the value of an ad slot should be measured by the \textit{marginal} gain as opposed to the sole outcome of winning the auction. 

As a concrete motivation, suppose the user searches for cat food and looks for a few trusted brands he has purchased before. Such loyal users always ignore sponsored ads and head straight to the familiar brands in the search results. For those brands, while the outcome of serving ads is high, the marginal gain from serving ads is zero. Another motivation is that the brands can already be ranked top in the organic search results and receive a high CTR. They will gain little extra exposure by winning the auction and placing themselves among the sponsored slots. In either scenario, the marginal value of such an ad opportunity is small to the advertisers, but existing metrics mistake it to be a highly valuable one if value is measured only by the winning outcome.

To bridge this gap, a recent line of work proposes to model the marginal value as a \textit{treatment effect}, that is the outcome difference between winning and losing the auction \citep{waisman2024online,wen2025joint}. \citep{wen2025joint} studies this problem of jointly estimating treatment effect and bidding in first-price auctions (FPAs) and derives near-optimal algorithms under different feedback structures. Specifically, they consider two types of feedback on the highest other bid (HOB):
\begin{itemize}
    \item \textit{Full-information}: the bidder always observes the HOB at the end of auction.

    \item \textit{Binary}: the bidder only observes the win-loss indicator.
\end{itemize}

In FPAs, the optimal regret scales with $\widetilde{\Theta}(\sqrt{dT})$ under full-information and $\widetilde{\Theta}_d(T^{\frac{2}{3}})$ under binary feedback \citep{wen2025joint}. Here $T$ denotes the horizon of the bidding process and $d$ the feature dimension. In this work, we focus on SPAs and provide a complete picture across different feedback. In particular, we prove that bidding in SPAs is fundamentally easier than in FPAs under binary feedback. Our contributions are detailed as follows:
\begin{itemize}
    \item \textbf{(Problem formulation)} We introduce the formulation of treatment effect estimation to bidding in repeated SPAs. Crucially, we identify the information revealed by the payment in SPAs that is key to facilitating the learning process.

    \item \textbf{(Optimal regret)} By carefully exploiting the payment rule in SPAs, we establish an improved regret $\widetilde{\Theta}(\sqrt{dT})$ under the binary feedback, which outlines a fundamental difference between SPAs and FPAs. Together with a lower bound under full-information feedback, we provide a complete picture of regret characterizations for SPAs under both types of feedback (Table \ref{tab:regrets}).

    \item \textbf{(Algorithmic generalization)} To circumvent unrealistic overlap conditions in treatment effect estimation, we improve upon the causal inference approach employed in \citep{wen2025joint}. We relax the propensity score estimation condition to accommodate arbitrary estimation approaches, including the estimation entailed by our specific information structure. We also generalize the assumption on HOB distributions to incorporate point mass, handling a much broader class of distributions.

    \item \textbf{(Practical implementation)} For practical concerns, we also include an algorithm variant that discards complicated theoretical devices (Algorithm \ref{alg:ucb_tes}). Its performance and insights are discussed in Section \ref{sec:numerical}.
\end{itemize}

\subsection{Related Work}

\paragraph{Bidding in repeated auctions}
Research in auction theory has a long history. Early work of auctions, particularly SPAs and FPAs, typically takes a game-theoretic perspective to understand the equilibria of the bidding behaviors \citep{vickrey1961counterspeculation,myerson1981optimal,klemperer1999auction}. A more recent line of research, also more relevant to this work, combines bidding in repeated auctions with the view of learning to address uncertainties faced by the bidder \citep{blum2004online,devanur2009price,weed2016online,mohri2016learning,feng2018learning,han2020learning,han2025optimal,wen2025joint}. A majority of this literature assumes a known ad value and focuses on the uncertainty in the HOBs in FPAs, since the optimal bidding strategy in SPAs is trivial when the value is known. Nonetheless, learning HOBs turns out critical in our work because of the unknown valuation. As we model the value as a treatment effect and estimate it through the lens of causal inference, the knowledge on HOBs is required when computing the propensity score of the treatment (which is a successful display of the ad). In FPAs, \citep{han2025optimal} proposes the idea of interval splitting to handle HOB estimation under incomplete feedback. Building on this idea, we develop an estimated propensity score in SPAs with bid-dependent confidence width for value estimation.

\paragraph{Estimating unknown ad value}
There is also a rich line of literature that addresses the uncertainty in ad value. \citep{weed2016online} takes a bandit approach and considers regret minimization in the setting where the bidder observes the ad value if she wins the SPA. This is generalized to FPAs by \citep{feng2018learning} and \citep{cesa2024role}. More recently, \citep{waisman2024online} proposes to model the ad value as a treatment effect, and \citep{wen2025joint} provides near-optimal algorithms for joint treatment effect estimation and bidding in FPAs. To highlight our contributions, we make a more detailed comparison with \citep{wen2025joint}. We study the natural extension to SPAs, as a user can interact with both sponsored ads (the winning bidders) and organic results (the losing bidders) in a search, making treatment effect modeling a particular fit for search ads. Importantly, while the optimal regret scales with $\widetilde{\Theta}_d(T^{\frac{2}{3}})$ in the FPAs under binary feedback \citep{wen2025joint}, it is not tight in the SPAs. The winner in a SPA observes the HOB when she pays. This payment yields additional and asymmetric information that makes winning more profitable and improves the regret to $\widetilde{\Theta}(\sqrt{dT})$. To achieve this improvement, we generalize the condition required for HOB estimation in \citep{wen2025joint} to incorporate any form of error bounds, which may be of independent interest to future work.

\paragraph{Incrementality/lift bidding}
Prior to this work, a line of growing literature also propose to model the ad value as a causal difference under the name of \textit{incrementality} or \textit{lift}, which indicates the increasing interest from practitioners. Nonetheless, most works focus on the empirical perspectives due to the difficulties in causal online learning \citep{gordon2019comparison,lewis2022incrementality,gordon2023predictive,waisman2024online}. Others build theoretical insights upon often unrealistic assumptions, such as the access to massive randomized exploration data \citep{bompaire2021causal,johnson2017ghost,xu2016lift} or the overlap condition \citep{badanidiyuru2022incrementality}.\footnote{The overlap condition assumes that, for \textit{all} time $t$, the probability of winning the auction is always bounded away from $0$ and $1$ by a \textit{constant}.} The theoretically near-optimal algorithms in FPAs, without restrictive conditions, are provided by the recent work \citep{wen2025joint}.

\paragraph{Linear contextual bandits}
This work also closely relates to the literature on linear contextual bandits, as we will impose a linear model on the treatment effect in features. In the bandit literature, the learner always observes the value after choosing an arm. The optimal regret scales with $\widetilde{\Theta}(d\sqrt{T})$ when the arm space is continuous \citep{dani2008stochastic,abbasi2011improved} and $\widetilde{\Theta}(\sqrt{dT})$ when the arm set is finite \citep{auer2002using,chu2011contextual}. While these results do not apply as the value is not observed in our case, their ideas shed light to understanding the convergence of value estimation in linear models, when coupled with appropriate causal inference techniques.

\subsection{Notations}
Let $[n] = \{1,2,\dots,n\}$ for positive integer $n$. We define the indicator function $\indic[E]$ to be 1 if the event $E$ occurs and 0 otherwise. For vector $x\in \R^d$ and positive semi-definite (PSD) matrix $A\in \R^{d\times d}$, define $\|x\|_A = \sqrt{x^\top Ax}$. We use standard asymptotic notations $O(\cdot), \Omega(\cdot),$ and $\Theta(\cdot)$ to suppress constant factors, and $\widetilde{O}(\cdot), \widetilde\Omega(\cdot),$ and $\widetilde\Theta(\cdot)$ to suppress poly-logarithmic. We write $\widetilde{\Theta}_d(\cdot)$ when polynomial factors in $d$ are also suppressed. For a random variable $X$, $\E[X]$ and $\Var(X)$ denote its mean and variance. For a cumulative distribution function (CDF) $G$ on $[0,1]$, let its (pseudo-)inverse be $G^{-1}(z) = \inf\bra{x\in[0,1]: G(x)\ge z}$.

\subsection{Organization}

Section \ref{sec:formulation_and_result} summarizes the problem setup and our main results. Then we proceed with two estimation components. We explain the \textit{interval splitting} idea that exploits the second-price payment for a finer-grid HOB estimation in Section \ref{sec:hob_est}. Section \ref{sec:value_est} lists the key challenges and insights in the causal estimation of the treatment ad value $\Delta v_t$. A simplified implementation from the practical perspective and its empirical validation is provided in Section \ref{sec:numerical}.

\section{Problem Formulation and Main Results}\label{sec:formulation_and_result}

\subsection{Problem Formulation}\label{sec:formulation}

Consider a single bidder who jointly estimates the unknown (marginal) value and bids in repeated SPAs over a horizon of length $T$. A feature vector $x_t\in\R^d$ is revealed to the bidder at the beginning of every auction, or every time $t\in[T]$. Then the bidder submits a bid $b_t$, while the HOB $m_t$ is drawn by nature. If $b_t \ge m_t$, the bidder wins the auction, pays the HOB $m_t$, and receives a winning outcome $v_{t,1}$; if $b_t<m_t$, the bidder loses, pays nothing, and receives a baseline outcome $v_{t,0}$. We consider \textit{binary feedback} where the bidder observes no additional information other than the win-loss indicator. The observable (inferred from the auction outcome) is $v_{t,1}$ if won and $v_{t,0}$ if lost. Crucially, as opposed to FPAs, the bidder observes the HOB $m_t$ from the payment rule if and only if she wins, creating an asymmetric information structure that favors winning. The payoff function at time $t$ is
\begin{align*}
r_t(b) &\coloneqq \indic[b\ge m_t](v_{t,1}-m_t) + \indic[b<m_t]v_{t,0}\\
&= \indic[b\ge m_t](v_{t,1} - v_{t,0}-m_t) + v_{t,0}.
\end{align*}
For simplicity, we assume $\|x_t\|_2\le 1$ and $v_{t,1}, v_{t,0}, m_t\in[0,1]$. To address this problem, we consider a linear model for the ad value $\Delta v_t\coloneqq v_{t,1}-v_{t,0}$. Note that this value $\Delta v_t$ is \textit{never} observed, posing the need for a causal inference approach.

\begin{assumption}[Linear model]\label{assum:linear_model}
The treatment effect satisfies $\E[\Delta v_t] = \theta_*^\top x_t$ at every $t$ for some unknown parameter $\theta_*\in\R^d$ with $\|\theta_*\|_2\le 1$.
\end{assumption}

\begin{assumption}[Stochastic HOB]\label{assum:hob}
The HOB $m_t$ is drawn from an unknown i.i.d. distribution with a $(\omega,\lambda)$-locally-bounded CDF $G$ for some known constants $\omega,\lambda\in (0,1)$.
\end{assumption}

\begin{assumption}[Oblivious context]\label{assum:oblivious}
Conditioned on contexts $\bra{x_t}_{t\in[T]}$, the values $\bra{(v_{t,1},v_{t,0},m_t)}_{t\in[T]}$ are independent over time.
\end{assumption}

\begin{definition}\label{def:local_bounded}
A CDF $G$ is called $(\omega,\lambda)$-locally-bounded if for every $b_1,b_2\in[0,1]$, $\abs{b_1-b_2}\le \omega$ implies $\abs{G(b_1)- G(b_2)} \le \lambda$.
\end{definition}

The reason behind the i.i.d. HOB model is that the population of competing bidders is typically large in ad exchanges and relatively stationary over time.\footnote{See Appendix \ref{app:contextual_hob} for generalization to contextual HOBs.} So we expect the competing bid $m_t$ to average out and have a stationary behavior. Stationarity has been imposed in prior literature and in practice \citep{mohri2016learning,han2025optimal}. Assumption \ref{assum:oblivious} simply states that the context sequence is oblivious to the realized history, which is also standard in the literature and subsumes the case when $x_t$ is i.i.d. \citep{auer2002using,abbasi2011improved}.

Let $g$ and $G$ denote the HOB density and CDF, respectively. The expected payoff is
\begin{equation}\label{eq:expected_payoff}
    \br_t(b) \coloneqq G(b)\theta_*^\top x_t - \int_0^b g(m)m\mathrm{d}m + \E[v_{t,0}].
\end{equation}

To measure the performance of a bidding algorithm $\pi$, we consider the following notion of regret. It competes against the hindsight oracle that perfectly knows $\theta_*$ and $G$.
\begin{equation}
\reg(\pi) \coloneqq \E\parq*{\sum_{t=1}^T\max_{b_t^*\in[0,1]}\br_t(b_t^*) - \br_t(b_t)}
\end{equation}
where the bid sequence $(b_t)_t$ is selected by the algorithm $\pi$ and the expectation is taken over any randomness in the algorithm and the bidding process.



\subsubsection{Example HOBs}
To be concrete, we list a few examples of HOB distributions with $(\omega,\lambda)$-locally-bounded CDFs.

\paragraph{Continuous distribution with bounded density.} Suppose the HOB density is upper bounded by $U>0$, which implies that the CDF $G$ is $U$-Lipschitz. Then we have $\lambda=U\omega$ for any $\omega\in(0,1)$.

\paragraph{Distribution with atoms.} 
An atom or a point mass refers to a value $m\in[0,1]$ such that $\mathbb{P}(m_t = m) > 0$. The HOB distribution is allowed to have atoms as long as, over any interval $[a,a+\omega]\subseteq [0,1]$ of length $\omega$, the probability $\mathbb{P}(a<m_t\le a+\omega) = G(a+\omega) - G(a)\le \lambda$ is bounded.

\subsection{Main Results}
The main theorem of this work provides a near-optimal regret guarantee for the proposed bidding algorithm.

\begin{theorem}\label{thm:main_upper}
Suppose Assumption \ref{assum:linear_model}--\ref{assum:oblivious} hold. Under this binary feedback, there is a bidding algorithm $\pi$ that achieves
\[
\reg(\pi) = O\parr{\sqrt{dT}\log^3 T + d\log^5T}.
\]
\end{theorem}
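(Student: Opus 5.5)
The plan is to build an explore-free, optimism-based algorithm with two coupled estimation components, and then bound the regret by summing confidence widths.

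\textbf{Step 1: HOB estimation from payments.} Whenever the bidder wins she observes $m_t$. So for any bid level $b$ she observes the truncated sample $m_t\indic[m_t\le b_t]$ on winning rounds, together with $\indic[m_t\le b_t]$ on every round. I would discretize $[0,1]$ into a grid of mesh $\omega$-compatible intervals and use interval splitting in the spirit of \citep{han2025optimal}. The estimator of $G(b)$ uses all rounds with $b_s\ge b$: on such rounds the event $\{m_s\le b\}$ is fully observed, since either we lost (so $m_s>b_s\ge b$) or we won and saw $m_s$. This gives $\hG_t(b)$ with error of order $\sqrt{G(b)(1-G(b))\log T/N_t(b)}+\log T/N_t(b)$, where $N_t(b)=\#\{s<t:b_s\ge b\}$. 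The width is bid-dependent. The asymmetry is the key point: bidding high reveals information about every lower price, so an optimistic bidder who tends to overbid automatically refines $\hG$ at all levels below the bid. Assumption~\ref{assum:hob} (local boundedness) is used only to control discretization error and atoms.

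\textbf{Step 2: causal value estimation.} The observed outcome is $y_t=\indic[b_t\ge m_t]v_{t,1}+\indic[b_t<m_t]v_{t,0}$, and the propensity is $p_t=G(b_t)$. I would use an inverse-propensity / doubly-robust pseudo-outcome
\[
\widetilde{\Delta v}_t=\frac{\indic[b_t\ge m_t]y_t}{\hp_t}-\frac{\indic[b_t<m_t]y_t}{1-\hp_t}
\]
and fit $\htheta_t$ by weighted ridge regression. To avoid overlap assumptions, the weights are chosen as $\hp_t(1-\hp_t)$, in the style of \citep{wen2025joint}, so that the effective design matrix is $V_t=\lambda I+\sum_s \hp_s(1-\hp_s)x_sx_s^\top$. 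This yields $|(\htheta_t-\theta_*)^\top x|\lesssim \beta\|x\|_{V_t^{-1}}+\text{(propensity bias)}$. The generalization I would need is a lemma that accepts an arbitrary propensity error bound $|\hp_s-p_s|\le \epsilon_s$ and outputs a bias term that is controlled by $\sum_s\epsilon_s\|x_s\|_{V^{-1}}$-type quantities. To obtain a $\sqrt{dT}$ rather than $d\sqrt T$ dependence, I would use a SupLinUCB-style layered elimination \citep{auer2002using,chu2011contextual}; this enforces independence between rounds and noise, and is where the extra $\log T$ factors arise.

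\textbf{Step 3: bidding and regret decomposition.} The bidder bids the optimistic value $b_t=\mathrm{clip}(\htheta_t^\top x_t+w_t)$, where $w_t$ is the confidence width. Because SPAs are truthful, the regret per round is
\[
\br_t(b^*)-\br_t(b_t)=\int_{b_t}^{b^*}(\theta_*^\top x_t-m)\,g(m)\,\mathrm{d}m\le |b_t-\theta_*^\top x_t|\cdot \mathbb{P}(m_t\text{ between }b_t,\theta_*^\top x_t).
\]
Under optimism we have $b_t\ge\theta_*^\top x_t$, so the loss is at most $2w_t\,\mathbb{P}(\theta_*^\top x_t<m_t\le b_t)$. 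This quadratic-type structure, in which the gap multiplies the probability mass, lets me pair $w_t$ with the propensity weighting. The mass $\mathbb{P}(\cdot)\le \min(G(b_t),\ldots)$ compensates the $1/\sqrt{p(1-p)}$ inflation in $w_t$. Summing then gives $\sum_t \sqrt{p_t(1-p_t)}\|x_t\|_{V_t^{-1}}\le\sqrt{T\cdot d\log T}$ by the elliptical potential lemma (or $\sqrt{dT}$ via the layered version). The HOB errors contribute $\sum_t\sqrt{\log T/N_t(b_t)}$-type terms. Because overbidding grows $N_t$ at all lower levels, these sum to $\widetilde O(\sqrt T)$ plus lower-order $d\log^5T$ terms.

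\textbf{Main obstacle.} The hardest step is Step 2's bias control: the propensity error propagates into $\htheta$ through $1/\hp$. I must show that the contribution of the bid-dependent $\epsilon_s$ is summable at rate $\sqrt{T}$ without an overlap condition. My first attempt is to pair $\epsilon_s\sim\sqrt{p_s(1-p_s)/N_s}$ with the $p_s(1-p_s)$ weights, so that the bias per round is $O(\sqrt{\log T/N_s(b_s)})\|x_s\|_{V^{-1}}$. I would then argue monotonicity, namely that optimistic bids only increase the coverage $N$. If that fails, I would fall back to clipping the propensities to $[1/\sqrt T,1-1/\sqrt T]$ and paying an additive $\sqrt T$.
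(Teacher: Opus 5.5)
There is a genuine gap, and it is in Step 3: a single optimistic bid cannot neutralize the missing overlap. (Your SupLinUCB-style layering does match the paper's master routine.) Your compensation step needs $\mathbb{P}(\theta_*^\top x_t<m_t\le b_t)\lesssim\sqrt{p_t(1-p_t)}$ with $p_t=G(b_t)$. But that mass equals $G(b_t)-G(\theta_*^\top x_t)$, which tends to $1-G(\theta_*^\top x_t)$, not to $0$, as $G(b_t)\to 1$. Optimism pushes you exactly into that upper tail.

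Concretely, take $d=1$, $x_t\equiv 1$, $\theta_*=1/4$, $v_{t,0}\equiv 0$, and uniform HOB. Before any data, every valid upper confidence bound is at least $\max_{\|\theta\|_2\le 1}\theta^\top x_t=1$, so you bid $b_t=1$. You then win surely, never observe $v_{t,0}$, and $\hp_t=1$ makes the weight $\hp_t(1-\hp_t)=0$. Hence $V_t$ and $\htheta_t$ never change, and you bid $1$ forever, paying $\int_{1/4}^{1}(m-\tfrac14)\,\mathrm{d}m=9/32$ per round. Clipping $\hp$ away from $1$ does not repair this in general: at $b_t=1$ the IPW estimates $\E[v_{t,1}]$ rather than $\E[\Delta v_t]$, which carries an $O(1)$ bias whenever $\E[v_{t,0}]\neq 0$.

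The paper's missing ingredients are the following.
(i) Two reward formulations $\br_{t,0},\br_{t,1}$, whose sensitivities to value error are $\hG_t(b)$ and $1-\hG_t(b)$ respectively.
(ii) The selection routine, Algorithm~\ref{alg:ucb_selection}. Via Lemma~\ref{lem:approx_CDF_UCB_Lip} it uses $(\omega,\lambda)$-local boundedness to return an interval containing $\argmax_{b\in\Bcal}\br_t(b)$ on which $\hG_t$ stays away from $1$ or from $0$. The operative width is then $\min\{w_{t,0},w_{t,1}\}\propto \hG_t(1-\hG_t)\gamma\|x_t\|_{A_t^{-1}}=\gamma\|\sigma_t^{-1}x_t\|_{A_t^{-1}}$, which matches $A_t=I+\sum_\tau\sigma_\tau^{-2}x_\tau x_\tau^\top$.
(iii) Explicit random exploration (bid $0$ or $1$ with probability $1/2$, constant-variance IPW) whenever the widths exceed $C(\lambda,\omega)$, because the selection is only valid at constant accuracy. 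Lemma~\ref{lem:bounded_explore} bounds these rounds by $O(d\log^5T)$, and that is exactly the second term of the theorem. The paper explicitly leaves an explore-free scheme open.

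The HOB/bias part also fails as written. Estimating $G(b)$ directly from rounds with $b_s\ge b$ is the plain one-sided estimator, and ``overbidding grows $N_t$'' does not make its widths summable. Contexts are oblivious, so the target bid can drift upward through the grid, with each new level starting with few samples. In that case $\sum_t N_t(b_t)^{-1/2}$ can be of order $\sqrt{|\Bcal|T}=T^{3/4}$, and $\sum_t \log T/N_t(b_t)$ of order $\sqrt{T}$ up to logarithms. Through the bias term, the latter inflates $\gamma$ to order $T^{1/4}$.

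The paper instead does three things:
\begin{itemize}
\item it estimates each increment $p^i$ from all rounds with $b_s\ge b^i$, so lower increments get more data;
\item it makes the Bernstein width computable using the pilot $\hp_0$ from $\sqrt{T}\log T$ bids at $1$;
\item it applies Lemma~\ref{lem:weighted_sum_han25} to get $\sum_\tau u_\tau^2=O(\log^3T)$, hence $\gamma=O(\log^{3/2}T)$.
\end{itemize}

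This split error $\sqrt{\sum_{k\le j}p^k/n^k}$ is not of the form $\sqrt{\min(p,1-p)/N}$ that you want to pair with weights $\hp(1-\hp)$. Under your weights, the weighted IPW bias is about $\epsilon_s/\sqrt{\min(p_s,1-p_s)}$ and blows up at extreme propensities. The paper's squared weights $\sigma_\tau^{-2}=(\hG_\tau(1-\hG_\tau))^2$ keep the weighted bias at most $4u_\tau$ for an arbitrary error bound $u_\tau$ (giving $\|Z_t\|_2\le 4\sqrt{\sum_\tau u_\tau^2}$ in Lemma~\ref{lem:WLS}). The two-UCB width, linear in $\hG(1-\hG)$, is what makes those weights compatible with the elliptical potential lemma.
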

To complement our regret bound, we present a matching minimax lower bound. Let the minimax regret be 
\[
R^* = \inf_\pi \sup_{G,\theta_*} R(\pi;G,\theta_*)
\]
where the sup is taken over any pair of parameters $(G,\theta_*)$ that satisfies Assumption \ref{assum:linear_model}--\ref{assum:oblivious}, and $R(\pi;G,\theta_*)$ denotes the regret of algorithm $\pi$ under the corresponding problem instance. The next result indicates that the algorithm in Theorem \ref{thm:main_upper} is optimal up to poly-logarithmic factors in the minimax sense.
\begin{theorem}\label{thm:main_lower}
Suppose Assumption \ref{assum:linear_model}--\ref{assum:oblivious} hold. Even if the CDF $G$ is perfectly known, when $T\ge d^2$, it holds that
\[
\reg^* = \Omega\parr{\sqrt{dT}}.
\]
\end{theorem}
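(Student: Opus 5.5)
We prove Theorem \ref{thm:main_lower} by reducing to a linear contextual bandit with a finite number of effective arms, where the learner essentially must estimate $\theta_*^\top x_t$ and each mistake costs regret quadratic in the estimation error. The plan is to fix a known HOB distribution $G$, say uniform on $[0,1]$, so $g\equiv 1$; it is $(\omega,\omega)$-locally-bounded for any $\omega$. Then the expected payoff becomes $\br_t(b)=b\,\theta_*^\top x_t - b^2/2 + \E[v_{t,0}]$. It is maximized at $b^*_t=\theta_*^\top x_t$ (clipped to $[0,1]$), and the instantaneous regret is exactly $\tfrac12(b_t-\theta_*^\top x_t)^2$. So the bidder's regret is at least $\tfrac12\sum_t\E[(b_t-\theta_*^\top x_t)^2]$, i.e. the cumulative squared prediction error of $\theta_*^\top x_t$.

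The hard instance splits the horizon into $d$ independent one-dimensional problems. Let the contexts cycle deterministically through $x_t\in\{e_1,\dots,e_d\}$, each coordinate appearing $T/d$ times; these are oblivious, as Assumption \ref{assum:oblivious} requires. Take $\theta_*=\tfrac12\mathbf 1/\sqrt d\cdot\sqrt d\cdot$ scaled, i.e. $\theta_{*,i}=\tfrac{1}{2\sqrt d}+\epsilon\sigma_i/\sqrt d$ with $\sigma\in\{\pm1\}^d$, so that $\|\theta_*\|_2\le1$. For the outcomes, let $v_{t,0}=0$ and let $v_{t,1}$ be Bernoulli with mean $\theta_*^\top x_t$. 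Then the treatment effect has the correct mean, and the only information about $\sigma_i$ comes from observed $v_{t,1}$ on winning rounds with $x_t=e_i$. Observing $m_t$ carries no information because $G$ is known.

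Each coordinate is a two-point testing problem with at most $n=T/d$ Bernoulli samples whose means differ by $2\epsilon/\sqrt d$. Their KL divergence is $O(n\epsilon^2/d)$ provided the means are bounded away from $0$ and $1$; they are of order $1/\sqrt d$, so the Bernoulli KL is $O((\epsilon/\sqrt d)^2/(1/\sqrt d))=O(\epsilon^2/\sqrt d)$ per sample. To avoid this dependence on $d$, I would instead center the means at $1/2$ using $v_{t,0}$. That is, let $v_{t,1}\sim\mathrm{Bern}(1/2+\theta_*^\top x_t)$ and $v_{t,0}\equiv 1/2$, and set $\theta_{*,i}=\epsilon\sigma_i$ with $\epsilon\le 1/\sqrt d$. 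Negative $\theta_*^\top x_t$ clips the optimal bid at $0$, so instead I would use $\theta_{*,i}=1/(4\sqrt d)+\epsilon\sigma_i$, keeping everything in range. Then the per-sample KL is $O(\epsilon^2)$, and over at most $T/d$ samples it is $O(T\epsilon^2/d)$. Choosing $\epsilon=c\sqrt{d/T}$ makes this a constant; this satisfies $\epsilon\le 1/(4\sqrt d)$ exactly when $T\gtrsim d^2$, which is where the hypothesis $T\ge d^2$ enters.

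Applying Assouad's lemma over $\sigma\in\{\pm1\}^d$ then finishes the argument. For each coordinate $i$, on each of the $T/d$ rounds with $x_t=e_i$, any bid incurs loss at least $\tfrac12\epsilon^2$ under one of $\sigma_i=\pm1$: since $(b-a)^2+(b-a')^2\ge (a-a')^2/2$ with $|a-a'|=2\epsilon$, the minimum over the pair is at least of order $\epsilon^2$ on average. The testing error is bounded below by a constant via Pinsker. Summing gives regret $\gtrsim d\cdot (T/d)\cdot\epsilon^2=T\epsilon^2=\Omega(d)$, which is too small. So the naive construction gives only $d$, and I expect this step to be the main obstacle: squared loss against a $\sqrt{d/T}$ perturbation yields only $\Theta(d)$.

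To obtain $\sqrt{dT}$, the loss must be linear in $\epsilon$. That requires a point mass in $G$ at a threshold, allowed by Definition \ref{def:local_bounded}, so that the payoff has a kink. Take $G$ with an atom of mass $\lambda$ at $m_0=1/(4\sqrt d)$, which is fine for $\lambda<1$. Bidding above $m_0$ versus below it then changes the payoff by $\lambda(\theta_*^\top x_t-m_0)=\pm\lambda\epsilon$, so a wrong side costs $\Omega(\epsilon)$ per round. Put the remaining mass away from $m_0$, for instance uniformly on a high interval so that bids near $m_0$ interact only with the atom. Rerunning Assouad with linear loss $\epsilon$ per round and $\epsilon=c\sqrt{d/T}$ gives regret $\Omega(T\epsilon)=\Omega(\sqrt{dT})$. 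The remaining care is to check that winning rounds still reveal only $O(\epsilon^2)$ KL each, so that the learner cannot learn $\sigma_i$ beyond the budget, and that losing rounds reveal nothing because $v_{t,0}$ is constant.
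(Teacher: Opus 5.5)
Your final construction is correct and is essentially the paper's argument. It uses a known HOB distribution with an atom between the two candidate values of $\theta_*^\top x_t$, so a wrong-side bid costs a constant times the gap rather than its square. Winning outcomes are Bernoulli with means bounded away from $0$ and $1$, so each round contributes only $O(\epsilon^2)$ KL. The horizon is split into one-coordinate blocks with gap $\epsilon\asymp\sqrt{d/T}$, where $T\gtrsim d^2$ is what keeps $\|\theta_*\|_2\le 1$.

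The differences are cosmetic. You use Assouad over $\{\pm1\}^d$ with contexts $e_i$ and shift $v_{t,0}\equiv \tfrac12$ to center the Bernoulli means. The paper instead reserves an intercept coordinate ($\theta_{*,1}=\tfrac12$, $v_{t,0}\equiv 0$) and applies a two-point Le Cam bound on each of $d-1$ blocks.
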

Since Theorem \ref{thm:main_lower} holds even under full-information HOB feedback, we have tight regret under both full-information and binary feedback, as summarized in Table \ref{tab:regrets}.

\begin{table}%
	\caption{Optimal Regret in SPAs under Two HOB Feedbacks}
	\label{tab:regrets}
	\begin{minipage}{\columnwidth}
		\begin{center}
			\begin{tabular}{ccc}
				\toprule
				   & Full-information & Binary \\
                   \midrule
				Upper Bound {\scriptsize (Thm. \ref{thm:main_upper})}  & $\widetilde{O}(\sqrt{dT})$ & $\widetilde{O}(\sqrt{dT})$\\
                \midrule
			     Lower Bound {\scriptsize (Thm. \ref{thm:main_lower})}  & $\Omega(\sqrt{ dT})$ & $\Omega(\sqrt{dT})$ \\
				\bottomrule
			\end{tabular}
		\end{center}
	\end{minipage}
\end{table}%

\section{HOB Estimation from Payment}\label{sec:hob_est}
In this section, we formalize how to learn the HOB distribution from the second-price payment rule, which is informative only when the bidder wins the auction. Throughout the remaining work, we consider the discretized bids
\begin{equation}\label{eq:discretize_bid}
\Bcal \coloneqq \bra{b^j: j=1,2,\dots,\lceil\sqrt{T}\rceil},\quad b^j = \frac{j-1}{\sqrt{T}}.
\end{equation}

\subsection{One-sided Feedback and More}
Note that a higher bid reveals more information about the HOB CDF $G$ under the second-price payment rule. Indeed, for $b<b'$, one can always infer $\indic[b\ge m_t]$ from $\indic[b'\ge m_t]$: if $\indic[b'\ge m_t]=0$, then so is $\indic[b\ge m_t]=0$. Otherwise, since the bidder wins the auction and pays the HOB, the bidder knows $m_t$ and can compute $\indic[b\ge m_t]$. This gives the name one-sided feedback. Nonetheless, as studied in the bandit literature, this one-sided feedback alone is insufficient to achieve the optimal regret $\widetilde{O}(\sqrt{T})$ when the sequence of realized values $(\Delta v_t)_t$ is unconstrained \citep{wen2024stochastic,han2025optimal}.

A key observation from \citep{han2025optimal} is that the one-bit feedback $\indic[b\ge m_t]$ from a lower bid $b<b'$ also provides \textit{partial} information to the CDF $G(b')$ of a higher bid. Indeed, since 
\begin{align*}
G(b') = \E[\indic[b'\ge m_t]] &= \E[\indic[b\ge m_t] + \indic[b'\ge m_t > b]]\\
&= G(b) + \mathbb{P}(b'\ge m_t > b),
\end{align*}
observations from bidding $b$ help estimate the first part $G(b)$. To utilize this additional information, for each discretized bid $b^j\in\Bcal$ in \eqref{eq:discretize_bid}, we have
\[
G(b^j) = \sum_{i\le j}p^i
\]
where $p^i\coloneqq \mathbb{P}(b^{i}\ge m_t > b^{i-1})$. Then we estimate each $p^i$ individually using historical observations. This is beneficial for two reasons.

First, as commented earlier, observations from higher bids imply observations on lower bids. Therefore, as described in Figure \ref{fig:one-side}, more data is available for smaller bids. By estimating each $p^i$ separately for $i\le j$, we can derive a tighter confidence width for $G(b^j)$, which is crucial to improve the regret dependence from $T^{\frac{2}{3}}$ to $\sqrt{T}$.

Second, an estimator $\widehat{p}^i$ converges faster to the truth when the target probability $p^i$ is smaller than a constant level. By Bernstein's concentration (Lemma \ref{lem:bernstein}), this value $p^i=o(1)$ shows up in the confidence width and enables fast learning. Nonetheless, since $p^i$ is unknown, this tight confidence width cannot be computed directly. We adopt the solution by \citep{han2025optimal} to use $O(\sqrt{T})$ initial samples to compute an initial estimator $\hp^i_0$ for each $i$ in Algorithm \ref{alg:master_alg_te} and prove that it suffices for our target regret.

\begin{figure}[h!]
    \centering
    \begin{tikzpicture}[thick, scale=1.2]
        \draw (0,0) -- (6,0) node[below] {};
        \draw (0,0) -- (0,-0.05) node[below] {$0=b^1$};
        \draw (6,0) -- (6,-0.05) node[below] {$1=b^5$};
        \draw[->] (0,0) -- (0,2.5) node[above] {\# samples}; 

        \node[below] at (3,-0.32) {bid space};

        \node[below] at (1.5,-0.05) {$b^2$};
        \draw (1.5,0) -- (1.5,-0.1);

        \node[below] at (3,-0.05) {$b^3$};
        \draw (3,0) -- (3,-0.1);

        \node[below] at (4.5,-0.05) {$b^4$};
        \draw (4.5,0) -- (4.5,-0.1);


        \draw[blue!80, thick, fill=blue!10, pattern color=blue!40] (0,0) rectangle (1.5,0.3);
        \node[blue] at (0.75,0.125) {};

        \draw[blue!80, thick, fill=blue!10, pattern color=blue!40] (1.5,0) rectangle (3,0.3);
        \node[blue] at (2.25,0.125) {};

        \draw[blue!80, thick, fill=blue!10] (3,0) rectangle (4.5,0.3);
        \node[blue] at (3.75,0.125) {};

        \draw[blue!80, thick, fill=blue!10, pattern color=blue!40] (4.5,0) rectangle (6,0.3);
        \node[blue] at (5.25,0.15) {\scriptsize $\bra{\tau: b^5\ge m_\tau}$};

        \draw[orange!80!brown, thick, fill=orange!10, pattern color=orange!40] (0,0.3) rectangle (1.5,0.8);
        \node[orange!80!brown] at (0.75,0.55) {};

        \draw[orange!80!brown, thick, fill=orange!10] (1.5,0.3) rectangle (3,0.8);
        \node[orange!80!brown] at (2.25,0.55) {};

        \draw[orange!80!brown, thick, fill=orange!10] (3,0.3) rectangle (4.5,0.8);
        \node[orange!80!brown] at (3.75,0.55) {\scriptsize $\bra{\tau: b^4\ge m_\tau}$};

        \draw[purple, thick, fill=purple!10] (0,0.8) rectangle (1.5,1.3);
        \node[purple] at (0.75,1.05) {};

        \draw[purple, thick, fill=purple!10] (1.5,0.8) rectangle (3,1.3);
        \node[purple] at (2.25,1.05) {\scriptsize $\bra{\tau: b^3\ge m_\tau}$};

        \draw[cyan, thick, fill=cyan!10] (0,1.3) rectangle (1.5,1.95);
        \node[cyan] at (0.75,1.625) {\scriptsize $\bra{\tau: b^2\ge m_\tau}$};




    \end{tikzpicture}

    \caption{\raggedright \textbf{Illustration of the one-sided feedback inferred from payment.} This is a toy example with discretization size $|\Bcal|=5$. Because of the second-price payment, the bidder can infer $\indic[b^i\ge m_\tau]$ from $\indic[b^j\ge m_\tau]$ whenever $b^i\le b^j$. Consequently, the smaller bid intervals always have more observations.}
    \label{fig:one-side}
\end{figure}

\subsection{HOB Estimation}
To elaborate on the idea above, we consider the estimation of $\bra{G(b): b\in\Bcal}$ at a given time $t$. Suppose we have a subset of time indices $\Phi_t\subseteq [t-1]$ such that the HOB observations $\bra{\indic[b_\tau\ge m_\tau]m_\tau}_{\tau\in\Phi_t}$ are \textit{mutually independent} conditioned on $\bra{(b_\tau, x_\tau)}_{\tau\in\Phi_t}$. For each bid index $j\in[\lceil\sqrt{T}\rceil]$, we can define the probability estimator
\begin{equation*}
\widehat{p}^j_t \coloneqq \frac{\sum_{\tau\in\Phi_t}\indic[b_\tau \ge b^j]\indic[b^j< m_\tau\le b^{j+1}]}{\sum_{\tau\in\Phi_t}\indic[b_\tau \ge b^j]}
\end{equation*}
and naturally the CDF estimator
\begin{equation}\label{eq:cdf_est}
\widehat{G}_t(b^j) \coloneqq \sum_{i\le j}\widehat{p}_t^i.
\end{equation}
To derive an estimator for the reward in \eqref{eq:expected_payoff} later, note that the \textit{expected payment} can be approximated as $\int_0^{b^j} g(m)m\mathrm{d}m = b^jG(b^j) - \int_0^{b^j}G(m)\mathrm{d}m \approx b^jG(b^j)-\frac{1}{\sqrt{T}}\sum_{i\le j}G(b^i)$. It turns out critical to develop good estimators for both the CDF value and its discretized integral:

\begin{lemma}\label{lem:hob_est}
At each time $t\in[T]$ and given indices $\Phi_t\subseteq[t-1]$, suppose $\bra{\indic[b_\tau\ge m_\tau]m_\tau}_{\tau\in\Phi_t}$ are mutually independent conditioned on $\bra{(b_\tau, x_\tau)}_{\tau\in\Phi_t}$. Let $n_t^j\coloneqq \sum_{\tau\in\Phi_t}\indic[b_\tau\ge b_j]$ for each index $j\in[\lceil\sqrt{T}\rceil]$. With probability at least $1-T^{-3}$, it holds that
{\small
\begin{equation*}
\abs{G(b^j)-\widehat{G}_t(b^j)} \le u_t(b^j)\text{ and } \frac{1}{\sqrt{T}}\abs*{\sum_{i\le j}G(b^i) - \widehat{G}_t(b^i)}\le u_t(b^j)
\end{equation*}
}
for every $j\in[\lceil\sqrt{T}\rceil]$ and $\widehat{G}_t$ defined in \eqref{eq:cdf_est}. Here
\[
u_t(b^j) \coloneqq 8\sqrt{\sum_{k\le j}\frac{2\log T}{n_{t}^k}\parr*{\hp_0^k + \frac{12\log T}{\sqrt{T}}}} + \frac{8\log T}{n_{t}^j}
\]
where $\hp_0^k$ estimates $p^k$ with at least $\lceil\sqrt{T}\rceil$ i.i.d. samples as defined in \eqref{eq:init_prob_est}.
\end{lemma}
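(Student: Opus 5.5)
The plan is to bound each $\hp^k_t-p^k$ separately with Bernstein's inequality (Lemma \ref{lem:bernstein}), replace the unknown $p^k$ in the variance term by the initial estimate $\hp^k_0$, and then sum the per-interval errors. Throughout I read $\hp^k_t$ as estimating the mass of the interval ending at $b^k$, matching $G(b^j)=\sum_{i\le j}p^i$; the indexing in the displayed estimator is off by one, which is immaterial here.

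\textbf{Step 1: conditional structure.} Condition on $\bra{(b_\tau,x_\tau)}_{\tau\in\Phi_t}$. For each $k$, the summands $\indic[b_\tau\ge b^k]\indic[b^{k-1}<m_\tau\le b^k]$ are independent. Whenever $b_\tau\ge b^k$, the summand is an observable Bernoulli$(p^k)$ variable. This holds because either the bidder loses, in which case $m_\tau>b_\tau\ge b^k$ and the indicator is $0$, or she wins and observes $m_\tau$ through the payment. Since $m_\tau$ is i.i.d., $\hp^k_t$ is an average of $n^k_t$ i.i.d. Bernoulli$(p^k)$ variables given the selection.

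Applying Bernstein and a union bound over $k\le\lceil\sqrt T\rceil$ (and over the random count $n^k_t\le T$ if needed), with probability $1-T^{-3}/2$ we get
\[
\abs{\hp^k_t-p^k}\le \sqrt{\frac{2p^k\cdot 4\log T}{n^k_t}}+\frac{4\log T}{n^k_t}\quad\text{for all } k.
\]

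\textbf{Step 2: replace $p^k$ by the initial estimator.} The estimator $\hp^k_0$ in \eqref{eq:init_prob_est} uses at least $\sqrt T$ i.i.d. samples. Bernstein gives $\abs{\hp^k_0-p^k}\le\sqrt{8p^k\log T/\sqrt T}+4\log T/\sqrt T$. Solving this quadratic inequality in $\sqrt{p^k}$ (AM-GM: $\sqrt{8p\log T/\sqrt T}\le p/2+4\log T/\sqrt T$) yields
\[
p^k\le 2\hp^k_0+\frac{16\log T}{\sqrt T}\le 2\parr*{\hp^k_0+\frac{12\log T}{\sqrt T}}.
\]
This holds with probability $1-T^{-3}/2$, and the constants are tuned to match $u_t$.

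\textbf{Step 3: aggregate.} Note that $n^k_t$ is nonincreasing in $k$, so $n^k_t\ge n^j_t$ for $k\le j$. A naive triangle inequality would give $\sum_k\sqrt{\cdot}$, which is too loose. This is the main obstacle: we need $\sqrt{\sum_k\cdot}$, as in $u_t$.

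To get it, I would not bound each $\hp^k_t$ separately. Instead I would write
\[
\widehat G_t(b^j)-G(b^j)=\sum_{k\le j}\frac{1}{n^k_t}\sum_{\tau}\indic[b_\tau\ge b^k]\parr*{Z^k_\tau-p^k}.
\]
Here $Z^k_\tau=\indic[b^{k-1}<m_\tau\le b^k]$, so this is a sum of independent-across-$\tau$ terms
\[
Y_\tau=\sum_{k\le j}\frac{\indic[b_\tau\ge b^k]}{n^k_t}\parr*{Z^k_\tau-p^k}.
\]
For fixed $\tau$ the $Z^k_\tau$ are indicators of disjoint events, hence negatively correlated. Therefore
\[
\Var(Y_\tau)\le\sum_k \indic[b_\tau\ge b^k]\,\frac{p^k}{(n^k_t)^2},
\]
and summing over $\tau$ gives total variance at most $\sum_{k\le j}p^k/n^k_t$. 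Each increment satisfies $\abs{Y_\tau}\le\max_k 1/n^k_t=1/n^j_t$, because at most one $Z^k_\tau$ is nonzero and the centering terms sum to at most $1/n^j_t$. Bernstein then gives exactly
\[
\sqrt{\sum_{k\le j}\frac{p^k\log T}{n^k_t}}+\frac{\log T}{n^j_t},
\]
and Step 2 converts this into $u_t(b^j)$ after a union bound over $j$.

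\textbf{Step 4: the integral bound.} For the second claim, write
\[
\frac{1}{\sqrt T}\sum_{i\le j}\parr*{\widehat G_t(b^i)-G(b^i)}=\sum_{k\le j}\frac{j-k+1}{\sqrt T}\parr*{\hp^k_t-p^k}.
\]
The weights $(j-k+1)/\sqrt T$ lie in $[0,1]$. The same Bernstein argument applies to this weighted sum, since the weights only shrink the variance and the range. This gives the same bound $u_t(b^j)$, and a final union bound yields total failure probability at most $T^{-3}$.
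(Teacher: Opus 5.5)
Your proposal is correct and follows essentially the same route as the paper. Both arguments write $\widehat G_t(b^j)-G(b^j)$ and its weighted cumulative version as a sum over $\tau$ of independent mean-zero terms $\sum_{k\le j}\indic[b_\tau\ge b^k](Z^k_\tau-p^k)/n^k_t$, bound the total variance by $\sum_{k\le j}p^k/n^k_t$ and each term by $O(1/n^j_t)$, apply Bernstein with a union bound over $j$, and then replace $p^k$ by $2(\hp^k_0+12\log T/\sqrt T)$. The differences are minor: your Step 1 is an unused detour; you derive the $p^k$-versus-$\hp^k_0$ comparison yourself via Bernstein and AM-GM, whereas the paper cites Han et al.; and you explicitly justify the variance step through negative correlation of disjoint-interval indicators and use the correct weight $(j-k+1)/\sqrt T$ where the paper writes $(j-k)/\sqrt T$.
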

The conditional independence required in Lemma \ref{lem:hob_est} does not trivially hold and is addressed later in Section \ref{sec:cond_ind}. For the purpose of demonstration, we shall assume it holds until Section \ref{sec:cond_ind}.


\section{Bidding with Unreliable Value Estimation}\label{sec:value_est}
We now consider estimation of the linear parameter $\theta_*$ and the actual bidding algorithm. The estimation is ``unreliable'' in the sense that, as we will see shortly, the estimation error bound is in general \textit{unbounded}. Indeed, because the value is a treatment effect, we would need observations of both winning and baseline outcomes $(v_{t,1}, v_{t,0})$ to accurately estimate $\theta_*$. Developing a good estimator $\htheta_t$ is typically done via randomized experiments or guaranteed observations of both sides in causal inference (i.e. the overlap condition). However, as the bidder minimizes the regret, she is inclined to win the auctions with large $\Delta v_t$ and lose the ones with small $\Delta v_t$. Naively running randomized experiments would result in a suboptimal regret.\footnote{Suppose one uses $N$ randomized experiments to draw bids $b_t\sim\mathrm{Unif}\bra{0,1}$, estimate $\theta_*$ via these observations, and then commit to the estimator. Trading off $N$ leads to regret $\widetilde{O}(T^{\frac{2}{3}})$.}

To arrive at the optimal regret, we will give up on developing a reliable estimator $\htheta_t$ for $\theta_*$ at any time $t$ during the bidding process. Instead, the performance of our estimator $\htheta_t$ will depend on the bidding trajectory and have a generally unbounded error. Crucially, this large estimation error is neutralized by a decision step that carefully exploits the structure of the repeated auctions.

\subsection{Inverse-propensity-weighted (IPW) Estimator}

The value estimation starts with a standard causal inference concept: the IPW estimator. Since the treatment in our formulation corresponds to the display of an ad, the propensity score of bidding $b$ at time $t$ is $\mathbb{P}(b\ge m_t) = G(b)$. When $G$ is known, it is natural to consider the unbiased IPW estimator for $\Delta v_t$:
\begin{equation*}
\widehat{e}_t(b) \coloneqq \frac{\indic[b\ge m_t]v_{t,1}}{G(b)} - \frac{\indic[b< m_t]v_{t,0}}{1-G(b)}.
\end{equation*}

While $G$ is not known, the bidder may maintain an estimator $\widehat{G}_t$ at each time $t$ and devise a variant of this IPW estimator. This work considers a simple alternative:
\begin{equation}\label{eq:IPW}
\widetilde{e}_t(b) \coloneqq \frac{\indic[b\ge m_t]v_{t,1}}{\widehat{G}_t(b)} - \frac{\indic[b< m_t]v_{t,0}}{1-\widehat{G}_t(b)}.
\end{equation}
The performance of this estimator is summarized by the following result. Note that it by no means bounds the bias nor the variance of the IPW estimator in \eqref{eq:IPW}, as $\sigma_t(b)$ can go to infinity when $\widehat{G}_t(b)$ is close to $0$ or $1$. The purpose is to find computable proxies for its bias and variance in further variance reduction steps.

\begin{lemma}[Bias-variance proxy]\label{lem:bias_variance}
At time $t$, suppose $\abs{\widehat{G}_t(b) - G(b)} \le u_t(b)$ for every bid $b\in[0,1]$ for some width $u_t(b)$. Then there exists a constant $c_0=4$ such that
\begin{equation*}
\begin{cases}
    \abs*{\E[\widetilde{e}_t(b)] - \theta_*^\top x_t} \le c_0\cdot u_t(b)\sigma_t(b) \\
    \Var(\widetilde{e}_t(b)) \le c_0\cdot \sigma_t(b)^2
\end{cases}
\end{equation*}
where
\begin{equation}\label{eq:sigma_t}
\sigma_t(b) \coloneqq \frac{1}{\widehat{G}_t(b)(1-\widehat{G}_t(b))}.
\end{equation}
\end{lemma}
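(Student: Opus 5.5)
The plan is to condition on the history $\mathcal{F}_{t-1}$ up to time $t$ together with the context $x_t$. Under this conditioning, $\widehat{G}_t$ is built only from indices $\Phi_t\subseteq[t-1]$, so it is a deterministic function, and the bid $b$ is fixed. All expectations and variances below are conditional in this sense. By Assumption \ref{assum:oblivious}, the fresh draw $(v_{t,1},v_{t,0},m_t)$ is independent of $\mathcal{F}_{t-1}$ given $x_t$.

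I will also use $\E[\indic[b\ge m_t]v_{t,1}] = G(b)\mu_1$ and $\E[\indic[b<m_t]v_{t,0}] = (1-G(b))\mu_0$, where $\mu_a\coloneqq\E[v_{t,a}]$. This requires the HOB $m_t$ to be independent of the potential outcomes. This unconfoundedness is implicit in the model: the expected payoff \eqref{eq:expected_payoff} factorizes as $G(b)\theta_*^\top x_t$, which presupposes it. I regard this step as the only delicate point and will state it explicitly. If $\widehat{G}_t(b)\in\{0,1\}$, then $\sigma_t(b)=\infty$ and both claims are vacuous, so assume $\widehat{G}_t(b)\in(0,1)$.

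\textbf{Bias.} With $G=G(b)$ and $\widehat G=\widehat G_t(b)$,
\[
\E[\widetilde e_t(b)] - (\mu_1-\mu_0) = \mu_1\frac{G-\widehat G}{\widehat G} + \mu_0\frac{G-\widehat G}{1-\widehat G}.
\]
Here $\mu_1-\mu_0=\theta_*^\top x_t$ by Assumption \ref{assum:linear_model}. Since $\mu_0,\mu_1\in[0,1]$ and $\abs{G-\widehat G}\le u_t(b)$, the absolute value is at most
\[
u_t(b)\parr*{\tfrac{1}{\widehat G}+\tfrac{1}{1-\widehat G}} = u_t(b)\,\sigma_t(b),
\]
using $\tfrac1{\widehat G}+\tfrac1{1-\widehat G}=\tfrac{1}{\widehat G(1-\widehat G)}$. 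This gives the bias claim with constant $1\le c_0$.

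\textbf{Variance.} Bound the variance by the second moment. The two terms of $\widetilde e_t(b)$ have disjoint indicator supports, so the cross term vanishes. Using $v_{t,a}\in[0,1]$,
\[
\Var(\widetilde e_t(b)) \le \E[\widetilde e_t(b)^2] \le \frac{G}{\widehat G^2} + \frac{1-G}{(1-\widehat G)^2} \le \frac1{\widehat G^2}+\frac1{(1-\widehat G)^2} \le \parr*{\frac1{\widehat G}+\frac1{1-\widehat G}}^2 = \sigma_t(b)^2.
\]
This gives the variance claim, again with constant $1\le c_0=4$.

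The computation itself is routine. The main care is in the measurability and independence step: $\widehat{G}_t$ must depend only on past rounds, which holds since $\Phi_t\subseteq[t-1]$. The treatment $\indic[b\ge m_t]$ must also be independent of the potential outcomes at time $t$ given $x_t$. This second requirement is what I will check first against the model's implicit assumptions.
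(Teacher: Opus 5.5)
Your proof is correct and follows the same route as the paper. You compute the bias directly from $\E[\indic[b\ge m_t]]=G(b)$, using the independence of $m_t$ from the potential outcomes; the paper relies on this implicitly when it pulls $v_{t,1},v_{t,0}$ out of the expectation, and you are right to state it. You bound the variance by the second moment, and your sharper constants (the exact identity $\frac{1}{\widehat{G}_t(b)}+\frac{1}{1-\widehat{G}_t(b)}=\sigma_t(b)$ instead of the paper's factor $2$, and disjoint indicator supports instead of $(a+b)^2\le 2a^2+2b^2$) stay within $c_0=4$.
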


\begin{remark}[IPW design]
In comparison, to arrive at an algorithm with no explicit bid experiments, \citep{wen2025joint} proposes a ``truncated'' IPW variant when the HOB estimator satisfies $\abs{G(b) - \widehat{G}_t(b)} \lesssim \sqrt{G(b)(1-G(b))/t} + 1/t$. This is a Bernstein-type error bound that requires the HOB estimation to be more accurate at the regime where the variance $G(b)(1-G(b))$ is small. Clearly, as our estimation in \eqref{eq:cdf_est} takes a piece-wise approach, the guarantees in Lemma \ref{lem:hob_est} do not align with such a condition. By contrast, our simple IPW design and performance guarantees in Lemma \ref{lem:bias_variance} allow for an arbitrary HOB error $\abs{\widehat{G}_t(b) - G(b)} \le u_t(b)$. It remains open whether one can achieve no-experiment with a more delicate IPW design in our setting. 


\end{remark}

Given the computable variance proxy $\sigma_t(b)^2$ of the IPW estimator in Lemma \ref{lem:bias_variance}, we solve the weighted least squares to find an estimator for the linear parameter $\theta_*$. Specifically, given a selected subset of time indices $\Phi_t\subseteq [t-1]$ at time $t$, consider
\begin{equation}\label{eq:WLS}
\htheta_t = \argmin_{\theta\in\R^d}\sum_{\tau\in\Phi_t}\sigma_\tau(b_\tau)^{-2}\parr{\widetilde{e}_\tau(b_\tau) - \theta^\top x_\tau}^2 + \|\theta\|^2_2.
\end{equation}
The weights $\sigma_\tau(b_\tau)^{-2}$ are taken to address the heteroskedasticity in the constructed IPW estimators $\bra{\widetilde{e}_\tau(b_\tau)}_{\tau\in\Phi_t}$, and a standard Ridge regularization is imposed. The solution to \eqref{eq:WLS} takes the closed form as in Line 5 of Algorithm \ref{alg:base_alg_te}. The following result provides an error bound on the estimated treatment effect by the estimator defined in \eqref{eq:WLS}.

\begin{algorithm}[!t]\caption{UCB Computation Routine}
\label{alg:base_alg_te}
\textbf{Input:} Time indices $\Phi_t\subseteq [t-1]$, bid subset $B_t\subseteq \Bcal$, CDF estimator $\widehat{G}_t$, historic widths $\bra{u_\tau}_{\tau\in\Phi_t}$ and width function $u_t$.

$\sigma_\tau \gets \sigma_\tau(b_\tau)$ in \eqref{eq:sig_t_explore} for $\tau\in \Phi_t$;

$A_{t}\leftarrow I + \sum_{\tau\in \Phi_t} \sigma_\tau^{-2}x_\tau x_\tau^{\top}$;

$z_{t} \leftarrow \sum_{\tau\in\Phi_t}\sigma_\tau^{-2}x_\tau \widetilde{e}_\tau$ where $\widetilde{e}_\tau = \widetilde{e}_\tau(b_\tau)$ as in \eqref{eq:IPW_with_explore};

Compute estimator $\htheta_{t} \leftarrow A_{t}^{-1}z_{t}$.

Set $\gamma\gets 1 + 14\log T+4\sqrt{\sum_{\tau\in\Phi_t}u_\tau^2}$.

\For{$b^j\in B_t$}
{
Compute the following quantities:
{

Confidence width $w_{t,0}(b^j) \leftarrow \frac{8}{1-\lambda}\parr{\widehat{G}_t(b^j)\gamma\|x_t\|_{A_t^{-1}} + 4u_t(b^j) + \frac{2}{\sqrt{T}}}$. 

Confidence width $w_{t,1}(b^j) \leftarrow \frac{8}{1-\lambda}\parr{\parr{1-\widehat{G}_t(b^j)}\gamma\|x_t\|_{A_t^{-1}} + 4u_t(b^j) + \frac{2}{\sqrt{T}}}$. 

Estimated reward $\hr_{t,0}(b^j)\leftarrow \widehat{G}_t(b^j)(\htheta_{t}^{\top}x_t - b^j)+ \sum_{i\le j}\widehat{G}_t(b^i)$.

Estimated reward $\hr_{t,1}(b^j)\leftarrow \widehat{G}_t(b^j)(\htheta_{t}^{\top}x_t - b^j) + \sum_{i\le j}\widehat{G}_t(b^i) - \htheta_t^{\top}x_t$.
}
}

Invoke Algorithm \ref{alg:ucb_selection} with perturbation $c=\frac{\omega}{4}$, margin constant $\epsilon=\frac{1-\lambda}{8}$, CDF $\widehat{G}_t$, and value $\htheta_t^\top x_t$ to get $\parq{b_{\mathrm{L}},b_{\mathrm{R}}}$ and index $q\in\bra{0,1}$. \label{line:find_ucb_interval}

\end{algorithm}

\begin{lemma}\label{lem:WLS}
Suppose $(v_{\tau,1},v_{\tau,0})_{\tau\in \Phi_t}$ are conditionally independent given $(x_\tau,b_\tau,m_\tau)_{\tau\in \Phi_t}$, and $\abs{G(b_\tau) - \widehat{G}_\tau(b_\tau)}\le u_\tau$ for every $\tau\in\Phi_t$. Then with probability at least $1-T^{-3}$, it holds that
\begin{equation*}
\abs{\htheta_t^{\top}x_t - \theta_*^{\top}x_t} \leq \gamma \|x_t\|_{A_t^{-1}}, 
\end{equation*}
where $\gamma$ and $A_t$ are defined in Line 6 and Line 3 of Algorithm \ref{alg:base_alg_te}. 
\end{lemma}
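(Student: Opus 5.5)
We follow the standard self-normalized analysis of weighted ridge regression, treating the IPW targets as unbiased responses corrupted by a bias term. Write $\sigma_\tau=\sigma_\tau(b_\tau)$ and decompose each target as
\[
\widetilde{e}_\tau = \theta_*^\top x_\tau + \beta_\tau + \eta_\tau,\qquad \beta_\tau \coloneqq \E[\widetilde{e}_\tau\mid x_\tau,b_\tau]-\theta_*^\top x_\tau,\quad \eta_\tau\coloneqq \widetilde{e}_\tau-\E[\widetilde{e}_\tau\mid x_\tau,b_\tau].
\]
Since $A_t = I+\sum_{\tau\in\Phi_t}\sigma_\tau^{-2}x_\tau x_\tau^\top$, the closed form gives
\[
\htheta_t-\theta_* = A_t^{-1}\Bigl(-\theta_* + \sum_{\tau\in\Phi_t}\sigma_\tau^{-2}x_\tau\beta_\tau + \sum_{\tau\in\Phi_t}\sigma_\tau^{-2}x_\tau\eta_\tau\Bigr).
\]
Taking the inner product with $x_t$ and applying Cauchy--Schwarz in the $A_t^{-1}$ geometry bounds $\abs{(\htheta_t-\theta_*)^\top x_t}$ by $\|x_t\|_{A_t^{-1}}$ times the sum of three $A_t^{-1}$-norms. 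We then bound these three terms to produce the three pieces $1$, $4\sqrt{\sum u_\tau^2}$ and $14\log T$ of $\gamma$.

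\textbf{Regularization term.} Since $A_t\succeq I$, we have $\|\theta_*\|_{A_t^{-1}}\le\|\theta_*\|_2\le 1$.

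\textbf{Bias term.} By Lemma \ref{lem:bias_variance} with the hypothesis $\abs{G(b_\tau)-\widehat{G}_\tau(b_\tau)}\le u_\tau$, we get $\abs{\beta_\tau}\le 4u_\tau\sigma_\tau$. Put $y_\tau=\sigma_\tau^{-1}x_\tau$ and $c_\tau=\sigma_\tau^{-1}\beta_\tau$, so that $\abs{c_\tau}\le 4u_\tau$. Let $Y$ be the matrix with rows $y_\tau^\top$. Then
\[
\Bigl\|\sum_\tau y_\tau c_\tau\Bigr\|_{A_t^{-1}}^2 = c^\top Y(I+Y^\top Y)^{-1}Y^\top c\le \|c\|_2^2,
\]
because $Y(I+Y^\top Y)^{-1}Y^\top\preceq I$. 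This gives the contribution $4\sqrt{\sum_\tau u_\tau^2}$.

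\textbf{Noise term.} This is the main obstacle. The quantities $\sigma_\tau^{-1}\eta_\tau$ are conditionally independent, mean zero, and have variance at most $c_0=4$ by Lemma \ref{lem:bias_variance}. However, they are not bounded by a constant: the IPW values can be as large as $\sigma_\tau$, so $\sigma_\tau^{-1}\eta_\tau$ is only $O(1)$ in magnitude, given that $1/\widehat{G}\le\sigma$ and $1/(1-\widehat{G})\le\sigma$. I would check this first: $\abs{\widetilde{e}_\tau}\le \max(1/\widehat{G},1/(1-\widehat{G}))\le\sigma_\tau$, so $\abs{\sigma_\tau^{-1}\eta_\tau}\le 2$. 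The increments are therefore bounded and have bounded variance.

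One option is a self-normalized vector Bernstein/Freedman bound, which gives $O(\sqrt{d\log T})$. That would make the constant depend on $d$, whereas $\gamma$ has none. The absence of $d$ suggests the intended argument uses that $\Phi_t$ is chosen so that the $x_\tau$ and weights are fixed given the conditioning (the SupLinUCB-type independence mentioned in Section \ref{sec:cond_ind}). I would therefore condition on $(x_\tau,b_\tau,m_\tau)_{\tau\in\Phi_t}$ and treat $a_\tau\coloneqq x_t^\top A_t^{-1}y_\tau$ as deterministic. The noise contribution $\sum_\tau a_\tau\sigma_\tau^{-1}\eta_\tau$ is then a sum of independent, mean-zero, bounded terms. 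Scalar Bernstein (Lemma \ref{lem:bernstein}) gives, with probability $1-T^{-3}$, a bound of the form
\[
\sqrt{2\cdot 4\sum_\tau a_\tau^2\cdot 3\log T}+\max_\tau\abs{a_\tau}\cdot 2\log T,
\]
up to constants. Since $\sum_\tau a_\tau^2 = x_t^\top A_t^{-1}Y^\top Y A_t^{-1}x_t\le\|x_t\|_{A_t^{-1}}^2$ and $\max_\tau\abs{a_\tau}\le\|x_t\|_{A_t^{-1}}$, this is at most $c\log T\,\|x_t\|_{A_t^{-1}}$, and tracking constants should give $14\log T$.

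Summing the three contributions yields $\abs{\htheta_t^\top x_t-\theta_*^\top x_t}\le\gamma\|x_t\|_{A_t^{-1}}$.
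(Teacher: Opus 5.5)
Your proposal follows the paper's proof essentially step for step: the same split of $\widehat{\theta}_t^\top x_t-\theta_*^\top x_t$ into a ridge term, a bias term bounded via $D_t^\top A_t^{-1}D_t\preceq I$ (your $Y(I+Y^\top Y)^{-1}Y^\top\preceq I$) to get $4\sqrt{\sum_\tau u_\tau^2}\,\|x_t\|_{A_t^{-1}}$, and a noise term bounded by scalar Bernstein with frozen weights $a_\tau=x_t^\top A_t^{-1}\sigma_\tau^{-1}x_\tau$, using $\sum_\tau a_\tau^2\le\|x_t\|_{A_t^{-1}}^2$ and $\max_\tau|a_\tau|\le\|x_t\|_{A_t^{-1}}$; this is exactly how the paper avoids a $\sqrt{d}$ factor. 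One caveat applies equally to your argument and the paper's: $\eta_\tau$ is centered given $(x_\tau,b_\tau)$ only, so once you also condition on $m_\tau$ to freeze the weights, $\indic[b_\tau\ge m_\tau]$ becomes deterministic and the terms are no longer mean zero. The paper glosses over this same point when it asserts $\E[\widetilde{\varepsilon}_\tau]=0$ alongside independence given $(x_\tau,b_\tau,M_\tau)$.
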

We remark that the bound in Lemma \ref{lem:WLS} is in general \textit{unbounded}. When the variances $\sigma_\tau$ are large in the definition of $A_t$ in Algorithm \ref{alg:base_alg_te}, this matrix norm $\|x_t\|_{A_t^{-1}}$ can be arbitrarily close to $1$, i.e. being trivial. In other words, our estimation of the treatment effect $\theta_*^\top x_t$ can be arbitrarily bad or even vacuous.

\subsection{Neutralizing Bad Value Estimations}

We are now in the position to present a key decision step that \textit{neutralizes} this potentially unbounded estimation error, named ``the better of two upper confidence bounds (UCBs)'' by \citep{wen2025joint}. First, recall that maximizing the reward $\br_t$ in \eqref{eq:expected_payoff} is equivalent to maximizing any of the following two formulations:
\begin{align*}
\br_{t,0}(b) &\coloneqq G(b)(\theta_*^\top x_t - b) + \int_0^b G(m)\mathrm{d}m \\
&= \br_t(b) - \E[v_{t,0}]\\
\br_{t,1}(b) &\coloneqq -(1-G(b))\theta_*^\top x_t - G(b)b + \int_0^b G(m)\mathrm{d}m \\
&= \br_t(b) - \E[v_{t,1}].
\end{align*}
Crucially, they have different dependence on the value $\theta_*^\top x_t$, which motivate the definition of the plug-in estimators $\hr_{t,0}$ and $\hr_{t,1}$ in Algorithm \ref{alg:base_alg_te}. It is rather straightforward to show that for every $b\in[0,1]$,
\[
\abs{\br_{t,0}(b) - \hr_{t,0}(b)} \le w_{t,0}(b),\, \abs{\br_{t,1}(b) - \hr_{t,1}(b)} \le w_{t,1}(b);
\]
detailed computations are deferred to the proof of Lemma \ref{lem:better_width} in Appendix \ref{app:proof_better_width}. 

Now there are two pairs of (reward estimator, confidence width) we can optimize over to select the final bid $b_t$. Suppose we apply the classical UCB algorithm on the formulation $\br_{t,0}$, such that $b_t = \argmax_b \hr_{t,0}(b) + w_{t,0}(b)$. Standard analysis yields a bound on the instantaneous regret $\br_{t,0}(b_t^*) - \br_{t,0}(b_t) \lesssim w_{t,0}(b_t)$ where $b_t^*$ is the hindsight optimal bid. Now, \textit{if} $w_{t,0}(b_t) \le w_{t,1}(b_t)$, we have chosen the better formulation: the instantaneous regret scales with
\[
\min\bra{w_{t,0}(b_t),w_{t,1}(b_t)}\propto \widehat{G}_t(b_t)(1-\widehat{G}_t(b_t)) = \sigma_t(b_t)^{-1}.
\]
Recall that the value estimation error in Lemma~\ref{lem:WLS} is large when the variance $\sigma_t$ is large, which surprisingly implies a vanishing regret under the better UCB formulation. 

But what if $w_{t,0}(b_t) > w_{t,1}(b_t)$? After all, we did not know the selected $b_t$ from $\br_{t,0}$ leads to a small $w_{t,0}(b_t)$ at the time we chose to use the UCB of $\br_{t,0}$ over $\br_{t,1}$.

\subsection{UCB Formulation Selection}

It is not straightforward to select the better UCB formulation, since we do not know $b_t$ beforehand. The selection is handled by Algorithm \ref{alg:ucb_selection}. At a high level, it prunes the given bid subset $B_t\subseteq \Bcal$ at a coarse level such that the CDF values $\widehat{G}_t(b)$ of all remaining bids are either close to $0$ or $1$. When they are close to $0$, it is clear that $w_{t,0}(b) \le w_{t,1}(b)$ up to constants, and hence we favor $\hr_{t,0}$ over $\hr_{t,1}$; vice versa. This is formalized by the following result:

\begin{lemma}[Small width for selected UCB]\label{lem:better_width}
Suppose the events in Lemma \ref{lem:hob_est} and \ref{lem:WLS} hold. Also suppose $\abs{G(b^j) - \hG_t(b^j)}\le u_t(b^j)$ and $\frac{1}{\sqrt{T}}\abs*{\sum_{i\le j}G(b^i) - \sum_{i\le j}\hG_t(b^i)}\le u_t(b^j)$
for each $b^j\in\Bcal$ for the given width function $u_t$. For the index $q\in\{0,1\}$ returned by Algorithm \ref{alg:ucb_selection} with constant perturbation $c=\frac{\omega}{4}$, when $\sup_{b\in\Bcal,q\in\bra{0,1}}\frac{1-\lambda}{8}\cdot w_{t,q}(b)\le C(\lambda,\omega)$,\footnote{The constant $C(\lambda,\omega)$ only depends on $\lambda$ and $\omega$ and is defined explicitly in Appendix \ref{app:proof_better_width}.} it holds that 
\begin{equation*}
\abs{\br_{t,q}(b) - \hr_{t,q}(b)} \le \frac{1-\lambda}{8}\cdot w_{t,q}(b)\le \min\bra{w_{t,0}(b), w_{t,1}(b)}
\end{equation*}
for all $b\in [b_{\mathrm{L}}, b_{\mathrm{R}}]$ in Algorithm \ref{alg:base_alg_te}, and $\argmax_{b\in\Bcal}\br_t(b)\in  [b_{\mathrm{L}}, b_{\mathrm{R}}]$.
\end{lemma}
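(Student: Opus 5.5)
The plan is to prove the two claims of Lemma \ref{lem:better_width} separately: the width comparison on $[b_{\mathrm{L}},b_{\mathrm{R}}]$, and the containment of the optimal bid in that interval. The argument rests on one structural fact about SPAs: $\br_t(b)$ is maximized by truthful bidding at $b=\theta_*^\top x_t$. Indeed, for $b<b'$,
\[
\br_t(b')-\br_t(b)=\E\parq*{\indic[b<m_t\le b'](\theta_*^\top x_t-m_t)},
\]
which is nonnegative when $b'\le\theta_*^\top x_t$ and nonpositive when $b\ge\theta_*^\top x_t$. So $\br_t$ is unimodal with peak at the value, and the grid maximizer lies within $1/\sqrt{T}$ of $\theta_*^\top x_t$. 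Since the description of Algorithm \ref{alg:ucb_selection} is not given above, I will assume it centers the candidate interval at $\htheta_t^\top x_t$ with perturbation $c=\omega/4$. I will also assume it picks $q$ by thresholding $\hG_t$ at the interval endpoints against the margin $\epsilon=(1-\lambda)/8$.

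\textbf{Step 1 (plug-in error).} I first show $\abs{\br_{t,0}(b)-\hr_{t,0}(b)}\le\frac{1-\lambda}{8}w_{t,0}(b)$ and the analogue for $q=1$, for every grid bid. I write $\int_0^{b^j}G=\frac{1}{\sqrt T}\sum_{i\le j}G(b^i)+O(1/\sqrt T)$ and decompose
\[
G(\theta_*^\top x_t-b)-\hG_t(\htheta_t^\top x_t-b)=\hG_t(\theta_*-\htheta_t)^\top x_t+(G-\hG_t)(\theta_*^\top x_t-b).
\]
The first term is bounded by $\hG_t\gamma\|x_t\|_{A_t^{-1}}$ via Lemma \ref{lem:WLS}. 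The second is bounded by $2u_t$ via Lemma \ref{lem:hob_est}. The discretized integral contributes another $u_t$ plus $O(1/\sqrt T)$. For $\br_{t,1}$ the same computation applies, except the coefficient of $\theta^\top x_t$ is $-(1-\hG_t)$, which produces the factor $(1-\hG_t)\gamma\|x_t\|_{A_t^{-1}}$. This step is routine.

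\textbf{Step 2 (the selected formulation has the smaller width).} Note that $\frac{1-\lambda}{8}w_{t,0}(b)=\hG_t\gamma\|x_t\|+4u_t+\frac{2}{\sqrt T}$. Since $\frac{8}{1-\lambda}\ge 1$, this is at most $w_{t,1}(b)$ as soon as $\hG_t(b)\le\frac{8}{1-\lambda}(1-\hG_t(b))$. That holds whenever $1-\hG_t(b)\ge\epsilon$ roughly, and symmetrically $w_{t,1}$ requires $\hG_t(b)\ge\epsilon$. The comparison with $w_{t,q}$ itself is trivial. It therefore suffices that, on $[b_{\mathrm{L}},b_{\mathrm{R}}]$, $\hG_t$ stays at least $\epsilon$ away from $1$ when $q=0$, and at least $\epsilon$ away from $0$ when $q=1$. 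By monotonicity of $\hG_t$ (a sum of nonnegative $\hp^i$), it is enough to check this at $b_{\mathrm{R}}$ for $q=0$ and at $b_{\mathrm{L}}$ for $q=1$.

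\textbf{Step 3 (selection and containment).} Under the hypothesis $\sup\frac{1-\lambda}{8}w_{t,q}\le C(\lambda,\omega)$, with $C$ chosen small, I get two consequences. First, $|\htheta_t^\top x_t-\theta_*^\top x_t|\le\omega/4$. Second, $\sup_b|\hG_t(b)-G(b)|\le(1-\lambda)/16$. Getting the first is the delicate point, because the width bounds the value error only after multiplying by $\hG_t$ or $1-\hG_t$. The way around it is to use $\max\{\hG_t,1-\hG_t\}\ge 1/2$ at a single bid to extract an unweighted bound on $\gamma\|x_t\|_{A_t^{-1}}$.

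Then local boundedness enters. Take the window $I=[\htheta_t^\top x_t-\omega/2,\ \htheta_t^\top x_t+\omega/2]$, which contains the true optimum. Over $I$ we have $G(\text{right})-G(\text{left})\le\lambda$, so either $G(\text{left})\ge(1-\lambda)/2$ or $G(\text{right})\le1-(1-\lambda)/2$. After transferring to $\hG_t$ with error $(1-\lambda)/16$, this yields exactly the margin $\epsilon$ needed in Step 2 for the corresponding $q$. Containment of $\argmax_{b\in\Bcal}\br_t$ then follows from the unimodality above together with $|\htheta_t^\top x_t-\theta_*^\top x_t|\le\omega/4$.

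\textbf{Main obstacle.} I expect the main obstacle to be matching this case analysis to the precise thresholds in Algorithm \ref{alg:ucb_selection}, and handling atoms of $G$ at the interval endpoints. Atoms are why I would work only with closed right endpoints and the grid, never with continuity of $G$.
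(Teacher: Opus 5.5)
Your Steps 1 and 2 are the paper's Lemma~\ref{lem:valid_reward_width} and Lemma~\ref{lem:valid_smaller_width}. Extracting $\gamma\|x_t\|_{A_t^{-1}}\le 2C(\lambda,\omega)$ from $\max\{\hG_t(b),1-\hG_t(b)\}\ge\frac{1}{2}$ is correct, and the paper leaves that step implicit. The gap is in Step 3: you carry it out for a surrogate interval, a bid window of length $\omega$ around $\htheta_t^\top x_t$, not for the interval the lemma is about. Algorithm~\ref{alg:ucb_selection} works as follows:
\begin{itemize}
\item it computes $b_\pm$, the maximizers of the estimated discretized reward at values $\htheta_t^\top x_t\pm c$;
\item it sets $S$ to be the grid bids whose $\hG_t$-value lies in $[\hG_t(b_-),\hG_t(b_+)]$ enlarged by a slack;
\item it returns $[b_{\mathrm{L}},b_{\mathrm{R}}]=[\min S,\max S]$ and $q=\indic[\hG_t(b_{\mathrm{L}})\ge\epsilon]$.
\end{itemize}
Because $S$ is cut out by CDF levels, $[b_{\mathrm{L}},b_{\mathrm{R}}]$ can be much longer than $\omega$ in bid space. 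Any stretch where $\hG_t$ is flat beyond $b_+$ or before $b_-$ is absorbed into $S$. So $(\omega,\lambda)$-local boundedness cannot be applied at $b_{\mathrm{L}},b_{\mathrm{R}}$, and your dichotomy on $G(\text{left})$ and $G(\text{right})$ gives no margin at the actual endpoints.

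What is missing is a bound on the CDF spread of $S$.
\begin{itemize}
\item With the actual rule for $q$, the case $q=1$ is immediate by monotonicity.
\item For $q=0$, i.e.\ $\hG_t(b_{\mathrm{L}})<\epsilon$, it suffices to show $\hG_t(b_{\mathrm{R}})-\hG_t(b_{\mathrm{L}})\le 1-2\epsilon$.
\end{itemize}
The paper gets this in Lemma~\ref{lem:good_cdf_interval}(2) via Lemma~\ref{lem:approx_CDF_UCB_Lip}. The estimated reward is itself truthful on the grid: its consecutive increments at value $v$ are $(\hG_t(b^j)-\hG_t(b^{j-1}))(v-b^{j-1})$. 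Hence $b_\pm$ lie, up to tie-breaking, within $1/\sqrt{T}$ of $\htheta_t^\top x_t\pm\frac{\omega}{4}$. Local boundedness then applies to $[b_-,b_+]$ and gives $\hG_t(b_+)-\hG_t(b_-)\le\lambda+2\sup_{b\in\Bcal}|G(b)-\hG_t(b)|$. The slack is then added on both sides. This closes only if the slack is $O(\epsilon)$. The paper's proof takes it to be $\epsilon$; the algorithm box writes $c$, which would not suffice in general.

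Your unimodality argument for containment is a legitimate SPA-specific alternative to the paper's perturbation argument. It must be phrased as $\hG_t(b_-)\le\hG_t(b^{j_*})\le\hG_t(b_+)$, e.g.\ through $b_-\le b^{j_*}\le b_+$. That requires $|\htheta_t^\top x_t-\theta_*^\top x_t|+2/\sqrt{T}\le\frac{\omega}{4}$. The bound $\omega/4$ you state is just short of this, but the bound $2C(\lambda,\omega)=\frac{\omega(1-\lambda)}{32}$ you actually derive suffices. The paper instead shows $\hG_t(b_-)-\epsilon\le\hG_t(b^{j_*})\le\hG_t(b_+)+\epsilon$ from optimality of $b_\pm$ and $b^{j_*}$ plus the uniform reward-estimation error, without using truthfulness of the true reward. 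Your route would place $b^{j_*}$ in $S$ even with zero slack.
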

The claims in Lemma \ref{lem:better_width} show that, when the estimation errors are smaller than a constant, the index $q$ by Algorithm \ref{alg:ucb_selection} indeed finds the better UCB for us, and the optimal bid (in the discretized space) lies in the returned interval $[b_{\mathrm{L}}, b_{\mathrm{R}}]$. The latter allows us to focus our bidding process only within $[b_{\mathrm{L}}, b_{\mathrm{R}}]$, over which we \textit{know} the better UCB.

\begin{algorithm}[!t]\caption{UCB Selection}
\label{alg:ucb_selection}
\textbf{Input:} perturbation $c>0$, margin $\epsilon>0$, estimated CDF $\widehat{G}_t$, estimated value $\widehat{v}$.

Compute $b_{+} \gets \argmax_{b^j\in \Bcal} \widehat{G}_t(b^j)(\widehat{v} + c) - \frac{1}{\sqrt{T}}\sum_{i\le j}\widehat{G}_t(b^i)$

and $b_{-} \gets \argmax_{b^j\in \Bcal} \widehat{G}_t(b^j)(\widehat{v} - c)- \frac{1}{\sqrt{T}}\sum_{i\le j}\widehat{G}_t(b^i)$.

Set $S \gets \bra{b\in \Bcal: \widehat{G}_t(b_-) - c \le \widehat{G}_t(b) \le \widehat{G}_t(b_+) + c}$

Set $b_{\mathrm{L}}\gets \min S$ and $b_{\mathrm{R}}\gets \max S$.

Output interval $\parq{b_{\mathrm{L}}, b_{\mathrm{R}}}$ and index $q=\indic[\widehat{G}_t(b_{\mathrm{L}}) \ge \epsilon]$.
\end{algorithm}

\subsection{Maintaining Conditional Independence}\label{sec:cond_ind}

Recall that our estimation results, Lemmata \ref{lem:hob_est} and \ref{lem:WLS}, require conditional independence among the observations. This is not trivially guaranteed. For example, consider the HOB observations $\bra{\indic[b_\tau\ge m_\tau]m_\tau}_{\tau\in\Phi_t}$ at time $t$ for a subset $\Phi_t\subseteq [t-1]$. If we naively use all observations by taking $\Phi_t=[t-1]$, the bid $b_t$ we choose will depend on the information of $\bra{\indic[b_\tau\ge m_\tau]m_\tau}_{\tau\in[t-1]}$. Then when we condition on $(b_{t},x_t)$ in the next time $t+1$ with $\Phi_{t+1}=[t]$, the conditional independence breaks, as the new observation $\indic[b_t\ge m_t]m_t$ depends on the previous information of $\bra{\indic[b_\tau\ge m_\tau]m_\tau}_{\tau\in[t-1]}$ through $b_t$. A similar issue persists for the value observations in Lemma \ref{lem:WLS}.

To address this issue, we adopt a master routine that partitions the history $[t-1]$ into $L=O(\log T)$ levels and, crucially, does not rely on the observations $\bra{\indic[b_\tau\ge m_\tau]m_\tau}_{\tau\in\Phi_t^{(\ell)}}$ at the current $\ell$-th level to make the decision $b_t$. This solution is standard: it was first proposed in the bandit literature \citep{auer2002using} and has been widely applied in similar learning problems \citep{chu2011contextual,han2025optimal,wen2025optimal}.

\begin{algorithm}[ht!]\caption{A Master Routine}
\label{alg:master_alg_te}
\textbf{Input:} Time horizon $T$, HOB parameters $(\omega,\lambda)$.

\textbf{Initialize:} set $L=\lceil \log \sqrt{T} \rceil$, $J=\lceil \sqrt{T}\rceil$, and discretization $\Bcal$ as in \eqref{eq:discretize_bid}. Set $\Phi^{(\ell)}_1 \gets \varnothing$ for $\ell\in[L]$.

Set $T_0 \gets \lceil\sqrt{T}\log T\rceil$.


\For{time $t=1$ \KwTo $(L+1)\cdot T_0$}{
Bid $b_t \gets 1$ and observe $m_t$.
}

\For{time $t=(L+1)\cdot T_0+1$ \KwTo $T$}
{   
Observe the context vector $x_t\in\R^d$.

Initialize $B_1\leftarrow \Bcal$.

\For{level $\ell=1$ \KwTo $L$}
{
Denote $u^{(\ell)}_\tau\gets u_\tau^{(\ell)}(b^{j_\tau})$ as in \eqref{eq:cdf_width}.

Invoke Algorithm~\ref{alg:base_alg_te} with indices $\Phi^{(\ell)}_t$, space $B_\ell$, $\bra{u^{(\ell)}_{\tau}}_{\tau \in \Phi_t^{(\ell)}}$, CDF $\hG_t^{(\ell)}$ as in \eqref{eq:cdf_est}, and $u_t^{(\ell)}$ to compute rewards $\bra{\hr^{(\ell)}_{t,q}(b)}_{b\in B_\ell}$, widths $\bra{w^{(\ell)}_{t,q}(b)}_{b\in B_\ell}$, UCB index $q\in\bra{0,1}$, and interval $\Ical_t^{(\ell)}$.

\uIf{$\max_{b\in B_\ell,q\in\bra{0,1}}w^{(\ell)}_{t,q}(b) > C(\lambda,\omega)$}{
Uniformly randomly select bid $b_t=b^1, b^{J}$.

Update: $\Phi^{(\ell)}_{t+1} \leftarrow \Phi^{(\ell)}_t\cup \{t\}$ and $\Phi^{(\ell')}_{t+1} \leftarrow \Phi^{(\ell')}_t$ for $\ell'\neq \ell$. Break.
}
\uElseIf{$\exists b\in B_\ell$ such that $w^{(\ell)}_{t,q}(b) > 2^{-\ell}$}{
Choose this $b_t\leftarrow b$.\label{line:master_te_select_bt}

Update: $\Phi^{(\ell)}_{t+1} \leftarrow \Phi^{(\ell)}_t\cup \{t\}$ and $\Phi^{(\ell')}_{t+1} \leftarrow \Phi^{(\ell')}_t$ for $\ell'\neq \ell$. Break.
}\uElseIf{$w^{(\ell)}_{t,q}(b)\leq \frac{1}{\sqrt{T}}$ for all $b\in B_\ell$}{
Choose $b_t \leftarrow \argmax_{b\in B_\ell}\hr^{(\ell)}_{t,q}(b)$.
\label{line:master_te_exploit_bt}

Do not update: $\Phi^{(\ell')}_{t+1}\leftarrow \Phi^{(\ell')}_t$ for all $\ell'\in[N]$. Break.
}\Else{
We have $w^{(\ell)}_{t,q}(b)\leq 2^{-\ell}$ for all $b\in B_\ell$. \label{line:master_te_elim_step}

Eliminate bids: $B_{\ell+1}\leftarrow $
{
\[
\bra*{b\in B_\ell : \hr^{(\ell)}_{t,q}(b) \geq \max_{b'\in B_\ell} \hr^{(\ell)}_{t,q}\parr*{b'} - 2\cdot 2^{-\ell}}\cap \Ical_t^{(\ell)}.
\]
}
}
}

Observe outcomes $\indic[b_t\geq m_t]v_{t,1},\indic[b_t<m_t]v_{t,0}$, and payment $\indic[b_t\ge m_t]m_t$.
}
\end{algorithm}

To give an intuition, Algorithm \ref{alg:master_alg_te} loops through the levels $\ell=1,2,\dots,L$ at each time $t$. At each level $\ell$, it computes the confidence width of the reward estimators (in our case, the reward $\hr_{t,q}$ for the selected UCB $q\in\bra{0,1}$), which crucially does not depend on the realized HOB and value observations. Indeed, the widths $w_{t,0}$ and $w_{t,1}$ defined in Algorithm \ref{alg:base_alg_te} use only the number of observations and the contexts $(x_\tau)_{\tau\in\Phi_t}$ for the given subset $\Phi_t\subseteq [t-1]$. This bypasses the complicated dependence structure mentioned above. The time index $t$ is assigned to the $\ell$-th level if the confidence widths of the reward estimators lie in $[2^{-\ell},2^{-(\ell-1)})$ at a high level, by comparing the computed widths with this prespecified level of accuracy.

For the ease of notation, let $b_t = b^{j_t}\in\Bcal$ with index $j_t$ denote the selected discretized bid, where $\Bcal$ is defined in \eqref{eq:discretize_bid}. 
To distinguish the initialization phase $[(L+1)T_0]$ and the remaining times in Algorithm \ref{alg:master_alg_te}, we define the initial times for each level as
\[
\widehat{\Phi}_0^{(\ell)} = \bra{\ell T_0+1,\dots (\ell+1) T_0}
\]
and use $T_0$ held-out samples to derive initial probability estimators: for each $j\in [\lceil\sqrt{T}\rceil]$, set
\begin{equation}\label{eq:init_prob_est}
\hp_0^j \coloneqq \frac{1}{T_0}\sum_{t\in\widehat{\Phi}_0^{(0)}}\indic[b^j<m_t\le b^{j+1}].
\end{equation}
Then following Lemma \ref{lem:hob_est}, we define the CDF width in Algorithm \ref{alg:master_alg_te} as
{\small
\begin{equation}\label{eq:cdf_width}
u^{(\ell)}_t(b^j) \coloneqq 8\sqrt{\sum_{k\le j}\frac{2\log T}{n_{t,\ell}^k}\parr*{\hp_0^k+ \frac{12\log T}{\sqrt{T}}}} + \frac{8\log T}{n_{t,\ell}^j}.
\end{equation}
}
where the number of observations is defined as
{
\begin{equation}\label{eq:n_def}
n_{t,\ell}^j\coloneqq \sum_{\tau\in\Phi_t^{(\ell)}\cup\widehat{\Phi}_0^{(\ell)}}\indic[b_\tau\ge b^j]
\end{equation}
}

\noindent for $b^j$ at the $\ell$-th level, where $\Phi_t^{(\ell)}$ is the subset of time indices associated with level $\ell\in[L]$.

\begin{lemma}[Lemma 14 of \citep{auer2002using}]
\label{lem:cond_ind}
Under Assumption \ref{assum:oblivious}, for every level $\ell\in [L]$ and time $t\in [T]$ in Algorithm \ref{alg:master_alg_te}, $(\indic[b_\tau\ge m_\tau]m_\tau)_{\tau\in\Phi_t^{(\ell)}\cup\widehat{\Phi}_0^{(\ell)}}$ are conditionally independent given $(x_\tau,b_\tau)_{\tau\in \Phi_t^{(\ell)}}$, and  $(v_{\tau,1},v_{\tau,0})_{\tau\in \Phi_t^{(\ell)}}$ are conditionally independent given $(x_\tau,b_\tau,m_\tau)_{\tau\in \Phi_t^{(\ell)}}$.
\end{lemma}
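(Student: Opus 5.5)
The proof follows the standard argument of \citep{auer2002using}: it hinges on the observation that membership of a time $\tau$ in level $\ell$ is determined only by quantities that are measurable with respect to the contexts and the history of \emph{other} levels. First fix a level $\ell$ and inspect Algorithm \ref{alg:master_alg_te}. At any time $\tau$, the decision to stop at level $\ell$ (and add $\tau$ to $\Phi^{(\ell)}$) depends on three things: the widths $w^{(\ell')}_{\tau,q}$ for $\ell'\le\ell$; the selection index $q$ and interval $\Ical^{(\ell')}_\tau$; and the eliminations at levels $\ell'<\ell$. The widths are functions of $x_\tau$, the past contexts and bids indexed by $\Phi^{(\ell')}_\tau$, the counts $n^j_{\tau,\ell'}$ (functions of past bids), and $\hp_0$ (functions of the held-out initial samples). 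The eliminations at levels $\ell'<\ell$ use the estimators $\hr^{(\ell')}$, which involve observations only from levels $\ell'\neq\ell$. The key point to isolate is that the event $\{\tau\in\Phi^{(\ell)}\}$ and the chosen bid $b_\tau$ for $\tau\in\Phi^{(\ell)}$ never use the realized HOB or value observations indexed by $\Phi^{(\ell)}_\tau$ itself. The bid chosen at level $\ell$ is either a random endpoint, or a $b$ selected because its width is large; in both cases it is width-determined.

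I would then argue by induction over $t$, in the form used by \citep{auer2002using}. Suppose that, conditioned on $(x_\tau,b_\tau)_{\tau\in\Phi^{(\ell)}_t}$ together with the sigma-field $\mathcal{F}$ generated by all contexts, the initial phase of the other levels, and the observations of the other levels, the level-$\ell$ observations are independent with their natural marginals. Adding time $t$ to $\Phi^{(\ell)}$ is a decision measurable w.r.t. $\mathcal{F}$ and $(x_\tau,b_\tau)_{\tau\in\Phi^{(\ell)}_t}$. By Assumption \ref{assum:oblivious}, the new draw $(v_{t,1},v_{t,0},m_t)$ is independent of everything prior given $x_t$. Hence conditioning additionally on $(x_t,b_t)$ leaves the old level-$\ell$ observations independent, and makes the new one $\indic[b_t\ge m_t]m_t$ independent of them. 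The initial samples in $\widehat{\Phi}^{(\ell)}_0$ use the fixed bid $1$, so they are i.i.d. and independent of all later choices. They are only used by level $\ell$ through counts, apart from $\hp_0$, which uses the separate block $\widehat\Phi^{(0)}_0$. The same argument with $m_\tau$ added to the conditioning gives the claim for $(v_{\tau,1},v_{\tau,0})$.

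The main obstacle is checking the measurability claim carefully. I must verify that none of the following uses level-$\ell$ realized outcomes: the stopping test at level $\ell$, which compares against $C(\lambda,\omega)$, $2^{-\ell}$ and $1/\sqrt{T}$; the selection index $q$; and the interval from Algorithm \ref{alg:ucb_selection}. The difficulty is that Algorithm \ref{alg:ucb_selection} uses $\hG^{(\ell)}_t$ and $\htheta^{(\ell)}_t$, which \emph{do} depend on level-$\ell$ data. I would handle this by noting that $q$ and $\Ical^{(\ell)}_t$ computed at level $\ell$ only affect $B_{\ell+1}$, and hence only bids assigned to levels $>\ell$ or to the exploit step, which is not recorded in any $\Phi$. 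Consequently, for level $\ell$ itself, only the widths matter. I would also need to argue that the $w$ comparisons use $\hG^{(\ell)}$ only through quantities whose dependence can be absorbed; if that fails, the fallback is to condition on the sub-level choices and follow Lemma 14 of \citep{auer2002using} verbatim.
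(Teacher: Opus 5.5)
You take the same route as the paper, which offers only the citation of Lemma 14 of \citep{auer2002using} and the remark that the widths ``use only the number of observations and the contexts.'' Your induction needs the same premise: that adding $t$ to $\Phi^{(\ell)}$ is measurable w.r.t.\ $\mathcal{F}$ and $(x_\tau,b_\tau)_{\tau\in\Phi^{(\ell)}_t}$. That premise is false for Algorithm~\ref{alg:master_alg_te} as written, and your patch puts the dependence in the wrong place.

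First, the widths themselves depend on the data. $w^{(\ell)}_{t,0}(b^j)$ contains $\hG^{(\ell)}_t(b^j)\gamma\|x_t\|_{A_t^{-1}}$, and $w^{(\ell)}_{t,1}(b^j)$ contains $(1-\hG^{(\ell)}_t(b^j))\gamma\|x_t\|_{A_t^{-1}}$. Here $\hG^{(\ell)}_t$ is an empirical frequency of the level-$\ell$ payments $\indic[b_\tau\ge m_\tau]m_\tau$. In addition, $A^{(\ell)}_t$ uses the weights $\sigma_\tau^{-2}=\bigl(\hG_\tau(b_\tau)(1-\hG_\tau(b_\tau))\bigr)^2$, which are again functions of earlier level-$\ell$ payments. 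So even the exploration test $\max_{b,q}w^{(\ell)}_{t,q}(b)>C(\lambda,\omega)$ depends on level-$\ell$ HOB data. Since $n^j_{t,\ell}$ counts $\widehat\Phi^{(\ell)}_0$, the estimator presumably pools that block too, so your claim that those samples enter level $\ell$ only through counts is also doubtful.

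Second, $q$ does more than shape $B_{\ell+1}$. The tests ``$\exists b\in B_\ell$ with $w^{(\ell)}_{t,q}(b)>2^{-\ell}$'' and ``$w^{(\ell)}_{t,q}(b)\le 1/\sqrt{T}$ for all $b$,'' and the bid chosen in the second branch, all use the selected $q=\indic[\hG^{(\ell)}_t(b_{\mathrm{L}})\ge\epsilon]$. Algorithm~\ref{alg:ucb_selection} computes $b_{\mathrm{L}}$ from $\hat v=\htheta_t^{(\ell)\top}x_t$, which is a function of the level-$\ell$ value observations $\widetilde e_\tau$. Hence both membership in $\Phi^{(\ell)}$ and the recorded bid depend on earlier level-$\ell$ HOB and value data.

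This breaks the argument; it is not a technicality. Conditioning on $(x_\tau,b_\tau)_{\tau\in\Phi^{(\ell)}_t}$ then means conditioning on threshold events of statistics such as $\sum_{\tau}\indic[b_\tau\ge b^k]\indic[b^k<m_\tau\le b^{k+1}]$. In general this couples the earlier observations and shifts their conditional means away from $p^k$, just as conditioning two i.i.d.\ Bernoulli variables on their sum exceeding a threshold makes them dependent. Likewise, conditioning on $(x_\tau,b_\tau,m_\tau)$ does not remove the dependence of the selection on the $v$'s through $q$. These are exactly the properties Lemmas~\ref{lem:hob_est} and~\ref{lem:WLS} rely on.

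Your fallback of following Lemma 14 of \citep{auer2002using} verbatim cannot close this gap. That lemma works precisely because in SupLinRel the stage test depends only on feature vectors, never on observed rewards. To obtain the statement, you would have to modify the algorithm so that every level-$\ell$ decision is measurable with respect to contexts, counts, held-out samples and other levels' data. That covers the exploration test, the index $q$, the width test and the chosen bid. One way is to use a held-out proxy in place of $\hG^{(\ell)}_t$ inside $w$ and $\sigma_\tau$, and to compute $q$ without level-$\ell$ observations. The alternative is to replace conditional independence by a martingale concentration argument, at the cost of re-deriving the downstream bounds.
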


\subsection{Random Exploration on the Fly}
The final piece of Algorithm \ref{alg:master_alg_te} is the random bid selection in Line 13. This exploration is to satisfy the condition in Lemma \ref{lem:better_width}. In particular, if the confidence width is larger than $C(\lambda,\omega)$, we cannot safely apply the UCB selection as in Lemma \ref{lem:better_width}. In this case, we set $b_t$ to be the lowest or highest bid, with equal probability, to uniformly explore both sides of the treatment value $\Delta v_t$. 

Let $\Phi_{\text{exp}}\subseteq [T]$ denote the times when $b_t$ is selected by the uniform exploration in Line 13 of Algorithm \ref{alg:master_alg_te}. To align with this exploration, we define the modified IPW:
{\small
\begin{equation}\label{eq:IPW_with_explore}
\widetilde{e}_t(b) \coloneqq 
\begin{cases}
2\indic[b\ge m_t]v_{t,1} - 2\indic[b< m_t]v_{t,0}, &\text{if $t\in\Phi_{\text{exp}}$;}\\
\frac{\indic[b\ge m_t]v_{t,1}}{\widehat{G}_t(b)} - \frac{\indic[b< m_t]v_{t,0}}{1-\widehat{G}_t(b)}, &\text{otherwise.}
\end{cases}
\end{equation}
}
And the corresponding variance proxy:
\begin{equation}\label{eq:sig_t_explore}
\sigma_t(b) \coloneqq 
\begin{cases}
4, &\text{if $t\in\Phi_{\text{exp}}$;}\\
\frac{1}{\widehat{G}_t(b)(1-\widehat{G}_t(b))}, &\text{otherwise.}
\end{cases}
\end{equation}
When $t\in\Phi_{\text{exp}}$, we take advantage of the uniform exploration and use a constant-variance IPW; otherwise we stick to the previous estimator in \eqref{eq:IPW} and \eqref{eq:sigma_t}. Fortunately, since we only require a constant-level precision $C(\lambda,\omega)$, the total exploration times is small:

\begin{lemma}\label{lem:bounded_explore}
Suppose the events in Lemma \ref{lem:hob_est} and \ref{lem:WLS} hold. Then with probability $1-T^{-3}$,
\[
|\Phi_{\textnormal{exp}}| = O(d\log^5 T).
\]
\end{lemma}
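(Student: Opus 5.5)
Fix a level $\ell\in[L]$ and count the exploration rounds assigned to that level, $\Phi_{\exp}^{(\ell)} = \Phi_{\exp}\cap\Phi^{(\ell)}_T$. Summing over the $L=O(\log T)$ levels then gives the total count.

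\textbf{When does a level explore?} At a round $t\in\Phi_{\exp}^{(\ell)}$ the algorithm has found some $b\in B_\ell$ and $q\in\{0,1\}$ with $w^{(\ell)}_{t,q}(b) > C(\lambda,\omega)$. By the definition of the widths in Algorithm~\ref{alg:base_alg_te}, and since $\widehat{G}_t,1-\widehat{G}_t\le 1$, this forces one of three things:
\begin{itemize}
\item[(i)] $\gamma\|x_t\|_{A_t^{-1}} \gtrsim C(\lambda,\omega)(1-\lambda)$;
\item[(ii)] $u^{(\ell)}_t(b)\gtrsim C(\lambda,\omega)(1-\lambda)$;
\item[(iii)] $2/\sqrt{T}$ is large, which is impossible for $T$ beyond a constant.
\end{itemize}

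\textbf{Case (ii) never occurs.} The initialization phase gives every level $n^j_{t,\ell}\ge T_0=\lceil\sqrt T\log T\rceil$ for all $j$, because $b_\tau=1$ on $\widehat\Phi_0^{(\ell)}$. Also $\sum_k \hp_0^k\le 1$, and $\sum_{k\le J} 12\log T/\sqrt T = O(\log T)$. Hence
\[
u^{(\ell)}_t(b) \le 8\sqrt{\tfrac{2\log T}{T_0}\,O(\log T)} + \tfrac{8\log T}{T_0} = O\bigl(T^{-1/4}\sqrt{\log T}\bigr),
\]
which is below the constant threshold once $T$ is large. For smaller $T$ the bound is trivial after adjusting constants. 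So every exploration round at level $\ell$ satisfies (i).

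\textbf{Bounding $\gamma$.} Using $u_\tau\le O(T^{-1/4}\sqrt{\log T})$,
\[
\gamma = 1+14\log T + 4\sqrt{\textstyle\sum_\tau u_\tau^2} \le 1+14\log T + 4\sqrt{T\cdot T^{-1/2}\log T}.
\]
This is too crude, since it grows like $T^{1/4}$. I would instead use the sharper per-sample bound: summing $u_\tau^2\lesssim \sum_{k}\frac{\log T}{n^k}(\hp_0^k+\dots)+\dots$ over $\tau$ is still $O(\log^2 T)$ only if $n$ grows with the rounds. Since $n^j_{\tau,\ell}\ge T_0$, each $u_\tau^2 = O(\log^2 T/\sqrt T)$ and the sum over at most $T$ rounds is $O(\sqrt T\log^2T)$. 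This is the main obstacle: controlling $\gamma$ polylogarithmically.

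My first attempt would observe that only rounds in $\Phi^{(\ell)}$ contribute. At those rounds the bid $b_\tau$ was selected with width $\le 2^{-(\ell-1)}$ at earlier levels, so $u_\tau\lesssim 2^{-\ell}$. Failing that, I would restrict the counting to exploration rounds: the exploration index set grows only while (i) holds. For those rounds $\sigma_\tau=4$, so I would bound $\|x_t\|_{A_t^{-1}}$ using only the exploration part of $A_t$, namely
\[
A_t \succeq I+\tfrac1{16}\sum_{\tau\in\Phi_{\exp}^{(\ell)},\tau<t}x_\tau x_\tau^\top,
\]
and treat $\gamma$ as at most $\mathrm{polylog}(T)$ on the good event. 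That $\gamma$ bound is the point I would need to verify carefully, for instance via the paper's event structure in Lemma~\ref{lem:WLS}.

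\textbf{Elliptical potential.} Granting $\gamma\le \mathrm{polylog}\,T$, every $t\in\Phi_{\exp}^{(\ell)}$ has $\|x_t\|_{V_t^{-1}}\ge c/\mathrm{polylog}\,T=:\rho$, where $V_t = I+\frac1{16}\sum_{\tau\in\Phi_{\exp}^{(\ell)},\tau<t}x_\tau x_\tau^\top$. The elliptical potential lemma gives
\[
\sum_{t\in\Phi_{\exp}^{(\ell)}}\min\{1,\|x_t\|^2_{V_t^{-1}}\}\le O(d\log T).
\]
Therefore
\[
|\Phi_{\exp}^{(\ell)}|\le O(d\log T)/\rho^2 = O(d\,\mathrm{polylog}\,T).
\]
With $\gamma=O(\log T)$ this gives $O(d\log^3T)$ per level. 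Summing over $L=O(\log T)$ levels and allowing the extra logarithmic slack yields $|\Phi_{\exp}|=O(d\log^5T)$. The high-probability qualifier comes from the events of Lemmas~\ref{lem:hob_est} and~\ref{lem:WLS} holding, combined with a union bound.
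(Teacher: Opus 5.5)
Your skeleton matches the paper's: split on the three summands of the width, rule out the $u$-term, run an elliptical-potential count on the exploration-only Gram matrix (where $\sigma_\tau=4$), and sum over the $L$ levels. Your handling of the $u$-term via the initialization is valid and simpler than the paper's Chernoff argument on the $b^J$-explorations: $n^j_{t,\ell}\ge T_0$ and $\sum_k\hp_0^k\le 1$ give $u^{(\ell)}_t=O(T^{-1/4}\sqrt{\log T})$.

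There is a genuine gap. The whole count hinges on $\gamma=1+14\log T+4\sqrt{\sum_{\tau\in\Phi^{(\ell)}_t}u_\tau^2}$ being polylogarithmic, and you only grant this. None of your proposed routes delivers it:
\begin{itemize}
\item The bound $u_\tau\lesssim 2^{-\ell}$ concerns the wrong level's widths. Even if it held, summing it over up to $T$ rounds gives $T4^{-\ell}$, not a polylog.
\item Restricting to exploration rounds lower-bounds $A_t$, but it does not change $\gamma$, which the algorithm computes from all of $\Phi^{(\ell)}_t$.
\item Lemma~\ref{lem:WLS} is stated in terms of $\gamma$ and gives no control of it.
\end{itemize}
Your value $\gamma=O(\log T)$ is likewise unsupported.

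The missing idea is the harmonic-sum argument behind \eqref{eq:gamma_T_bound}, i.e.\ Lemma~\ref{lem:weighted_sum_han25}. You wrote that the sum is polylog only if $n$ grows with the rounds, then fell back on $n\ge T_0$. But $n$ does grow, and in exactly the right way. The width $u_\tau=u^{(\ell)}_\tau(b^{j_\tau})$ is evaluated at the bid actually placed. Adding $\tau$ to $\Phi^{(\ell)}$ increments $n^k_{\cdot,\ell}$ for every $k\le j_\tau$, and these are precisely the counters appearing in $u_\tau$.

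So for each fixed $k$, the values $n^k_{\tau,\ell}$ over rounds with $j_\tau\ge k$ increase by one each time, starting from at least $T_0$. Hence $\sum_{\tau:\,j_\tau\ge k}1/n^k_{\tau,\ell}\le\sum_{n=T_0}^{T_0+T}1/n\le 1+\log T$. Set $v_k=\hp_0^k+12\log T/\sqrt{T}$, whose total mass is $O(\log T)$. Then
\[
\sum_{\tau\in\Phi^{(\ell)}_{T+1}}u_\tau^2\lesssim \log T\sum_{k}v_k\,(1+\log T)+\frac{T\log^2 T}{T_0^2}=O(\log^3 T),
\]
so $\gamma=O(\log^{3/2}T)$. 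Plugging this into your elliptical-potential count gives $O(d\log T\cdot\gamma^2)=O(d\log^4 T)$ explorations per level. Summing over $L=O(\log T)$ levels gives $O(d\log^5 T)$, which is the stated bound.
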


\subsection{Regret Analysis}

Finally, we piece everything together to arrive at the regret guarantee in Theorem \ref{thm:main_upper}.  Without loss of generality, we will assume that the high probability events in Lemma \ref{lem:hob_est}, \ref{lem:WLS}, and \ref{lem:bounded_explore} hold almost surely. The next result shows that, if the bid $b_t$ is selected at the $\ell$-th level, then the instantaneous suboptimality is $O(2^{-\ell})$. Let $\widehat{b}_t^* = \argmax_{b\in\Bcal}\br_t(b)$ denote the hindsight optimal bid in the discretization.

\begin{lemma}\label{lem:suboptimality_level}
At time $t\in[T]$ and level $\ell\in[L]$, we have $\widehat{b}_t^*\in B_\ell$ and every bid $b\in B_\ell$ satisfies
\[
\br_t(\widehat{b}_t^*) - \br_t(b) \le 8\cdot 2^{-\ell}.
\]
\end{lemma}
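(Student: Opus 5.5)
We argue by induction on the level $\ell$, following the standard SupLinUCB-type elimination argument of \citep{auer2002using,han2025optimal}. Fix a time $t$ and note that levels $\ell\ge 2$ are only reached through the elimination step (Line \ref{line:master_te_elim_step}) of level $\ell-1$. For the base case $\ell=1$, $B_1=\Bcal$ contains $\widehat{b}_t^*$ trivially. The gap bound $8\cdot 2^{-1}=4$ holds because $\br_t$ takes values in an interval of length at most $2$, since $v_{t,1},v_{t,0},m_t\in[0,1]$.

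For the inductive step, suppose level $\ell$ is reached and the algorithm moves to $\ell+1$ through the elimination step. Two facts hold there. First, every width in $B_\ell$ is at most $C(\lambda,\omega)$, since otherwise the exploration branch would have fired. Second, $w^{(\ell)}_{t,q}(b)\le 2^{-\ell}$ for all $b\in B_\ell$. The first fact verifies the hypothesis of Lemma \ref{lem:better_width}; to be careful, we check that the constant-threshold condition there, stated as a sup over $\Bcal$, is implied by the check over $B_\ell$, or apply the lemma restricted to $B_\ell$. Lemma \ref{lem:better_width} then gives
\[
\abs{\br_{t,q}(b)-\hr^{(\ell)}_{t,q}(b)}\le w^{(\ell)}_{t,q}(b)\le 2^{-\ell}
\]
for all $b\in B_\ell\cap\Ical_t^{(\ell)}$. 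It also gives $\argmax_{b\in\Bcal}\br_t(b)=\widehat{b}_t^*\in\Ical_t^{(\ell)}$. Since $\br_{t,q}$ differs from $\br_t$ only by a constant independent of $b$, maximizers and gaps agree. Using $\widehat{b}_t^*\in B_\ell$ from the inductive hypothesis, for any $b'\in B_\ell$ we get
\[
\hr_{t,q}(\widehat{b}_t^*)\ge \br_{t,q}(\widehat{b}_t^*)-2^{-\ell}\ge \br_{t,q}(b')-2^{-\ell}\ge \hr_{t,q}(b')-2\cdot 2^{-\ell}.
\]
So $\widehat{b}_t^*$ survives the threshold, and being in $\Ical_t^{(\ell)}$, it lies in $B_{\ell+1}$. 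For any $b\in B_{\ell+1}$,
\[
\br_t(\widehat{b}_t^*)-\br_t(b)\le \hr_{t,q}(\widehat{b}_t^*)+2^{-\ell}-\hr_{t,q}(b)+2^{-\ell}\le \max_{b'}\hr_{t,q}(b')-\hr_{t,q}(b)+2\cdot 2^{-\ell}\le 4\cdot 2^{-\ell}=8\cdot 2^{-(\ell+1)}.
\]
This closes the induction.

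The main obstacle is the uniform-confidence step, which has two parts. First, the estimation events of Lemmas \ref{lem:hob_est} and \ref{lem:WLS} must apply at level $\ell$ with the sets $\Phi_t^{(\ell)}$. This holds by Lemma \ref{lem:cond_ind}, which gives conditional independence, and by a union bound over $t$ and $\ell$. Second, Lemma \ref{lem:better_width} gives its accuracy guarantee only on $[b_{\mathrm L},b_{\mathrm R}]$. The gap comparison must therefore be carried out only for bids in $B_\ell\cap\Ical_t^{(\ell)}$, which includes both $\widehat{b}_t^*$ and all of $B_{\ell+1}$. The maximum $\max_{b'\in B_\ell}\hr_{t,q}(b')$ may instead be attained outside $\Ical_t^{(\ell)}$, where the estimate could be inaccurate. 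If that causes trouble, the fix is to take the elimination maximum effectively over $B_\ell\cap\Ical_t^{(\ell)}$. Alternatively, one can use $w_{t,q}\le 2^{-\ell}$ on all of $B_\ell$ together with the bound $\frac{1-\lambda}{8}w_{t,q}$ from Lemma \ref{lem:better_width}, in the form where it is established for the selected index $q$ on that interval. Once accuracy is secured on the relevant set, the argument is the same routine elimination chain as above.
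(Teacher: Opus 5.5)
Your skeleton (induction over levels, survival of $\hb_t^*$ via $\hr(\hb_t^*)\ge\br(\hb_t^*)-2^{-\ell}\ge\br(b')-2^{-\ell}\ge\hr(b')-2\cdot 2^{-\ell}$, then the gap bound from the elimination threshold) is the same as the paper's. However, the step you flag as the main obstacle is left open, and neither of your fixes closes it. You take the accuracy $|\br_{t,q}(b)-\hr^{(\ell)}_{t,q}(b)|\le w^{(\ell)}_{t,q}(b)$ from Lemma \ref{lem:better_width}, which asserts it only on $[b_{\mathrm{L}},b_{\mathrm{R}}]$. The survival chain, though, must hold for every $b'\in B_\ell$. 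In particular it must hold for the maximizer of $\hr^{(\ell)}_{t,q}$ over $B_\ell$, which is what the elimination rule compares against. If accuracy failed at that maximizer, $\hb_t^*$ could be eliminated. Restricting the elimination maximum to $B_\ell\cap\Ical_t^{(\ell)}$ changes Algorithm \ref{alg:master_alg_te}, so it does not prove the lemma as stated. Your ``alternative'' is still phrased as a bound on that interval, so it does not help either.

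The missing observation is that validity of the widths is global. Lemma \ref{lem:valid_reward_width}, also asserted in the main text right after $\br_{t,0},\br_{t,1}$ are introduced, gives $|\br_{t,q}(b)-\hr_{t,q}(b)|\le\frac{1-\lambda}{8}w_{t,q}(b)$ for \emph{every} $b\in\Bcal$ and \emph{both} $q\in\{0,1\}$. Its derivation splits the error into the CDF error, $\hG_t(b)$ or $1-\hG_t(b)$ times the value error from Lemma \ref{lem:WLS}, the discretization error, and the integral error; none of these uses where $b$ sits relative to the interval. The interval matters only for the second inequality of Lemma \ref{lem:better_width} (that $q$ attains the smaller width), which this lemma does not need. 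Combine global validity with the else-branch condition $w^{(\ell)}_{t,q}(b)\le 2^{-\ell}$, which holds on all of $B_\ell$. Then your survival chain goes through for every $b'\in B_\ell$. The interval enters only through $\hb_t^*\in\Ical_t^{(\ell)}$ (Lemma \ref{lem:good_cdf_interval}(1)), so that $\hb_t^*$ survives the intersection; this is exactly how the paper argues. Your second chain, for $b\in B_{\ell+1}\subseteq\Ical_t^{(\ell)}$, was already fine. The $B_\ell$-versus-$\Bcal$ threshold subtlety you raise is handled in the paper only by appeal to the exploration check in Algorithm \ref{alg:master_alg_te}, so that is not where your argument falls short.
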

By Assumption \ref{assum:hob}, it is straightforward to show that
\begin{equation}\label{eq:discretize_err}
\br_t(b_t^*) - \br_t(\widehat{b}_t^*) \le \br_t(\Delta v_t) - \br_t\parr*{\Delta v_t + \frac{1}{\sqrt{T}}} \le \frac{1}{\sqrt{T}}.
\end{equation}
Let $\Phi^{(\ell)}_{T+1}$ denote the time indices associated with level $\ell$ and $\Phi^{(L+1)}_{T+1} = [T]\backslash[(L+1)T_0]\backslash\cup_{\ell=1}^L\Phi^{(\ell)}_{T+1}$ the time indices when the bid $b_t$ is selected in Line 19 of Algorithm \ref{alg:master_alg_te} (i.e. when the confidence widths are small). Let $q_\ell\in\bra{0,1}$ denote the UCB index returned by Algorithm \ref{alg:base_alg_te}. 

Thanks to the exploration times in Line 13, the widths in Algorithm \ref{alg:base_alg_te} satisfy $\max_{b^j\in\Bcal}\frac{1-\lambda}{8}\cdot w_t^{(\ell)}(b^j) \le C(\lambda,\omega)$ for all remaining time $t$ and $b^j$, which satisfies the condition of Lemma \ref{lem:better_width}. For each $t\in \Phi_{T+1}^{(L+1)}$ selected at level $\ell$,
{\small
\begin{align*}
\br_t(b_t^*) - \br_t(b_t) &\le \hr_{t,q_\ell}^{(\ell)}(\widehat{b}_t^*) - \argmax_{b\in B_\ell}\hr_{t,q_\ell}^{(\ell)}+ \frac{2+1}{\sqrt{T}}\le \frac{3}{\sqrt{T}}
\end{align*}
}
by Lemma \ref{lem:better_width}. Then we can bound the regret as follows:
{\small
\begin{align*}
\reg(\pi) = \E\parq*{\sum_{t=1}^T \br_t(b_t^*) - \br_t(b_t)}
&= (L+1)T_0 + \E\parq*{(3)\sqrt{T} + \sum_{\ell=1}^L\sum_{t\in\Phi^{(\ell)}_{T+1}} \br_t(b_t^*) - \br_t(b_t)}\\
&\stepa{\le} O(\sqrt{T}) + \E\parq*{|\Phi_{\textnormal{exp}}| + 8\sum_{\ell=1}^L\abs{\Phi^{(\ell)}_{T+1}}2^{-\ell}}\\
&\stepb{\le} O(\sqrt{T}+ d\log^5T) + \E\parq*{8\sum_{\ell=1}^L\sum_{t\in\Phi^{(\ell)}_{T+1}}w_{t,q_\ell}^{(\ell)}(b_t)}
\end{align*}
}

\noindent where (a) is by Lemma \ref{lem:suboptimality_level} and \eqref{eq:discretize_err}, and (b) because $w_{t,q_\ell}^{(\ell)}(b_t)> 2^{-\ell}$ in Line 16 when $t\in \Phi_{T+1}^{(\ell)}\backslash\Phi_{\textnormal{exp}}$, and by Lemma \ref{lem:bounded_explore}. The proof is completed by the following potential-based lemma and that $w_{t,q_\ell}^{(\ell)}(b_t) = \min\bra{w_{t,0}^{(\ell)}(b_t),w_{t,1}^{(\ell)}(b_t)}$ by Lemma \ref{lem:better_width}.
\begin{lemma}\label{lem:ellip_potential}
For each level $\ell\in[L]$, it holds 
\[
\sum_{t\in\Phi^{(\ell)}_{T+1}}\min\bra{w_{t,0}^{(\ell)}(b_t),w_{t,1}^{(\ell)}(b_t)} = O(\sqrt{dT}\log^2 T).
\]
\end{lemma}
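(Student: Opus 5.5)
Fix a level $\ell$ and split each width into its three terms. Since $\min\{w_{t,0},w_{t,1}\}\le \frac{8}{1-\lambda}\bigl(\min\{\hG_t(b_t),1-\hG_t(b_t)\}\gamma\|x_t\|_{A_t^{-1}} + 4u^{(\ell)}_t(b_t) + \frac{2}{\sqrt T}\bigr)$, it suffices to bound three sums over $t\in\Phi^{(\ell)}_{T+1}$:
\begin{itemize}
\item the value-estimation sum $\sum_t \min\{\hG_t,1-\hG_t\}\gamma\|x_t\|_{A_t^{-1}}$;
\item the CDF sum $\sum_t u^{(\ell)}_t(b_t)$;
\item the trivial sum $\sum_t 2/\sqrt T\le 2\sqrt T$.
\end{itemize}
Constants depending on $\lambda$ are absorbed.

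\textbf{Value term.} The key observation is that $\min\{\hG_t,1-\hG_t\}\le 2\hG_t(1-\hG_t)=2\sigma_t(b_t)^{-1}$ for non-exploration rounds; for exploration rounds $\sigma_t=4$, so the factor is at most $1$ and is again at most $O(\sigma_t^{-1})$. Hence the summand is at most $O(\gamma)\,\sigma_t^{-1}\|x_t\|_{A_t^{-1}} = O(\gamma)\|\sigma_t^{-1}x_t\|_{A_t^{-1}}$. Since $t$ is added to $\Phi^{(\ell)}$ and $A_{t+1}=A_t+\sigma_t^{-2}x_tx_t^\top$, the standard elliptical potential lemma gives $\sum_t \min\{1,\|\sigma_t^{-1}x_t\|^2_{A_t^{-1}}\}\le 2d\log(1+T/d)$. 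Each term is already at most $1$ since $\sigma_t\ge 4$ and $\|x_t\|\le1$. Cauchy–Schwarz then gives $O(\sqrt{dT\log T})$. It remains to bound $\gamma=1+14\log T+4\sqrt{\sum_{\tau\in\Phi_t^{(\ell)}}(u_\tau^{(\ell)})^2}$ by $O(\log T)$ uniformly, so I need $\sum_\tau (u^{(\ell)}_\tau)^2=O(\log^2T)$.

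\textbf{CDF term.} This is the main obstacle, and I would follow the interval-splitting argument of \citep{han2025optimal}. Write $u_t(b^j)\lesssim \sqrt{\log T\sum_{k\le j}\tilde p^k/n^k_{t,\ell}}+\log T/n^j_{t,\ell}$ with $\tilde p^k=\hp_0^k+12\log T/\sqrt T$. On the good event of the initialization, $\tilde p^k=O(p^k+\log T/\sqrt T)$ by Bernstein, so $\sum_k\tilde p^k=O(\log T)$.

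Each round at level $\ell$ with bid $b^{j_t}$ increments $n^k$ for every $k\le j_t$. For each $k$, the rounds with $j_t\ge k$ see $n^k$ run through $T_0+1,T_0+2,\dots$. So by Cauchy–Schwarz,
\[
\sum_t\sqrt{\sum_{k\le j_t}\tilde p^k/n^k_t}\le \sqrt{T\sum_k\tilde p^k\sum_{t:j_t\ge k}1/n^k_t}\le\sqrt{T\cdot O(\log T)\cdot\log T}.
\]
This gives $O(\sqrt T\log^{3/2}T)$ with the extra $\sqrt{\log T}$ factor, and the $\log T/n^j$ part contributes $O(\sqrt T\log^2T)$ via the $J=\sqrt T$ harmonic sums.

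For $\gamma$, the same computation on squares gives
\[
\sum_\tau u_\tau^2\lesssim\log T\sum_k\tilde p^k\sum 1/n^k+\log^2T\sum_k\sum 1/(n^k)^2.
\]
Here $\sum 1/n\le\log T$. The second sum is controlled because $n^k\ge T_0\ge\sqrt T\log T$ from the held-out samples (bid $1$ covers every $k$), so $\sum_k\sum 1/(n^k)^2\le J/T_0=O(1/\log T)$. This yields $\sum u_\tau^2=O(\log^3T)$, i.e.\ $\gamma=O(\log^{1.5}T)$.

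Combining, the value term is $O(\sqrt{dT}\log^2T)$, the CDF term is $O(\sqrt T\log^2T)$, and the constant term is $O(\sqrt T)$, which proves the lemma. The delicate points I expect are (i) verifying $\tilde p^k\lesssim p^k+\log T/\sqrt T$ from \eqref{eq:init_prob_est}, and (ii) the careful bookkeeping that a single round updates many counters $n^k$ simultaneously. I would handle (ii) by exchanging the order of summation over $k$ and $t$, as sketched above.
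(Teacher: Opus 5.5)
Your proposal is correct and follows essentially the same route as the paper: the same three-term split, the bound $\min\{\hG_t,1-\hG_t\}\le 2\sigma_t^{-1}$ feeding the elliptical potential lemma, and $\gamma_T=O(\log^{3/2}T)$ obtained from the initialization guarantee $n^k_{t,\ell}\ge T_0$ together with the exchanged-sum harmonic bound. The differences are cosmetic: you reprove \cref{lem:weighted_sum_han25} inline, bound $\sum_t u^{(\ell)}_t(b_t)$ by Cauchy--Schwarz directly rather than via $\sqrt{T}\gamma_T$, and explicitly handle exploration rounds (where $\sigma_t=4$), which the paper's proof silently skips.
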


\section{Practical Implementations}\label{sec:numerical}

\begin{figure}[!h]
    \centering
    \includegraphics[width=0.6\linewidth]{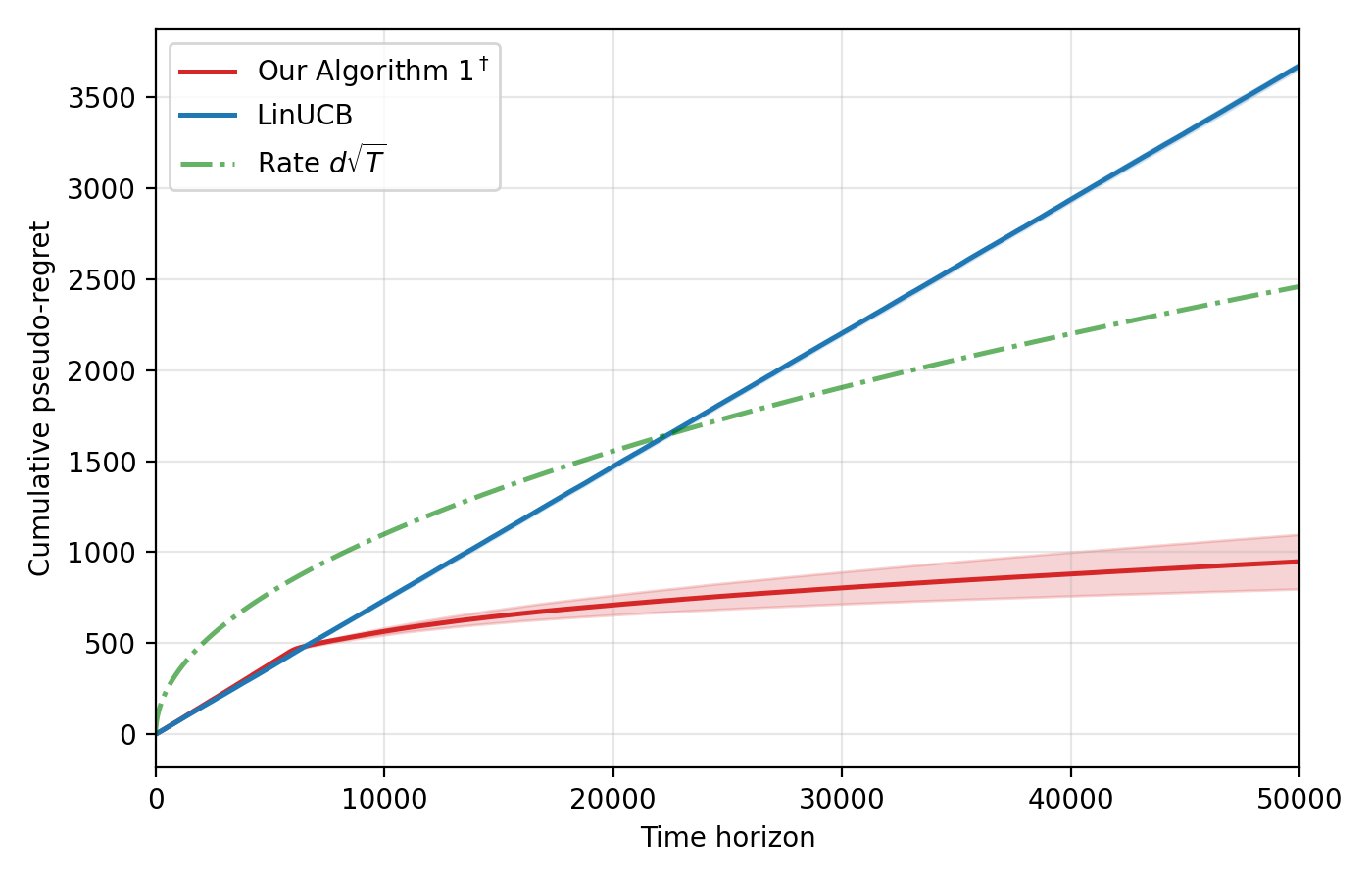}
  \caption{\raggedright \textbf{Regret trajectory.} This plots the trajectories of the expected regret of Algorithm \ref{alg:ucb_tes} and the celebrated \texttt{LinUCB} when nonzero baseline outcome $v_{t,0}$ is present. It averages over 10 independent runs, and the shaded region stands for 1 std. \texttt{LinUCB} suffers a linear regret due to consistent overbidding, as expected, by overlooking the baseline outcome and over-estimating the ad value. By contrast, Algorithm \ref{alg:ucb_tes} successfully learns the causal ad value and converges at a desired rate $O(d\sqrt{T})$. During the initial $7000$ times, Algorithm \ref{alg:ucb_tes} has not explored sufficiently, so the choice of the width in \eqref{eq:cdf_width} leads to overbidding and thereby a linearly growing regret. This is an expected behavior, since the information in SPAs is asymmetric, and overbidding is necessary to collect HOB observations when the algorithm is uncertain. More details about the setup is in Appendix \ref{app:numerical}.}
\label{fig:regret}
\end{figure}

So far, we have used a lengthy hierarchical elimination in Algorithm \ref{alg:master_alg_te} solely to address the dependencies and arrive at the minimax optimal rate $\widetilde{\Theta}(\sqrt{dT})$, which can be too heavy for practice. Therefore, we discard the master routine and integrate necessary components into Algorithm \ref{alg:base_alg_te} to arrive at a much cleaner variant, Algorithm \ref{alg:ucb_tes}. We remark that this is a common practice in the linear bandit literature \citep{li2010contextual,chu2011contextual}, and the regret bound for the consequent algorithm is typically loosen by a factor of $\sqrt{d}$ \citep{abbasi2011improved}. Due to space limit, the details of Algorithm \ref{alg:ucb_tes} and the experiment setup are deferred to Appendix \ref{app:numerical}. Its superiority is demonstrated in Figure \ref{fig:regret}.

Finally, we remark that the $O(\sqrt{T}\log T)$ initialization can be replaced by estimating CDF $G$ with historical logs. Similarly, the HOB parameters $(\lambda,\omega)$ can be inferred either offline or from the initialization rounds.

\section{Conclusion}

To measure the marginal value of an ad exposure, this work introduces an online causal inference framework to repeated SPAs, widely applied in search engines. By carefully generalizing variance reduction ingredients from the literature and coupling them with the second-price payment, we present a complete picture for the regret characterization across different feedback structures. Our results clarify when and why online causal inference is intrinsically easier in SPAs than in FPAs and how to exploit this advantage optimally.



\clearpage 

\bibliographystyle{alpha}
\bibliography{Preprint.bib}

@article{mohri2016learning,
  title={Learning algorithms for second-price auctions with reserve},
  author={Mohri, Mehryar and Medina, Andr{\'e}s Munoz},
  journal={Journal of Machine Learning Research},
  volume={17},
  number={74},
  pages={1--25},
  year={2016}
}

@article{wagner2019digital,
    author={Wagner, Kurt},
    title={Digital advertising in the us is finally bigger than print and television.},
    url={https://www.vox.com/2019/2/20/18232433/digital-advertising-facebook-googlegrowth-tv-print-emarketer-2019},
    year={2019}
}

@article{vickrey1961counterspeculation,
  title={Counterspeculation, auctions, and competitive sealed tenders},
  author={Vickrey, William},
  journal={The Journal of finance},
  volume={16},
  number={1},
  pages={8--37},
  year={1961},
  publisher={JSTOR}
}

@article{lucking2000vickrey,
  title={Vickrey auctions in practice: From nineteenth-century philately to twenty-first-century e-commerce},
  author={Lucking-Reiley, David},
  journal={Journal of economic perspectives},
  volume={14},
  number={3},
  pages={183--192},
  year={2000},
  publisher={American Economic Association}
}

@article{klemperer2018auctions,
  title={Auctions: theory and practice},
  author={Klemperer, Paul},
  year={2018},
  publisher={Princeton University Press}
}

@article{lucking2007pennies,
  title={Pennies from eBay: The determinants of price in online auctions},
  author={Lucking-Reiley, David and Bryan, Doug and Prasad, Naghi and Reeves, Daniel},
  journal={The journal of industrial economics},
  volume={55},
  number={2},
  pages={223--233},
  year={2007},
  publisher={Wiley Online Library}
}

@inproceedings{cesa2024role,
  title={The role of transparency in repeated first-price auctions with unknown valuations},
  author={Cesa-Bianchi, Nicol{\`o} and Cesari, Tommaso and Colomboni, Roberto and Fusco, Federico and Leonardi, Stefano},
  booktitle={Proceedings of the 56th Annual ACM Symposium on Theory of Computing},
  pages={225--236},
  year={2024}
}

@article{wen2025joint,
  title={Joint Value Estimation and Bidding in Repeated First-Price Auctions},
  author={Wen, Yuxiao and Han, Yanjun and Zhou, Zhengyuan},
  journal={arXiv preprint arXiv:2502.17292},
  year={2025}
}

@inproceedings{weed2016online,
  title={Online learning in repeated auctions},
  author={Weed, Jonathan and Perchet, Vianney and Rigollet, Philippe},
  booktitle={Conference on Learning Theory},
  pages={1562--1583},
  year={2016},
  organization={PMLR}
}

@inproceedings{feng2018learning,
  title={Learning to bid without knowing your value},
  author={Feng, Zhe and Podimata, Chara and Syrgkanis, Vasilis},
  booktitle={Proceedings of the 2018 ACM Conference on Economics and Computation},
  pages={505--522},
  year={2018}
}

@article{waisman2024online,
  title={Online causal inference for advertising in real-time bidding auctions},
  author={Waisman, Caio and Nair, Harikesh S and Carrion, Carlos},
  journal={Marketing Science},
  year={2024},
  publisher={INFORMS}
}

@article{myerson1981optimal,
  title={Optimal auction design},
  author={Myerson, Roger B},
  journal={Mathematics of operations research},
  volume={6},
  number={1},
  pages={58--73},
  year={1981},
  publisher={INFORMS}
}

@article{klemperer1999auction,
  title={Auction theory: A guide to the literature},
  author={Klemperer, Paul},
  journal={Journal of economic surveys},
  volume={13},
  number={3},
  pages={227--286},
  year={1999},
  publisher={Wiley Online Library}
}

@article{blum2004online,
  title={Online learning in online auctions},
  author={Blum, Avrim and Kumar, Vijay and Rudra, Atri and Wu, Felix},
  journal={Theoretical Computer Science},
  volume={324},
  number={2-3},
  pages={137--146},
  year={2004},
  publisher={Elsevier}
}

@inproceedings{devanur2009price,
  title={The price of truthfulness for pay-per-click auctions},
  author={Devanur, Nikhil R and Kakade, Sham M},
  booktitle={Proceedings of the 10th ACM conference on Electronic commerce},
  pages={99--106},
  year={2009}
}

@article{han2025optimal,
  title={Optimal no-regret learning in repeated first-price auctions},
  author={Han, Yanjun and Weissman, Tsachy and Zhou, Zhengyuan},
  journal={Operations Research},
  volume={73},
  number={1},
  pages={209--238},
  year={2025},
  publisher={INFORMS}
}

@article{han2020learning,
  title={Learning to bid optimally and efficiently in adversarial first-price auctions},
  author={Han, Yanjun and Zhou, Zhengyuan and Flores, Aaron and Ordentlich, Erik and Weissman, Tsachy},
  journal={arXiv preprint arXiv:2007.04568},
  year={2020}
}

@inproceedings{dani2008stochastic,
  title={Stochastic linear optimization under bandit feedback},
  author={Dani, Varsha and Hayes, Thomas P and Kakade, Sham M},
  booktitle={21st Annual Conference on Learning Theory},
  number={101},
  pages={355--366},
  year={2008}
}

@article{abbasi2011improved,
  title={Improved algorithms for linear stochastic bandits},
  author={Abbasi-Yadkori, Yasin and P{\'a}l, D{\'a}vid and Szepesv{\'a}ri, Csaba},
  journal={Advances in neural information processing systems},
  volume={24},
  year={2011}
}

@article{auer2002using,
  title={Using confidence bounds for exploitation-exploration trade-offs},
  author={Auer, Peter},
  journal={Journal of Machine Learning Research},
  volume={3},
  number={Nov},
  pages={397--422},
  year={2002}
}

@inproceedings{chu2011contextual,
  title={Contextual bandits with linear payoff functions},
  author={Chu, Wei and Li, Lihong and Reyzin, Lev and Schapire, Robert},
  booktitle={Proceedings of the Fourteenth International Conference on Artificial Intelligence and Statistics},
  pages={208--214},
  year={2011},
  organization={JMLR Workshop and Conference Proceedings}
}

@inproceedings{li2010contextual,
  title={A contextual-bandit approach to personalized news article recommendation},
  author={Li, Lihong and Chu, Wei and Langford, John and Schapire, Robert E},
  booktitle={Proceedings of the 19th international conference on World wide web},
  pages={661--670},
  year={2010}
}

@article{wen2024stochastic,
  title={Stochastic contextual bandits with graph feedback: from independence number to MAS number},
  author={Wen, Yuxiao and Han, Yanjun and Zhou, Zhengyuan},
  journal={Advances in Neural Information Processing Systems},
  volume={37},
  pages={64499--64522},
  year={2024}
}

@article{gordon2019comparison,
  title={A comparison of approaches to advertising measurement: Evidence from big field experiments at Facebook},
  author={Gordon, Brett R and Zettelmeyer, Florian and Bhargava, Neha and Chapsky, Dan},
  journal={Marketing Science},
  volume={38},
  number={2},
  pages={193--225},
  year={2019},
  publisher={INFORMS}
}

@article{lewis2022incrementality,
  title={Incrementality bidding and attribution},
  author={Lewis, Randall and Wong, Jeffrey},
  journal={arXiv preprint arXiv:2208.12809},
  year={2022}
}

@inproceedings{bompaire2021causal,
  title={Causal models for real time bidding with repeated user interactions},
  author={Bompaire, Martin and Gilotte, Alexandre and Heymann, Benjamin},
  booktitle={Proceedings of the 27th ACM SIGKDD Conference on Knowledge Discovery \& Data Mining},
  pages={75--85},
  year={2021}
}

@article{johnson2017ghost,
  title={Ghost ads: Improving the economics of measuring online ad effectiveness},
  author={Johnson, Garrett A and Lewis, Randall A and Nubbemeyer, Elmar I},
  journal={Journal of Marketing Research},
  volume={54},
  number={6},
  pages={867--884},
  year={2017},
  publisher={SAGE Publications Sage CA: Los Angeles, CA}
}

@inproceedings{xu2016lift,
  title={Lift-based bidding in ad selection},
  author={Xu, Jian and Shao, Xuhui and Ma, Jianjie and Lee, Kuang-chih and Qi, Hang and Lu, Quan},
  booktitle={Proceedings of the AAAI conference on artificial intelligence},
  volume={30},
  number={1},
  year={2016}
}

@article{gordon2023predictive,
  title={Predictive incrementality by experimentation (pie) for ad measurement},
  author={Gordon, Brett R and Moakler, Robert and Zettelmeyer, Florian},
  journal={arXiv preprint arXiv:2304.06828},
  year={2023}
}

@article{badanidiyuru2022incrementality,
  title={Incrementality bidding via reinforcement learning under mixed and delayed rewards},
  author={Badanidiyuru Varadaraja, Ashwinkumar and Feng, Zhe and Li, Tianxi and Xu, Haifeng},
  journal={Advances in Neural Information Processing Systems},
  volume={35},
  pages={2142--2153},
  year={2022}
}

@article{bretagnolle1978estimation,
  title={Estimation des densit{\'e}s: risque minimax},
  author={Bretagnolle, Jean and Huber, Catherine},
  journal={S{\'e}minaire de probabilit{\'e}s de Strasbourg},
  volume={12},
  pages={342--363},
  year={1978}
}

@incollection{boucheron2003concentration,
  title={Concentration inequalities},
  author={Boucheron, St{\'e}phane and Lugosi, G{\'a}bor and Bousquet, Olivier},
  booktitle={Summer school on machine learning},
  pages={208--240},
  year={2003},
  publisher={Springer}
}

@inproceedings{wen2025optimal,
  title={Optimal Arm Elimination Algorithms for Combinatorial Bandits},
  author={Wen, Yuxiao and Han, Yanjun and Zhou, Zhengyuan},
  booktitle={Proceedings of the Twenty-ninth Annual Conference on Artificial Intelligence and Statistics},
  year={2026}
}

@inproceedings{badanidiyuru2023learning,
  title={Learning to Bid in Contextual First Price Auctions},
  author={Badanidiyuru, Ashwinkumar and Feng, Zhe and Guruganesh, Guru},
  booktitle={Proceedings of the ACM Web Conference 2023},
  pages={3489--3497},
  year={2023}
}

@inproceedings{wen2025perishable,
  title={Perishable Online Inventory Control with Context-Aware Demand Distributions},
  author={Wen, Yuxiao and Huang, Jingkai and Zhou, Zhengyuan and others},
  booktitle={NeurIPS 2025 Workshop MLxOR: Mathematical Foundations and Operational Integration of Machine Learning for Uncertainty-Aware Decision-Making},
  year={2025}
}

@book{fuller2009measurement,
  title={Measurement error models},
  author={Fuller, Wayne A},
  year={2009},
  publisher={John Wiley \& Sons}
}

@book{buonaccorsi2010measurement,
  title={Measurement error: Models, methods, and applications},
  author={Buonaccorsi, John P},
  year={2010},
  publisher={Chapman and Hall/CRC}
}

\clearpage 
\appendix


\section{Numerical Experiment}\label{app:numerical}
{
\setcounter{algocf}{0}
\renewcommand{\thealgocf}{\arabic{algocf}$^\dagger$}
\begin{algorithm}[!ht]\caption{LinUCB.TE.S (\textbf{t}reatment value \textbf{e}stimation in \textbf{S}PAs)}
\label{alg:ucb_tes}
\textbf{Input:} Time $T$, discrete bid set $\Bcal$, parameters $(\lambda,\omega)$, tunable learning rate $\eta>0$.

Set $T_0\gets \lceil \sqrt{T}\log T\rceil$.

\For{$t=1,2,\dots, T_0$}{
Bid $b_t\gets 1$ and observe $m_t$.
}

\For{$t=T_0+1,\dots, T$}{
Observe context $x_t$.

Set variance proxy $\sigma_\tau \gets \sigma_\tau(b_\tau)$ in \eqref{eq:sig_t_explore} and selected width $u_\tau\gets u_\tau(b_\tau)$ in \eqref{eq:cdf_width};

$\displaystyle A_{t}\leftarrow I + \sum_{\tau<t} \sigma_\tau^{-2}x_\tau x_\tau^{\top}$;

$\displaystyle z_{t} \leftarrow \sum_{\tau<t}\sigma_\tau^{-2}x_\tau \widetilde{e}_\tau$ with IPW $\widetilde{e}_\tau = \widetilde{e}_\tau(b_\tau)$ in \eqref{eq:IPW_with_explore};

Compute estimator $\htheta_{t} \leftarrow A_{t}^{-1}z_{t}$.

Set $\displaystyle \gamma_t\gets 1 + 14\log T+4\sqrt{\sum_{\tau<t}u_\tau^2}$.

Compute CDF estimator $\hG_t$ as in \eqref{eq:cdf_est}.

\For{$b^j\in \Bcal$}
{
Compute the following quantities:
{

$w_{t,0}(b^j) \leftarrow \widehat{G}_t(b^j)\gamma_t\|x_t\|_{A_t^{-1}} + 4u_t(b^j)$. 

$w_{t,1}(b^j) \leftarrow \parr{1-\widehat{G}_t(b^j)}\gamma_t\|x_t\|_{A_t^{-1}} + 4u_t(b^j)$. 

$\hr_{t,0}(b^j)\leftarrow \widehat{G}_t(b^j)(\htheta_{t}^{\top}x_t - b^j)+ \sum_{i\le j}\widehat{G}_t(b^i)$.

$\hr_{t,1}(b^j)\leftarrow \hr_{t,0}(b^j) - \htheta_t^{\top}x_t$.
}
}

Invoke Algorithm \ref{alg:ucb_selection} with perturbation $c=\frac{\omega}{4}$, margin constant $\epsilon=\frac{1-\lambda}{8}$, CDF $\widehat{G}_t$, and value $\htheta_t^\top x_t$ to get interval $I_t$ and index $q\in\bra{0,1}$. 

\uIf{$\gamma_t\|x_t\|_{A_t^{-1}} + 4u_t(b^J) > C(\lambda,\omega)$}{
\tcp{Exploration}

Uniformly randomly select $b_t=0$ or $1$.
}\Else{
Select UCB bid $\displaystyle b_t \gets \argmax_{b\in\Bcal\cap I_t}\hr_{t,q}(b) + \eta\cdot w_{t,q}(b)$.
}

Bid $b_t$ and observe either $(v_{t,1},m_t)$ or $v_{t,0}$.
}
\end{algorithm}
}

Since most existing real-time bidding datasets do not document the organic clicks (i.e. the baseline outcomes $v_{t,0}$ when lost), we test Algorithm \ref{alg:ucb_tes} empirically on a synthetic environment. In this environment, the baseline $v_{t,0}$ is set to be a nonzero periodic value in time, which aims to reflect the fluctuating market behavior (e.g. from day to night); its pattern is shown in Figure \ref{fig:periodic_outcomes}. We implement the celebrated \texttt{LinUCB} as a benchmark, which only regresses on the winning outcomes as done in current industry \citep{li2010contextual}. In the presence of nonzero $v_{t,0}$, Figure \ref{fig:regret} indicates that \texttt{LinUCB} consistently overbids and thereby suffers a linear regret, as expected. By contrast, Algorithm \ref{alg:ucb_tes} successfully identifies the treatment ad value and exhibits a desired convergence rate. This highlights the emergent need for taking causal estimation into account in real-time bidding.

In particular, we consider a simple setup to highlight the overbidding issue of existing methods. At each time $t$, the context $x_t = [1, x_t(2:d)]$ is generated by drawing $x_t(2:d)$ from the $(d-1)$-dimensional isotropic Gaussian $\mathcal{N}_{d-1}(\boldsymbol{0}, I)$, with dimension $d=11$, and concatenated with an intercept. Then we normalize by setting $x_t\gets \frac{x_t}{\|x_t\|_2}$. The underlying parameter is $\theta_* = [\theta_*(1), \theta_*(2:d)]$, where we again draw $\theta_*(2:d)$ from $\mathcal{N}_{d-1}(\boldsymbol{0}, I)$ and set $\theta_*(1)=0.6$. Then we normalize by $\theta_*\gets \frac{\theta_*}{\|\theta_*\|_2}$. The baseline outcome is periodic in time with a context-dependent fluctuation: $v_{t,0} = \sigma(2+\sin(f\cdot t) + \cos(\beta^\top x_t))$, where $\sigma$ is the sigmoid function, $f = \frac{\pi}{125}$ gives a length-$250$ period, and $\beta\sim\mathcal{N}_{d}(\boldsymbol{0}, I)$ (and then normalized) is another linear model for the contextual dependence. Its mean $\E_{x_t}[v_{t,0}]$ is illustrated in Figure \ref{fig:periodic_outcomes}. The periodic pattern is set to reflect the potentially periodic market behaviors; for example, the user conversion is typically high during some fixed `peak hours' each day, regardless whether the product is displayed in the sponsored slots or high-ranked organic slots. The winning outcome is then set as $v_{t,1} = v_{t,0} + \varepsilon_t$ where $\varepsilon_t\sim\mathrm{Bern}(\theta_*^\top x_t)$ is a Bernoulli random variable. Finally, the HOB is drawn from an i.i.d. Beta distribution $M_t\sim\mathrm{Beta}(5,7)$.

\begin{figure*}[!ht]
  \centering
  \begin{minipage}[c]{0.5\textwidth}
    \centering
    \includegraphics[width=\linewidth]{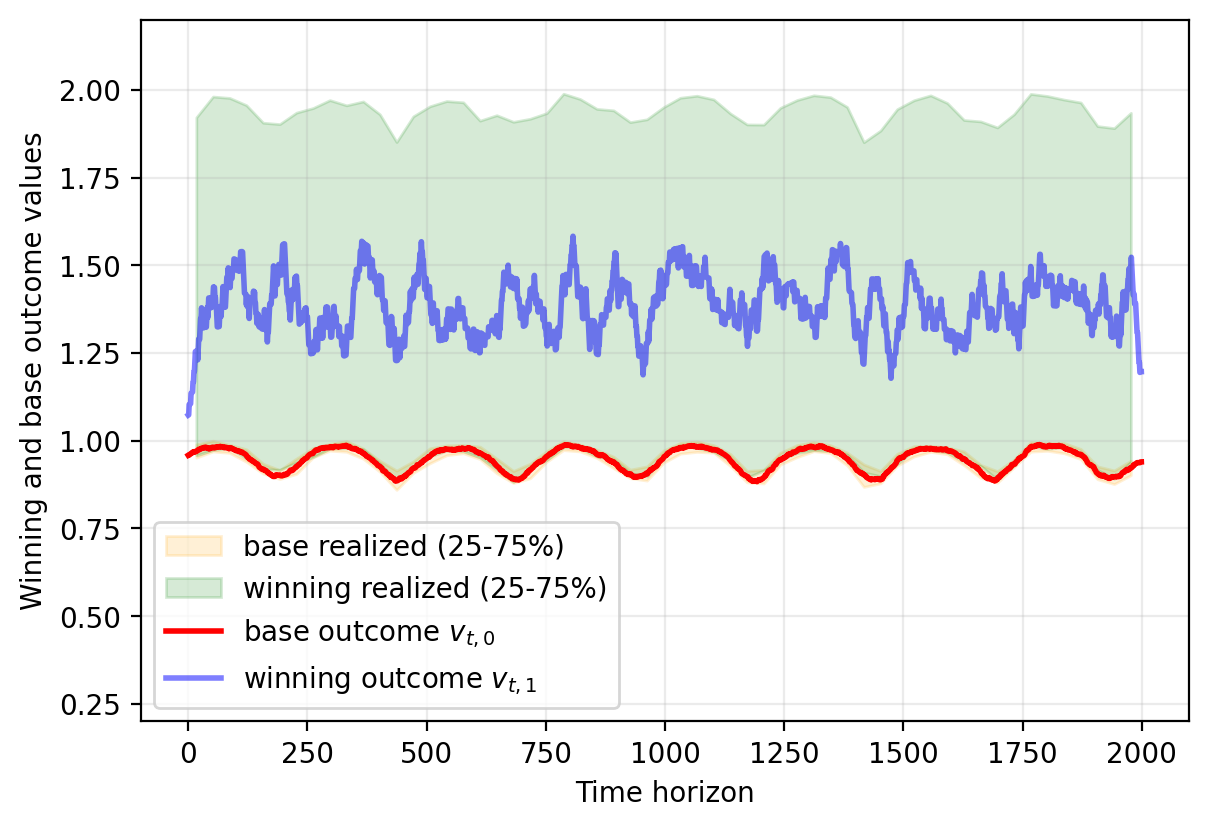}
  \end{minipage}
  \hfill
  \begin{minipage}[c]{0.48\textwidth}
    \caption{\raggedright \textbf{Pattern of periodic outcomes.} The baseline outcome $v_{t,0}$ is defined as a periodic quantity in time. The winning outcome $v_{t,1}$ is then set to $v_{t,0} + \Delta v_t$, where $\Delta v_t$ is a Bernoulli variable with a linear-in-context mean $\theta_*^\top x_t$. This plot shows the realized outcomes $(v_{t,0})_t$ and $(v_{t,1})_t$, smoothed over a window size 35 for visualization. The shaded regions show the 25 to 75 quantile of the realizations in bins of size 35.}
    \label{fig:periodic_outcomes}
  \end{minipage}
\end{figure*}

\section{Extension to Non-i.i.d. HOBs}\label{app:contextual_hob}

As a relaxation of Assumption \ref{assum:hob}, one can consider a contextual HOB that satisfies
\[
m_t = f_*(x_t) + \eta_t
\]
where $f_*:\R^d\to [0,1]$ is an underlying function that captures the contextual dependence of the mean, and $\eta_t$ is an i.i.d. noise. For example, a simple yet powerful model is the linear HOB with $f_*(x_t) = \beta_*^\top x_t$ for some unknown $\beta_*\in\R^d$ \cite{badanidiyuru2023learning,wen2025joint}. For this type of contextual HOBs, a natural extension is to apply regressions on the structural part $f_*(x_t)$ and then our analysis on the i.i.d. noise $\eta_t$. Two key challenges remain: The first challenge lies in adapting the bid-space bins in \eqref{eq:discretize_bid} to the \textit{estimated} i.i.d. noise $\widehat{\eta}_t \coloneqq m_t - \widehat{f}_t(x_t)$, as the learner learns $\widehat{f}_t\to f_*$. When the CDF of $\eta_t$ is Lipschitz, the error from this adaptation may be bounded by $O(\|\widehat{f}_t - f_*\|)$; but in general (e.g. in our case of Assumption \ref{assum:hob}, which includes most continuous/discrete/mixed distributions), it remains challenging \citep{fuller2009measurement,buonaccorsi2010measurement,wen2025perishable}. The second challenge is to fully exploit the asymmetric information on $m_t$ in the binary feedback in SPAs, where the regression observations only come from the turns when the bidder wins and pays. Fortunately, our approach already tends to win under large uncertainty to collect information on estimating the i.i.d. part (see the caption in Figure \ref{fig:regret} and the UCB quantity in \eqref{eq:cdf_width}), so this may be less a challenge.

\section{Proof of Lemma \ref{lem:hob_est}}
\begin{proof}
We will provide a detailed proof for the second (slightly more involved) claim. Then the first claim follows the same argument. Fix any bid index $j\in[\lceil\sqrt{T}\rceil]$. Recall that $n_t^j\coloneqq \sum_{\tau\in\Phi_t}\indic[b_\tau\ge b_j]$. Note that by definition in \eqref{eq:cdf_est},
\begin{align*}
\frac{1}{\sqrt{T}}\abs*{\sum_{i\le j}G(b^i) - \hG_t(b^i)} &= \frac{1}{\sqrt{T}}\abs*{\sum_{i\le j}\sum_{k\le i}p^k - \hp_t^k}\\
&= \frac{1}{\sqrt{T}}\abs*{\sum_{i\le j}\sum_{k\le i}\frac{\sum_{\tau\in\Phi_t}\indic[b_\tau \ge b^k]\parr*{p^k - \indic[b^k< m_\tau\le b^{k+1}]}}{\sum_{\tau\in\Phi_t}\indic[b_\tau \ge b^k]}}\\
&= \frac{1}{\sqrt{T}}\abs*{\sum_{\tau\in\Phi_t}\sum_{i\le j}\sum_{k\le i}\frac{\indic[b_\tau \ge b^k]}{n_{t}^k}\parr*{p^k - \indic[b^k< m_\tau\le b^{k+1}]}}\\
&= \abs*{\sum_{\tau\in\Phi_t}\sum_{k\le j}\frac{j-k}{\sqrt{T}}\frac{\indic[b_\tau \ge b^k]}{n_{t}^k}\parr*{p^k - \indic[b^k< m_\tau\le b^{k+1}]}}.
\end{align*}
To handle this term, denote $X_\tau \coloneqq \sum_{k\le j}\frac{j-k}{\sqrt{T}}\frac{\indic[b_\tau \ge b^k]}{n_{t}^k}\parr*{p^k - \indic[b^k< m_\tau\le b^{k+1}]}$. Clearly $\E[X_\tau] = 0$ as $\E[\indic[b^k< m_\tau\le b^{k+1}]] = p^k$. Additionally, since $n_{t}^k \ge n_{t}^j$ for every $k\le j$ and $\frac{j-k}{\sqrt{T}}\le 1$, we always have the range of $X_\tau$ be smaller than $\frac{1}{n_{t}^j}$.
Since it is mean-zero, we also have
\begin{align*}
\E[X_\tau^2] = \Var(X_\tau)
&\le \sum_{k\le j}\frac{(j-k)^2}{T}\Var\parr*{\frac{\indic[b_\tau \ge b^k]}{n_{t}^k}\parr*{p^k - \indic[b^k< m_\tau\le b^{k+1}]}}\\
&\le \sum_{k\le j}\Var\parr*{\frac{\indic[b_\tau \ge b^k]}{n_{t}^k}\parr*{p^k - \indic[b^k< m_\tau\le b^{k+1}]}}\\
&= \sum_{k\le j}\frac{\indic[b_\tau \ge b^k]}{(n_{t}^k)^2}\cdot \E[(p^k - \indic[b^k<m_\tau\le b^{k+1}])^2] \\
&= \sum_{k\le j}\frac{\indic[b_\tau \ge b^k]}{(n_{t}^k)^2}p^k(1-p^k)\\
&\le \sum_{k\le j}\frac{\indic[b_\tau \ge b^k]}{(n_{t}^k)^2}p^k.
\end{align*}
By the assumption in the lemma statement, the variables $(X_\tau)_{\tau\in\Phi_t}$ are conditionally independent. Consequently,
\[
\sum_{\tau\in\Phi_t}\Var\parr*{X_\tau} = \sum_{\tau\in\Phi_t}\E\parq*{X_\tau^2} \le \sum_{\tau\in\Phi_t}\sum_{k\le j}\frac{\indic[b_\tau \ge b^k]}{(n_{t}^k)^2}p^k = \sum_{k\le j}\frac{p^k}{n_{t}^k}.
\]
So by Bernstein's inequality, with probability at least $1-T^{-4}/2$, we have
\begin{equation}
\abs*{\sum_{\tau\in\Phi_t}X_\tau} \le 8\sqrt{\log(SKT)\sum_{k\le j}\frac{p^k}{n_{t}^k}} + \frac{8\log(SKT)}{n_{t}^j}.
\end{equation}
By taking a union bound over $j\in[\lceil\sqrt{T}\rceil]$, with probability at least $1-T^{-3}/2$, for every $b^j$ we have
\begin{equation}\label{eq:lem_hob_temp1}
\frac{1}{\sqrt{T}}\abs*{\sum_{i\le j}G(b^i) - \hG_t(b^i)} \le 8\sqrt{\log T\sum_{k\le j}\frac{p^k}{n_{t}^k}} + \frac{8\log T}{n_{t}^j}. 
\end{equation}
Since $p^k$ is unknown, we consider the initial estimator $\hp^k_0$. By (29) and (30) in Appendix C.6 in \citep{han2025optimal}, with probability at least $1-T^{-4}$, it holds that
\begin{equation}\label{eq:p_le_hp}
  p^k \le 2\hp_0^k + \frac{12\log T}{\lceil\sqrt{T}\rceil}  
\end{equation}
\begin{equation}\label{eq:hp_le_p}
  \hp^k_0 \le 2p^k + \frac{12\log T}{\lceil\sqrt{T}\rceil}.
\end{equation}
Combining \eqref{eq:lem_hob_temp1} and \eqref{eq:p_le_hp} gives
\[
\frac{1}{\sqrt{T}}\abs*{\sum_{i\le j}G(b^i) - \hG_t(b^i)} \le 8\sqrt{\log T\sum_{k\le j}\frac{2}{n_{t}^k}\parr*{\hp_0^k + \frac{12\log T}{\lceil\sqrt{T}\rceil}}} + \frac{8\log T}{n_{t}^j}. 
\]
The same bound applies to the first claim on $\abs*{G(b^j)-\hG_t(b^j)}$: By definition, for each bid index $j\in[\lceil\sqrt{T}\rceil]$,
\begin{align*}
\abs*{G(b^j) - \hG_t(b^j)} = \abs*{\sum_{i\le j}p^i - \hp^i_t}
&= \abs*{\sum_{i\le j}\frac{\sum_{\tau\in\Phi_t}\indic[b_\tau \ge b^i]\parr*{p^i - \indic[b^i< m_\tau\le b^{i+1}]}}{\sum_{\tau\in\Phi_t}\indic[b_\tau \ge b^i]}}\\
&= \abs*{\sum_{\tau\in\Phi_t}\sum_{i\le j}\frac{\indic[b_\tau \ge b^i]}{n_{t}^i}\parr*{p^i - \indic[b^i< m_\tau\le b^{i+1}]}}.
\end{align*}
Following the same argument, with probability at least $1-T^{-3}/2$, for every $b^j$ we have
\begin{equation}\label{eq:lem_hob_temp2}
\abs*{G(b^j) - \hG_t(b^j)} \le 8\sqrt{\log T\sum_{k\le j}\frac{p^k}{n_{t}^k}} + \frac{8\log T}{n_{t}^j}
\end{equation}
which can be combined with \eqref{eq:p_le_hp} as well.
Taking the union bound over \eqref{eq:lem_hob_temp1}, \eqref{eq:lem_hob_temp2}, and \eqref{eq:p_le_hp} completes the proof.
\end{proof}

\section{Proof of Lemma \ref{lem:bias_variance}}
\begin{proof}
First consider the bias of this estimator. By definition in \eqref{eq:IPW}, we have
\begin{align*}
\abs*{\E[\widetilde{e}_t(b)] - \theta_*^\top x_t} &= \abs*{\E\parq*{\frac{\indic[b\ge M_t]}{\widehat{G}_t(b)}v_{t,1} - v_{t,1} - \frac{\indic[b< M_t]}{1-\widehat{G}_t(b)}v_{t,0} + v_{t,0}}}\\
&\le \abs*{\E\parq*{\frac{\indic[b\ge M_t]}{\widehat{G}_t(b)} - 1}} + \abs*{\E\parq*{\frac{\indic[b< M_t]}{1-\widehat{G}_t(b)} - 1}} \\
&= \frac{\abs*{G(b) - \widehat{G}_t(b)}}{\widehat{G}_t(b)} + \frac{\abs*{G(b) - \widehat{G}_t(b)}}{1-\widehat{G}_t(b)}\\
&\overset{\text{(a)}}{\le} u_t(b)\parr*{\frac{1}{\widehat{G}_t(b)} + \frac{1}{1-\widehat{G}_t(b)}}\\
&\overset{\text{(b)}}{\le} 2u_t(b)\sigma_t(b)
\end{align*}
where (a) applies $\abs*{\widehat{G}_t(b) - G(b)} \le u_t(b)$ by assumption, and (b) applies $\frac{1}{\widehat{G}_t(b)} + \frac{1}{1-\widehat{G}_t(b)} \le \frac{2}{\min\bra{\widehat{G}_t(b),1-\widehat{G}_t(b)}} \le \frac{2}{\widehat{G}_t(b)(1-\widehat{G}_t(b))}$. 

Next, the bound on variance is as follows:
\begin{align*}
\Var(\widetilde{e}_t(b)) \le \E[\widetilde{e}_t(b)^2] &\overset{\text{(c)}}{\le} \frac{2\E[\indic[b\ge M_t]]}{\widehat{G}_t(b)^2} + \frac{2\E[\indic[b<M_t]]}{(1-\widehat{G}_t(b))^2}\\
&\le \frac{2}{\widehat{G}_t(b)^2} + \frac{2}{(1-\widehat{G}_t(b))^2}\\
&\le 4\sigma_t(b)^2
\end{align*}
where (c) follows from the AM-GM inequality $(a+b)^2 \le 2a^2 + 2b^2$.
\end{proof}

\section{Proof of Lemma \ref{lem:WLS}}
\begin{proof}
Let the bias of the IPW estimator in \eqref{eq:IPW} at time $\tau$ be $\zeta_\tau \coloneqq \E[\widetilde{e}_\tau(b_\tau)] - \theta_*^{\top}x_\tau.$ During the remaining of this proof, we denote $D_t = [\sigma_\tau^{-1} x_\tau]_{\tau\in\Phi_t}\in\R^{d\times |\Phi_t|}$ as the weighted contexts, $V_t = [\sigma_\tau^{-1} \widetilde{e}_\tau(b_\tau)]_{\tau\in\Phi_t}\in\R^{|\Phi_t|\times 1}$ the weighted estimators, and $Z_t = [\sigma_\tau^{-1}\zeta_\tau]_{\tau\in\Phi_t}\in\R^{|\Phi_t|\times 1}$ the weighted biases, where  $\sigma_\tau^{-1}= \sigma_\tau(b_\tau)^{-1} = \widehat{G}_t(b_\tau)\parr*{1-\widehat{G}_t(b_\tau)}$ as defined in \eqref{eq:sigma_t}. Recall that in Algorithm \ref{alg:base_alg_te}, we have $A_t = I + D_tD_t^\top$. With $\htheta_t$ defined in Algorithm \ref{alg:base_alg_te}, we can decompose the value estimation error by
\begin{align*}
\htheta_{t}^{\top}x_t - \theta_{*}^{\top}x_t &= x_t^{\top}A^{-1}_{t}D_tV_t - x_t^{\top}A^{-1}_{t}\parr*{I+D_{t}D_{t}^{\top}}\theta_{*}\\
&= x_t^{\top}A_{t}^{-1}D_{t}\parr*{V_{t} - Z_t - D^{\top}_{t}\theta_{*}} + x_t^{\top}A_{t}^{-1}D_{t}Z_t - x_t^{\top}A_{t}^{-1}\theta_{*}.
\end{align*}
Since $\|\theta_{*}\|_2\leq 1$ and $\abs{x_t^{\top}A_{t}^{-1}\theta_{*}}\le \|x_t^{\top}A_{t}^{-1}\|_2$, this gives
\[
\abs*{\htheta_{t}^{\top}x_t - \theta_{*}^{\top}x_t} \leq \abs*{x_t^{\top}A_{t}^{-1}D_{t}\parr*{V_{t} - Z_t - D^{\top}_{t}\theta_{*}}} + \abs*{x_t^{\top}A_{t}^{-1}D_{t}Z_t} + \|x_t^{\top}A_{t}^{-1}\|_2.\numberthis\label{eq:WLS_app_width_decompose}
\]
Since $A_t = I+D_tD_t^\top\succeq I$, the last term is upper bounded by
\[
\|x_t^{\top}A_{t}^{-1}\|_2 = \sqrt{x_t^{\top}A_t^{-1}IA_t^{-1}x_t} \leq \sqrt{x_t^{\top}A_t^{-1}x_t} = \|x_t\|_{A_t^{-1}}.
\]
Next we address the first term in \eqref{eq:WLS_app_width_decompose}. Note that
\[
x_t^{\top}A_{t}^{-1}D_{t}\parr*{V_{t} - Z_t -D^{\top}_{t}\theta_{*}} = \sum_{\tau\in\Phi_t}x_t^{\top}A_t^{-1}\sigma_\tau^{-2}x_\tau \widetilde{\varepsilon}_\tau
\]
where the noise $\widetilde{\varepsilon}_\tau = \widetilde{e}_\tau(b_\tau) - \theta_*^{\top}x_\tau - \zeta_\tau$ satisfies the followings: 
\begin{itemize}
    \item $(\widetilde{\varepsilon}_\tau)_{\tau\in\Phi_t}$ are conditionally independent given $(x_\tau, b_\tau, M_\tau)_{\tau\in\Phi_t}$ by the lemma statement;

    \item $\E[\widetilde{\varepsilon}_\tau] = 0$;

    \item $\abs*{\widetilde{\varepsilon}_\tau} \leq |\widetilde{e}_\tau(b_\tau)| \leq 2\max\bra*{\widehat{G}_\tau(b_\tau)^{-1}, \parr*{1-\widehat{G}_\tau(b_\tau)}^{-1}} \le 2\sigma_\tau$;
    
    \item $\Var(\widetilde{\varepsilon}_\tau) = \Var(\widetilde{e}_\tau(b_\tau))  \leq 4\sigma_\tau^{2}$ by Lemma~\ref{lem:bias_variance}.
\end{itemize}
Then for every $\tau\in\Phi_t$, we have
\begin{itemize}
    \item $\Var\parr*{x_t^{\top}A_t^{-1}\sigma_\tau^{-2}x_\tau\widetilde{\varepsilon}_\tau} \leq \abs*{x_t^{\top}A_t^{-1}\sigma_\tau^{-1} x_\tau}^2\sigma_\tau^{-2}\Var(\widetilde{\varepsilon}_\tau) \leq 4\abs*{x_t^{\top}A_t^{-1}\sigma_\tau^{-1} x_\tau}^2;$

    \item $\abs*{x_t^{\top}A_t^{-1}\sigma_\tau^{-2}x_\tau\widetilde{\varepsilon}_\tau} \leq 2\abs*{x_t^{\top}A_t^{-1}x_\tau} \leq 2\|x_t^{\top}A_t^{-1}\|_2 \leq 2\|x_t\|_{A_t^{-1}}$.
\end{itemize}
By Bernstein's inequality (Lemma~\ref{lem:bernstein}), with probability at least $1-T^{-3}$,
\begin{align*}
\abs*{\sum_{\tau\in\Phi_t}x_t^{\top}A_t^{-1}\sigma_\tau^{-2}x_\tau \widetilde{\varepsilon}_\tau} &\leq \sqrt{8\log\parr*{2T^3}\sum_{\tau\in\Phi_t}\abs*{x_t^{\top}A_t^{-1}\sigma_\tau^{-1} x_\tau}^2} + \frac{4}{3}\log\parr*{2T^3}\|x_t\|_{A_t^{-1}} \\
&= \sqrt{8\log\parr*{2T^3}\|x_t^{\top}A_t^{-1}D_t\|_2^2} + \frac{4}{3}\log\parr*{2T^3}\|x_t\|_{A_t^{-1}} \\
&\stepa{\leq} \sqrt{8\log\parr*{2T^3}\|x_t\|_{A_t^{-1}}^2} + \frac{4}{3}\log\parr*{2T^3}\|x_t\|_{A_t^{-1}} \\
&\leq 14\log T\|x_t\|_{A_t^{-1}}
\end{align*}
where (a) follows from $\|x_t^{\top}A_t^{-1}D_t\|_2^2 = x_t^{\top}A_t^{-1}D_tD_t^{\top}A_t^{-1}x_t \leq x_tA_t^{-1}x_t = \|x_t\|_{A_t^{-1}}$ as $A_t=I + D_tD_t^{\top}\succeq D_tD_t^\top$. 

Finally, we bound the middle term in \eqref{eq:WLS_app_width_decompose}. We can write
\begin{align*}
\abs*{x_t^{\top}A_{t}^{-1}D_{t}Z_t} &\stepb{\leq} \|x_tA_t^{-1}\|_{A_t} \|D_tZ_t\|_{A_t^{-1}} 
=\|x_t\|_{A_t^{-1}} \|D_tZ_t\|_{A_t^{-1}}
= \|x_t\|_{A_t^{-1}} \sqrt{Z_t^{\top}D_t^{\top}A_t^{-1}D_tZ_t}\\
&\stepc{\leq} \|x_t\|_{A_t^{-1}}\|Z_t\|_2
\end{align*}
where (b) applies Cauchy-Schwartz inequality and (c) applies Woodbury matrix identity with $D_t^\top A_t D_t = D_t^{\top}(I+D_tD_t^{\top})^{-1}D_t\preceq I$. Applying Lemma~\ref{lem:bias_variance} on the following term completes the proof:
\begin{align*}
\|Z_t\|_2 = \sqrt{\sum_{\tau\in\Phi_t}\sigma_\tau^{-2}\zeta_\tau^2}\leq 4\sqrt{\sum_{\tau\in\Phi_t}u_\tau^2}.
\end{align*}
\end{proof}

\section{Proof of Lemma \ref{lem:better_width}}\label{app:proof_better_width}
To be explicit, the constant in Lemma \ref{lem:better_width} is (recall we used a perturbation level $c=\frac{\omega}{4}$ in Algorithm \ref{alg:ucb_selection})
\[
C(\lambda,\omega) = \frac{c}{2}\cdot \frac{1-\lambda}{8} = \frac{\omega(1-\lambda)}{64}.
\]
The proof of this lemma consists of three parts. First, we show that the confidence widths $w_{t,0}$ and $w_{t,1}$ are valid via Lemma \ref{lem:valid_reward_width}. Then we show in Lemma \ref{lem:good_cdf_interval} that, over the interval found by Algorithm \ref{alg:ucb_selection}, the estimated CDF value is bounded away from $1$, and the optimal bid lies in the chosen interval. Finally, we show that the chosen index $q\in\bra{0,1}$ gives the smaller width in Lemma \ref{lem:valid_smaller_width}.

\begin{lemma}\label{lem:valid_reward_width}
Suppose the event in Lemma \ref{lem:hob_est} holds. Also suppose $\abs{G(b^j) - \hG_t(b^j)}\le u_t(b^j)$ and 
\[
\frac{1}{\sqrt{T}}\abs*{\sum_{i\le j}G(b^i) - \sum_{i\le j}\hG_t(b^i)}\le u_t(b^j)
\]
for each $b^j\in\Bcal$ for the given width function $u_t$ in Algorithm \ref{alg:base_alg_te}. Then it holds that
\[
\abs{\br_{t,0}(b) - \hr_{t,0}(b)} \le \frac{1-\lambda}{8}\cdot w_{t,0}(b)
\]
and
\[
\abs{\br_{t,1}(b) - \hr_{t,1}(b)} \le \frac{1-\lambda}{8}\cdot w_{t,1}(b)
\]
for every discretized bid $b\in\Bcal$.
\end{lemma}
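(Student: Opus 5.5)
The plan is to reduce both claims to the elementary inequality
\[
\abs{\br_{t,q}(b^j) - \hr_{t,q}(b^j)} \le \widehat{G}_t(b^j)^{(q)}\,\gamma\|x_t\|_{A_t^{-1}} + 4u_t(b^j) + \frac{2}{\sqrt{T}},
\]
where $\widehat{G}_t(b^j)^{(0)} = \widehat{G}_t(b^j)$ and $\widehat{G}_t(b^j)^{(1)} = 1-\widehat{G}_t(b^j)$. Multiplying $w_{t,q}(b^j)$ by $\frac{1-\lambda}{8}$ cancels the prefactor $\frac{8}{1-\lambda}$, so this is exactly the claim. I read the sum in $\hr_{t,q}$ as the discretized integral $\frac{1}{\sqrt{T}}\sum_{i\le j}\widehat{G}_t(b^i)$, consistent with the payment approximation preceding Lemma~\ref{lem:hob_est}.

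\textbf{Case $q=0$.} Fix $b=b^j$ and split the error into three pieces.
\begin{itemize}
\item[(i)] The value term. Write
\[
G(b)(\theta_*^\top x_t - b) - \widehat{G}_t(b)(\htheta_t^\top x_t - b) = (G(b)-\widehat{G}_t(b))(\theta_*^\top x_t - b) + \widehat{G}_t(b)(\theta_*^\top x_t - \htheta_t^\top x_t).
\]
Since $\theta_*^\top x_t=\E[\Delta v_t]\in[-1,1]$ and $b\in[0,1]$, the first part is at most $2u_t(b)$. By Lemma~\ref{lem:WLS} the second part is at most $\widehat{G}_t(b)\gamma\|x_t\|_{A_t^{-1}}$; this uses $\widehat{G}_t\ge 0$, which holds because it is a sum of nonnegative $\hp^i_t$.
\item[(ii)] The discretization error $\abs{\int_0^{b^j}G(m)\mathrm{d}m - \frac{1}{\sqrt{T}}\sum_{i\le j}G(b^i)}$. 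By monotonicity of $G$, the integral over each cell $[b^{i},b^{i+1}]$ lies between $\frac{1}{\sqrt{T}}G(b^i)$ and $\frac{1}{\sqrt{T}}G(b^{i+1})$. The integral is therefore sandwiched between the left and right Riemann sums, which differ from $\frac{1}{\sqrt{T}}\sum_{i\le j}G(b^i)$ by at most two boundary terms, each bounded by $\frac{1}{\sqrt{T}}$.
\item[(iii)] The statistical error $\frac{1}{\sqrt{T}}\abs{\sum_{i\le j}G(b^i)-\widehat{G}_t(b^i)}$. This is at most $u_t(b^j)$ directly by the hypothesis.
\end{itemize}
Summing gives $\widehat{G}_t\gamma\|x_t\|_{A_t^{-1}} + 3u_t + \frac{2}{\sqrt{T}}$, which is within the budget.

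\textbf{Case $q=1$.} Use $\br_{t,1}=\br_{t,0}-\theta_*^\top x_t$ and $\hr_{t,1}=\hr_{t,0}-\htheta_t^\top x_t$, and regroup the value-dependent part as
\[
-(1-G(b))\theta_*^\top x_t + (1-\widehat{G}_t(b))\htheta_t^\top x_t = (G(b)-\widehat{G}_t(b))\theta_*^\top x_t + (1-\widehat{G}_t(b))(\htheta_t^\top x_t-\theta_*^\top x_t).
\]
The first term is at most $u_t(b)$ and the second at most $\abs{1-\widehat{G}_t(b)}\gamma\|x_t\|_{A_t^{-1}}$. The remaining terms $-G(b)b$ and $\int_0^b G$, compared with $-\widehat{G}_t(b)b$ and $\frac{1}{\sqrt{T}}\sum_{i\le j}\widehat{G}_t(b^i)$, contribute $u_t(b)+u_t(b)+\frac{2}{\sqrt{T}}$ exactly as in the case $q=0$.

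The main obstacle is bookkeeping rather than ideas, and it sits in two places. The first is the off-by-one in the Riemann-sum comparison: $b^1=0$, and the sum runs over $i\le j$ rather than $i<j$, so I will write both Riemann sums explicitly to confirm the $\frac{2}{\sqrt{T}}$ slack. The second is that $\widehat{G}_t(b)$ may exceed $1$, because the $\hp^i_t$ have different denominators, so $1-\widehat{G}_t$ could be negative. I would handle this by clipping $\widehat{G}_t$ to $[0,1]$, which only decreases $\abs{G-\widehat{G}_t}$, so every bound above still holds. Alternatively, under the events of Lemma~\ref{lem:hob_est}, any excess over $1$ is at most $u_t(b)$ and can be absorbed into the spare $u_t(b)$ term of the budget.
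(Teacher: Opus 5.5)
Your decomposition is the paper's. The triangle inequality splits the error into four pieces: the CDF error times $(\theta_*^\top x_t-b^j)$, $\hG_t(b^j)$ times the value error from Lemma~\ref{lem:WLS}, the Riemann-sum error, and the summed CDF error. The budget $4u_t+2/\sqrt{T}$ absorbs these the same way. Your factor $2$ in term (i) is in fact the justified one: $\theta_*^\top x_t$ may be negative, so the paper's claim $\abs{\theta_*^\top x_t-b^j}\le 1$ need not hold. The paper dismisses $q=1$ as following ``verbatim'' and never mentions $\hG_t>1$, so you are right to flag it. Because the $\hp_t^i$ have different denominators, \eqref{eq:cdf_est} can indeed exceed $1$.

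The gap is in how you handle that case.

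\emph{Your second remedy fails.} When $\hG_t(b)>1$, your bound contains $(\hG_t(b)-1)\gamma\|x_t\|_{A_t^{-1}}$, while the budget contains $-(\hG_t(b)-1)\gamma\|x_t\|_{A_t^{-1}}$. The deficit is therefore $2(\hG_t(b)-1)\gamma\|x_t\|_{A_t^{-1}}$, which can be as large as $2u_t(b)\gamma\|x_t\|_{A_t^{-1}}$. Since $\gamma\ge 1+14\log T$ and $\|x_t\|_{A_t^{-1}}$ can be of order one, a single spare $u_t(b)$ cannot cover it. In fact $w_{t,1}(b)$ itself can then be negative, so for the unclipped estimator the inequality is simply false. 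The lemma, like the variance proxy \eqref{eq:sigma_t}, implicitly requires $\hG_t\le 1$.

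\emph{Clipping is the correct fix, but your justification is incomplete.} Redefining the algorithm's estimator as $\min\{\hG_t,1\}$ works. However, ``clipping only decreases $\abs{G-\hG_t}$'' covers only the pointwise hypothesis. The second hypothesis bounds a \emph{signed} sum. Clipping moves only negative summands toward zero, which can destroy cancellation and increase $\abs{\sum_{i\le j}(G(b^i)-\hG_t(b^i))}$. You need one more step:
\begin{itemize}
\item $\hG_t(b^i)$ is nondecreasing in $i$, and $\hG_t(b^1)=\hp_t^1\le 1$. So the clipped indices form a suffix $\{i_0,\dots,j\}$ with $i_0\ge 2$.
\item Apply the summed hypothesis at index $i_0-1$.
\item On the suffix, use $0\le 1-G(b^i)<\hG_t(b^i)-G(b^i)\le u_t(b^i)\le u_t(b^j)$, since $u_t$ is nondecreasing in $j$. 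Combine this with $j-i_0+1\le\sqrt{T}$.
\end{itemize}
This bounds the clipped term (iii) by $2u_t(b^j)$. Add your $2u_t$ from term (i) when $q=0$, or $u_t+u_t$ when $q=1$. The total is $4u_t(b^j)+2/\sqrt{T}$, which is exactly the budget.
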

\begin{proof}
We will present the proof for the first formulation $\br_{t,0}$, as the proof for the other formulation follows verbatim. Fix any $b^j\in\Bcal$. By definition, we have
\begin{align*}
\abs{\br_{t,0}(b^j) - \hr_{t,0}(b^j)} &= \abs*{G(b^j)\theta_*^\top x_t - G(b^j)b^j + \int_0^{b^j}G(m)\mathrm{d}m - \hG_t(b^j)\htheta_t^\top x_t + \hG(b^j)b^j - \frac{1}{\sqrt{T}}\sum_{i\le j}\hG_t(b^i)}\\
&\le \abs*{G(b^j)(\theta_*^\top x_t -b^j) - \hG_t(b^j)(\theta_*^\top x_t-b^j)} + \abs*{\hG_t(b^j)\theta_*^\top x_t - \hG_t(b^j)\htheta_t^\top x_t}\\
&\qquad + \abs*{\int_0^{b^j}G(m)\mathrm{d}m - \frac{1}{\sqrt{T}}\sum_{i\le j}G(b^i)} + \abs*{\frac{1}{\sqrt{T}}\sum_{i\le j}G(b^i) - \frac{1}{\sqrt{T}}\sum_{i\le j}\hG_t(b^i)}.
\end{align*}
To proceed, we address each term separately. First,
\begin{equation}\label{eq:reward_width_temp1}
\abs*{G(b^j)(\theta_*^\top x_t-b^j) - \hG_t(b^j)(\theta_*^\top x_t-b^j)} \le \abs{G(b^j) - \hG_t(b^j)}\le u_t(b^j)
\end{equation}
since $\abs{\theta_*^\top x_t-b^j}\le 1$. Similarly,
\begin{equation}\label{eq:reward_width_temp2}
\abs*{\hG_t(b^j)\theta_*^\top x_t - \hG_t(b^j)\htheta_t^\top x_t} = \hG_t(b^j)\abs*{\theta_*^\top x_t - \htheta_t^\top x_t} \le \hG_t(b^j)\gamma\|x_t\|_{A_t^{-1}}
\end{equation}
by Lemma \ref{lem:WLS}. The third term is bounded as
\begin{align*}
\abs*{\int_0^{b^j}G(m)\mathrm{d}m - \frac{1}{\sqrt{T}}\sum_{i\le j}G(b^i)} &= \frac{1}{\sqrt{T}}\sum_{i\le j}G(b^i) - \int_0^{b^j}G(m)\mathrm{d}m\\
&= \frac{1}{\sqrt{T}}\sum_{i\le j}G(b^i) - \sum_{i\le j}\int_{b^{i-1}}^{b^i}G(m)\mathrm{d}m\\
&\stepa{\le} \sum_{i\le j}\frac{G(b^{i}) - G(b^{i-1})}{\sqrt{T}}\\
&= \frac{G(b^j)}{\sqrt{T}} \le \frac{1}{\sqrt{T}}\numberthis\label{eq:reward_width_temp3}
\end{align*}
where (a) because $b^i - b^{i-1}=\frac{1}{\sqrt{T}}$. The last term is bounded by $u_t(b^j)$ by the statement assumption. Combining this with (\ref{eq:reward_width_temp1}--\ref{eq:reward_width_temp3}) and the definition of $w_{t,0}$ completes the proof.
\end{proof}

\begin{lemma}\label{lem:good_cdf_interval}
Let $c=\frac{\omega}{4}$, $\epsilon=\frac{1-\lambda}{8}$, and $[b_{\mathrm{L}},b_{\mathrm{R}}]$ be the interval found in Algorithm \ref{alg:ucb_selection}. It holds that
\begin{enumerate}
    \item[(1)] $\argmax_{b\in \Bcal}\br_t(b) \in [b_{\mathrm{L}},b_{\mathrm{R}}]$;

    \item[(2)] $\widehat{G}_t(b_{\mathrm{R}}) - \widehat{G}_t(b_{\mathrm{L}}) \le 1 - \epsilon$, for any $\epsilon\in(0,1)$, when 
    \[
    \abs*{\widehat{v}-\theta_*^\top x_t} + \argmax_{b^j\in\Bcal}\abs*{G(b^j)-\hG_t(b^j)} + \abs*{\int_0^{b^j}G(m)\mathrm{d}m - \frac{1}{\sqrt{T}}\sum_{i\le j}\hG_t(b^i)}\le \min\bra*{\frac{c\cdot \epsilon}{2}, \frac{1-4\epsilon-\lambda}{2}}.
    \]
\end{enumerate}
\end{lemma}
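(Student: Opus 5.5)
The plan is to treat both claims as comparative statics of the surrogate objective used in Algorithm \ref{alg:ucb_selection}. For $v\in\R$ set
\[
f_v(b^j) \coloneqq \widehat{G}_t(b^j)\,v - \frac{1}{\sqrt{T}}\sum_{i\le j}\widehat{G}_t(b^i),\qquad F_v(b^j)\coloneqq G(b^j)\,v-\int_0^{b^j}G(m)\mathrm{d}m .
\]
Then $b_\pm\in\argmax f_{\widehat v\pm c}$. Up to the additive constant $\E[v_{t,0}]$ and the payment convention inherited from $\hr_{t,0}$, the true reward is $F_{\theta_*^\top x_t}$. Let $\eta$ denote the error quantity on the left of the displayed condition in (2). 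By definition of $\eta$, $|f_v(b)-F_v(b)|\le \eta$ for every $b\in\Bcal$ and every $|v|\le 2$, since $|v|\,|G-\widehat G_t|$ plus the discretized-integral error is controlled by the three terms in $\eta$ (up to a factor $2$).

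For claim (1), let $b^*$ be the discretized maximizer of $F_{\theta_*^\top x_t}$. I compare it with $b_+$ through two inequalities: $f_{\widehat v+c}(b_+)\ge f_{\widehat v+c}(b^*)$, and $f_{\widehat v}(b^*)\ge f_{\widehat v}(b_+)-2\eta-2|\widehat v-\theta_*^\top x_t|$. The second follows by transferring the optimality of $b^*$ from $F_{\theta_*^\top x_t}$ to $f_{\widehat v}$. Subtracting the two cancels the payment terms and leaves
\[
\bigl(\widehat G_t(b^*)-\widehat G_t(b_+)\bigr)\cdot c\le O(\eta).
\]
Hence $\widehat G_t(b^*)\le \widehat G_t(b_+)+O(\eta)/c\le \widehat G_t(b_+)+c$, using $\eta\le c\epsilon/2$ (possibly after adjusting constants, e.g.\ $\epsilon\le c$ at the stated choices). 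The symmetric argument with $b_-$ gives $\widehat G_t(b^*)\ge \widehat G_t(b_-)-c$. Therefore $b^*\in S\subseteq[b_{\mathrm L},b_{\mathrm R}]$. This is the usual single-crossing argument: $\widehat G_t$ is nondecreasing, so maximizers of $f_v$ move monotonically in $\widehat G_t$-value as $v$ increases. If the boundary constant turns out too tight, I would fall back on that monotonicity directly.

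For claim (2), note first that $\widehat G_t$ is nondecreasing and $S$ is defined by $\widehat G_t$-values, so
\[
\widehat G_t(b_{\mathrm R})-\widehat G_t(b_{\mathrm L})\le \widehat G_t(b_+)-\widehat G_t(b_-)+2c\le G(b_+)-G(b_-)+2c+2\eta .
\]
It remains to show $G(b_+)-G(b_-)\le\lambda+O(c)$. The key structural fact is the SPA identity
\[
F_v(v')-F_v(b)=\int_{(b,v']}(v-m)\,\mathrm{d}G(m)\quad\text{for } b< v',
\]
and symmetrically for $b>v'$. So a bid far from the value $v$ loses the full $G$-mass lying between it and $v$, weighted by the distance to $v$. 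Since $b_+$ is a $2\eta$-near maximizer of $F_{\widehat v+c}$, the mass of $G$ on $(\widehat v+2c,\,b_+]$ is at most $2\eta/c$, and likewise below. Therefore $G(b_+)\le G(\widehat v+2c)+2\eta/c$ and $G(b_-)\ge G(\widehat v-2c)-2\eta/c$. Because $4c=\omega$, local boundedness (Definition \ref{def:local_bounded}) gives $G(\widehat v+2c)-G(\widehat v-2c)\le\lambda$. Collecting terms, the total is at most $\lambda+2c+2\eta+4\eta/c\le 1-\epsilon$, using both parts of the $\min$ in the hypothesis (the term $(1-4\epsilon-\lambda)/2$ absorbs the additive errors).

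The main obstacle is this last step. I need to make the near-optimality-implies-localization argument rigorous on the discrete grid $\Bcal$ and in the presence of atoms, where $G$ may jump at the boundary points $\widehat v\pm 2c$. I would handle this with the pseudo-inverse and half-open intervals, and by absorbing the $1/\sqrt T$ grid offsets into $\eta$.
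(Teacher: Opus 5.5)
Your argument for (1) is the paper's argument. You compare the true maximizer with $b_\pm$ through the value-perturbed surrogates, so the payment terms cancel and only $c\bigl(\widehat G_t(b^*)-\widehat G_t(b_\pm)\bigr)$ survives; the paper does this with $b_-$, you with $b_+$.

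One slip needs fixing. Your $f_v$ and $F_v$ copy the objective displayed in Algorithm~\ref{alg:ucb_selection}, which is missing the $-b^j$ term and has the wrong sign on the payment sum. With $F_v(b)=G(b)v-\int_0^b G$, it is false that $F_{\theta_*^\top x_t}$ is the true reward up to a constant, and your SPA identity also fails. Use $F_v(b)=G(b)(v-b)+\int_0^b G(m)\,\mathrm{d}m$ and $f_v(b^j)=\widehat G_t(b^j)(v-b^j)+\frac{1}{\sqrt T}\sum_{i\le j}\widehat G_t(b^i)$, as the paper's proof does. The subtraction step in (1) is unaffected by this change.

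\textbf{The real gap is the final bookkeeping in (2).} The inequality $\lambda+2c+2\eta+4\eta/c\le 1-\epsilon$ does not follow from the hypotheses. The term $2c=\omega/2$ is a fixed constant that $(1-4\epsilon-\lambda)/2$ cannot absorb, and with margin $c$ in $S$ the claim is actually false.

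Here is a counterexample. Take the CDF with density $\tfrac23,\tfrac32,\tfrac23$ on $[0,0.3]$, $(0.3,0.7]$, $(0.7,1]$. Every window of length $0.8$ carries mass $\tfrac{13}{15}$, so this CDF is $(0.8,0.9)$-locally bounded; hence $c=0.2$ and $\epsilon=0.0125$. Use exact estimates and $\widehat v=\theta_*^\top x_t=\tfrac12$; the hypothesis of (2) then holds for large $T$. Then $b_+\approx 0.7$ and $b_-\approx 0.3$, so $\widehat G_t(b_+)\approx 0.8$ and $\widehat G_t(b_-)\approx 0.2$. Margin $0.2$ makes $S$ essentially all of $\Bcal$, giving $\widehat G_t(b_{\mathrm R})-\widehat G_t(b_{\mathrm L})=1-O(T^{-1/2})>1-\epsilon$.

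The paper's proof effectively uses $\epsilon$ as the margin of $S$. It concludes $b^{j_*}\in S$ from $\widehat G_t(b^{j_*})\ge \widehat G_t(b_-)-\epsilon$, and it bounds the width by $1-4\epsilon+2\epsilon$. That is the reading under which the lemma holds. It also settles your hedge in (1): the hypothesis $\eta\le c\epsilon/2$ gives a shift of at most $\epsilon$, and $\epsilon\le c$ is not implied by $\epsilon=\frac{1-\lambda}{8}$, $c=\frac{\omega}{4}$ (take $\omega$ small). With margin $\epsilon$, your collection becomes $\lambda+2\epsilon+2\eta+4\eta/c\le \lambda+5\epsilon=1-3\epsilon$, and your argument closes.

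\textbf{Your route for (2) differs from the paper's.} The paper never transfers near-optimality to the true $G$. It observes that $b_\pm$ maximize an exact discretized second-price payoff under $\widehat G_t$ itself. The increments of that payoff are $(\widehat G_t(b^{j+1})-\widehat G_t(b^j))(v-b^j)$, so by truthfulness $b_\pm$ lie within $1/\sqrt T$ of $\widehat v\pm c$ for any nondecreasing $\widehat G_t$ (Lemma~\ref{lem:approx_CDF_UCB_Lip}). Hence $|b_+-b_-|<\omega$. Local boundedness plus $\sup|G-\widehat G_t|\le \frac{1-4\epsilon-\lambda}{2}$ then give $\widehat G_t(b_+)-\widehat G_t(b_-)\le 1-4\epsilon$ in one line, using only the second part of the $\min$.

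Your localization argument is also correct once the definitions are fixed. Near-optimality for $F_{\widehat v\pm c}$, combined with $F_v(v')-F_v(b)=\int_{(b,v']}(v-m)\,\mathrm{d}G(m)$, bounds the $G$-mass beyond $\widehat v\pm 2c$ by $2\eta/c$. It is more robust than the paper's route, since it survives if $b_\pm$ are only approximate maximizers of the surrogate. The cost is an extra $4\eta/c$ term, which is why you need the first part of the $\min$ as well. It also forces the grid and atom bookkeeping you flag as the main obstacle, which the paper's exact-truthfulness observation avoids entirely.
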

\begin{proof}
Recall the definitions in Algorithm \ref{alg:ucb_selection} that the perturbed bid optimizers are $b_{+} = \argmax_{b^j\in \Bcal} \widehat{G}_t(b)(\widehat{v} + c) - \frac{1}{\sqrt{T}}\sum_{i\le j}\hG_t(b^i)$ and $b_{-} = \argmax_{b^j\in \Bcal} \widehat{G}_t(b)(\widehat{v} - c) - \frac{1}{\sqrt{T}}\sum_{i\le j}\hG_t(b^i)$. The set is $S = \bra{b\in \Bcal: \widehat{G}_t(b_-) - c \le \widehat{G}_t(b) \le \widehat{G}_t(b_+) + c}$ and the final endpoints are $b_{\mathrm{L}}= \min S$, and $b_{\mathrm{R}}= \max S$. Write $b^{j_*} = \argmax_{b\in\Bcal}\br_t(b)$ and $v_t=\theta_*^\top x_t$ for simplicity. We have
{\small
\begin{align*}
&G_t(b^{j_*})(v_t-b^{j_*}) + \int_0^{b^{j_*}}G(m)\mathrm{d}m 
\le \widehat{G}_t(b^{j_*})(v_t-b^{j_*}) + \abs*{G(b^{j_*})(v_t-b^{j_*}) - \hG_t(b^{j_*})(v_t-b^{j_*})+\int_0^{b^{j_*}}G(m)\mathrm{d}m}\\
&\le \widehat{G}_t(b^{j_*})(v_t-b^{j_*}) + \frac{1}{\sqrt{T}}\sum_{i\le j_*}\hG_t(b^i) + \argmax_{b^j\in\Bcal}\abs*{G(b^j)-\hG_t(b^j)}+\abs*{\int_0^{b^{j}}G(m)\mathrm{d}m-\frac{1}{\sqrt{T}}\sum_{i\le j}\hG_t(b^i)}\\
&\le \widehat{G}_t(b^{j_*})(\widehat{v}_t - b^{j_*}) + \frac{1}{\sqrt{T}}\sum_{i\le j_*}\hG_t(b^i) + \abs{\widehat{v}_t-v_t}+ \argmax_{b^j\in\Bcal}\abs*{G(b^j)-\hG_t(b^j)}+\abs*{\int_0^{b^{j}}G(m)\mathrm{d}m-\frac{1}{\sqrt{T}}\sum_{i\le j}\hG_t(b^i)}\\
&\le \widehat{G}_t(b^{j_*})(\widehat{v}_t-c-b^{j_*}) + \frac{1}{\sqrt{T}}\sum_{i\le j_*}\hG_t(b^i) + \abs{\widehat{v}_t-v_t}+ \argmax_{b^j\in\Bcal}\abs*{G(b^j)-\hG_t(b^j)}+\abs*{\int_0^{b^{j}}G(m)\mathrm{d}m-\frac{1}{\sqrt{T}}\sum_{i\le j}\hG_t(b^i)} +c\widehat{G}_t(b^{j_*})\\
&\le \widehat{G}_t(b_-)(\widehat{v}_t-c-b_-) + \frac{1}{\sqrt{T}}\sum_{i\le j(b_-)}\hG_t(b^i) + \abs{\widehat{v}_t-v_t}+ \argmax_{b^j\in\Bcal}\abs*{G(b^j)-\hG_t(b^j)}+\abs*{\int_0^{b^{j}}G(m)\mathrm{d}m-\frac{1}{\sqrt{T}}\sum_{i\le j}\hG_t(b^i)} +c\widehat{G}_t(b^{j_*})\\
&\le G(b_-)(\widehat{v}_t -b_-) + \int_0^{b_-}G(m)\mathrm{d}m + \abs{\widehat{v}_t-v_t}+ \argmax_{b^j\in\Bcal}2\abs*{G(b^j)-\hG_t(b^j)}+2\abs*{\int_0^{b^{j}}G(m)\mathrm{d}m-\frac{1}{\sqrt{T}}\sum_{i\le j}\hG_t(b^i)} +c\parr{\widehat{G}_t(b^{j_*})-\widehat{G}_t(b_-)}\\
&\le G(b_-)(v_t - b_-)+ \int_0^{b_-}G(m)\mathrm{d}m + 2\abs{\widehat{v}_t-v_t}+ \argmax_{b^j\in\Bcal}2\abs*{G(b^j)-\hG_t(b^j)}+2\abs*{\int_0^{b^{j}}G(m)\mathrm{d}m-\frac{1}{\sqrt{T}}\sum_{i\le j}\hG_t(b^i)} +c\parr{\widehat{G}_t(b^{j_*})-\widehat{G}_t(b_-)}
\end{align*}
}
where we denote the index of the bid $b_-\in\Bcal$ as $b_-=b^{j(b_-)}$.
Since $G_t(b_-)(v_t -b_-)+ \int_0^{b_-}G(m)\mathrm{d}m\le G_t(b^{j_*})(v_t - b^{j_*})+ \int_0^{b^{j_*}}G(m)\mathrm{d}m$, this implies
\begin{align*}
\widehat{G}_t(b^{j_*}) &\ge \widehat{G}_t(b_-) - \frac{2}{c}\parr*{\abs*{\widehat{v}-v_t} +\argmax_{b^j\in\Bcal}\abs*{G(b^j)-\hG_t(b^j)}+\abs*{\int_0^{b^{j}}G(m)\mathrm{d}m-\frac{1}{\sqrt{T}}\sum_{i\le j}\hG_t(b^i)} } \\
&\ge \widehat{G}_t(b_-) - \epsilon
\end{align*}
where the second inequality comes from the lemma assumption. The other direction holds under a symmetric argument, giving $\widehat{G}_t(b^{j_*})\le \widehat{G}_t(b_+)+\epsilon$. Therefore, $b^{j_*}\in[b_{\mathrm{L}},b_{\mathrm{R}}]$ by definition. 

For the second claim, since $\Bcal\subseteq[0,1]$ is a $\frac{1}{\sqrt{T}}-$discretization, Lemma \ref{lem:approx_CDF_UCB_Lip} implies $\widehat{G}_t(b_+) - \widehat{G}_t(b_-)\le 1-4\epsilon$. Finally, we have 
\[
\widehat{G}_t(b_{\mathrm{R}}) - \widehat{G}_t(b_{\mathrm{L}}) \le 1-4\epsilon + 2\epsilon \le 1-\epsilon
\]
as desired.
\end{proof}

\begin{lemma}\label{lem:valid_smaller_width}
Suppose the conditions in Lemma \ref{lem:better_width} hold. For the index $q\in\bra{0,1}$ returned by Algorithm \ref{alg:ucb_selection} with perturbation $c=\frac{\omega}{4}$, it holds that $w_{t,q}(b) = \min\bra{w_{t,0}(b),w_{t,1}(b)}$ for every bid $b\in  [b_{\mathrm{L}}, b_{\mathrm{R}}]$, and
$\argmax_{b\in\Bcal}\br_t(b)\in  [b_{\mathrm{L}}, b_{\mathrm{R}}]$.
\end{lemma}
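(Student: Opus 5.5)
The plan is to reduce the equality $w_{t,q}(b)=\min\{w_{t,0}(b),w_{t,1}(b)\}$ to a sign condition on $2\widehat{G}_t(b)-1$ over the returned interval. The membership claim $\argmax_{b\in\Bcal}\br_t(b)\in[b_{\mathrm{L}},b_{\mathrm{R}}]$ I would take directly from part (1) of Lemma~\ref{lem:good_cdf_interval}, whose proof uses only the perturbation $c=\frac{\omega}{4}$ and the error condition. To apply it, I would check that its smallness hypothesis follows from the width condition in Lemma~\ref{lem:better_width}:
\[
\sup_{b,q}\tfrac{1-\lambda}{8}w_{t,q}(b)\le C(\lambda,\omega)=\tfrac{c}{2}\cdot\tfrac{1-\lambda}{8}.
\]
This works through Lemma~\ref{lem:valid_reward_width} together with Lemmas~\ref{lem:hob_est} and \ref{lem:WLS}. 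Each of the three error terms ($|\widehat{v}-\theta_*^\top x_t|$, the CDF error, and the discretized-integral error) is bounded by a piece of $\frac{1-\lambda}{8}w_{t,q}$, so their sum is at most $C(\lambda,\omega)\le \min\{\frac{c\epsilon}{2},\frac{1-4\epsilon-\lambda}{2}\}$ for $\epsilon=\frac{1-\lambda}{8}$.

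For the width claim, the definitions in Algorithm~\ref{alg:base_alg_te} give
\[
w_{t,0}(b)-w_{t,1}(b)=\tfrac{8}{1-\lambda}\bigl(2\widehat{G}_t(b)-1\bigr)\gamma\|x_t\|_{A_t^{-1}},
\]
because the terms $4u_t(b)+2/\sqrt{T}$ cancel. So if $\gamma\|x_t\|_{A_t^{-1}}=0$ there is nothing to prove. Otherwise it suffices to show two things: when $q=0$, $\widehat{G}_t(b)\le\frac12$ on $[b_{\mathrm{L}},b_{\mathrm{R}}]$; and when $q=1$, $\widehat{G}_t(b)\ge\frac12$ there. I would use that $\widehat{G}_t$ is nondecreasing in $j$, since it is a cumulative sum of the nonnegative $\widehat{p}^i_t$. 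This reduces both cases to the endpoints: show $\widehat{G}_t(b_{\mathrm{R}})\le \frac12$ when $q=0$, and $\widehat{G}_t(b_{\mathrm{L}})\ge\frac12$ when $q=1$.

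\textbf{Case $q=0$.} Here $\widehat{G}_t(b_{\mathrm{L}})<\epsilon$. Part (2) of Lemma~\ref{lem:good_cdf_interval} then gives $\widehat{G}_t(b_{\mathrm{R}})\le \widehat{G}_t(b_{\mathrm{L}})+1-\epsilon<1$.

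\textbf{Case $q=1$.} Symmetrically, $\widehat{G}_t(b_{\mathrm{L}})\ge\epsilon$.

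\textbf{Main obstacle.} Both cases only give thresholds $\epsilon$ and $1$, not $\frac12$. I expect upgrading these to the $\frac12$ threshold to be the crux. My first attempt would be to go back into the construction of $S$ in Algorithm~\ref{alg:ucb_selection} and use Lemma~\ref{lem:approx_CDF_UCB_Lip} more sharply. That lemma controls $\widehat{G}_t(b_+)-\widehat{G}_t(b_-)$ through the $(\omega,\lambda)$-local boundedness. Since $b_\pm$ are maximizers of the perturbed objectives with values $\widehat v\pm c$, I would try to show that the whole range of $\widehat{G}_t$ on $S$ falls on one side of $\frac12$, according to whether $\widehat{G}_t(b_{\mathrm{L}})$ is below or above $\epsilon$.

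If that sharpening fails, the fallback is a direct comparison. When $\widehat{G}_t(b)$ lies on the wrong side of $\frac12$ but within the range permitted by part (2), I would use the constraint $w_{t,q}\le \frac{8C(\lambda,\omega)}{1-\lambda}$ to bound the discrepancy $|w_{t,0}-w_{t,1}|$ and show that it vanishes in that regime. This is the step I would scrutinize most carefully.
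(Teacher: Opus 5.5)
Your reduction is correct, and it shows exactly where the plan breaks. We have $w_{t,0}(b)-w_{t,1}(b)=\frac{8}{1-\lambda}\bigl(2\widehat{G}_t(b)-1\bigr)\gamma\|x_t\|_{A_t^{-1}}$, and $\gamma\|x_t\|_{A_t^{-1}}>0$ whenever $x_t\neq 0$. So the claimed equality forces $\widehat{G}_t\ge\frac12$ on all of $[b_{\mathrm{L}},b_{\mathrm{R}}]$ whenever $q=1$, i.e.\ whenever merely $\widehat{G}_t(b_{\mathrm{L}})\ge\epsilon=\frac{1-\lambda}{8}$. That is false in general. Take $G$ uniform on $[0,1]$ with $\omega=\lambda=\frac{1}{10}$ (so $c=\frac{1}{40}$ and $\epsilon=0.1125$), $\widehat{G}_t\approx G$ on $\Bcal$, and $\widehat{v}=\theta_*^\top x_t=0.4$. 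Then $b_\pm\approx 0.4\pm c$, $[b_{\mathrm{L}},b_{\mathrm{R}}]\approx[0.35,0.45]$, and $q=1$. Yet $\widehat{G}_t<\frac12$ on the whole interval, so $w_{t,1}(b)>w_{t,0}(b)$ there. The width hypothesis of Lemma~\ref{lem:better_width} still holds once $u_t$ and $\|x_t\|_{A_t^{-1}}$ are small.

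Hence neither of your routes can resolve the ``main obstacle.'' No sharpening of Lemma~\ref{lem:approx_CDF_UCB_Lip} can place the range of $\widehat{G}_t$ on $S$ on the correct side of $\frac12$ when $q$ is decided by an $\epsilon$-threshold. Your fallback also fails: the width constraint bounds $|w_{t,0}-w_{t,1}|$ by a positive constant but never makes it vanish. The equality in the statement is an overstatement.

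What the paper's proof actually establishes is $\epsilon\, w_{t,q}(b)\le\min\{w_{t,0}(b),w_{t,1}(b)\}$ on $[b_{\mathrm{L}},b_{\mathrm{R}}]$. This is also all that Lemma~\ref{lem:better_width} and the regret analysis need, since only a constant factor $1/\epsilon$ is lost. It follows from the thresholds you already have, with one sharpening in the case $q=0$: you need $\widehat{G}_t(b_{\mathrm{R}})<1-\epsilon$, not just $<1$. This comes from the bound $\widehat{G}_t(b_{\mathrm{R}})-\widehat{G}_t(b_{\mathrm{L}})\le 1-2\epsilon$ obtained inside the proof of Lemma~\ref{lem:good_cdf_interval}. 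Monotonicity then gives $\widehat{G}_t(b)<1-\epsilon\le\frac{1}{1+\epsilon}$, i.e.\ $\epsilon\widehat{G}_t(b)\le 1-\widehat{G}_t(b)$, hence $\epsilon\, w_{t,0}(b)\le w_{t,1}(b)$. The common terms $4u_t(b)+\frac{2}{\sqrt{T}}$ are harmless since $\epsilon\le 1$. For $q=1$, $\widehat{G}_t(b)\ge\epsilon\ge\epsilon(1-\widehat{G}_t(b))$ gives $\epsilon\, w_{t,1}(b)\le w_{t,0}(b)$. Your bound $\widehat{G}_t(b_{\mathrm{R}})<1$ alone would not give even this constant-factor comparison: near $\widehat{G}_t=1$, the width $w_{t,1}$ loses its $\gamma\|x_t\|_{A_t^{-1}}$ term while $w_{t,0}$ keeps it. Your handling of the membership claim through part (1) of Lemma~\ref{lem:good_cdf_interval} matches the paper.
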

\begin{proof}
Recall that $q=\indic[\hG_t(b_{\mathrm{L}})\ge \epsilon]$ with $\epsilon=\frac{1-\lambda}{8}$. By Lemma \ref{lem:good_cdf_interval}, it holds that $\argmax_{b\in\Bcal}\br_t(b)\in[b_{\mathrm{L}}, b_{\mathrm{R}}]$ and $\widehat{G}_t(b_{\mathrm{R}}) - \widehat{G}_t(b_{\mathrm{L}}) \le 1-2\epsilon$. Consider a case study. Suppose $q=0$ and $\hG_t(b_{\mathrm{L}})<\epsilon$, which implies $\hG_t(b_{\mathrm{R}})<1-\epsilon$. Then for every $b\in[b_{\mathrm{L}},b_{\mathrm{R}}]$,
\[
\hG_t(b) < 1-\epsilon \implies \frac{1-\hG_t(b)}{\epsilon} > 1 > \hG_t(b).
\]
Plugging in this to the definitions of $w_{t,0}$ and $w_{t,1}$ in Algorithm \ref{alg:base_alg_te} shows
\[
\epsilon\cdot w_{t,0}(b) \le \min\bra{w_{t,0}(b),w_{t,1}(b)}.
\]
A similar argument for $q=1$ completes the proof.
\end{proof}

\section{Proof of Lemma \ref{lem:bounded_explore}}
\begin{proof}
Fix any level $\ell\in[L]$ for now and suppress the superscript $(\ell)$ for the ease of notation. Recall that $\max\bra{w_{t,0}(b), w_{t,1}(b)} \le \frac{8}{1-\lambda}\parr*{\gamma\|x_t\|_{A_t^{-1}} + 4u_t(b) + \frac{2}{\sqrt{T}}}$. It suffices to bound the number of times (for this level) when $\gamma\|x_t\|_{A_t^{-1}} > \frac{4}{1-\lambda}C(\lambda,\omega)$ or $4u_t(b) > \frac{4}{1-\lambda}C(\lambda,\omega)$. For simplicity, let constant $c_0\coloneqq \frac{4}{1-\lambda}C(\lambda,\omega)$ and $\Phi_{\textnormal{exp}}(\ell)\subseteq \Phi_{\textnormal{exp}}$ be the exploration times attributed to this level.

First, we consider the linear term $\gamma\|x_t\|_{A_t^{-1}}$ and let 
\[
\Phi^1_{\textnormal{exp}}(\ell) \coloneqq \sum_{t\in \Phi_{\textnormal{exp}}(\ell)}\indic\parq*{\gamma\|x_t\|_{A_t^{-1}} > c_0}.
\]
Since $\gamma=1+14\log T + 4\sqrt{\sum_{\tau\in\Phi_t u_\tau^2}}$ is an increasing coefficient in time, consider the final coefficient $\gamma_T$ at the last time. By \eqref{eq:gamma_T_bound}, $\gamma \le \gamma_T = O(\log^{\frac{3}{2}}T).$ We have
\[
c_0|\Phi^1_{\textnormal{exp}}(\ell)| < \sum_{t\in \Phi_{\textnormal{exp}}(\ell)}\gamma_t\|x_t\|_{A_t^{-1}} \le \gamma_T\sum_{t\in \Phi_{\textnormal{exp}}(\ell)}\|x_t\|_{A_t^{-1}} \overset{(a)}{=} 4\gamma_T\sum_{t\in \Phi_{\textnormal{exp}}(\ell)}\|\sigma_t^{-1}x_t\|_{A_t^{-1}} \overset{(b)}{=} O\parr*{\sqrt{d|\Phi^1_{\textnormal{exp}}(\ell)|}\log^2 T},
\]
where (a) follows from the definition in \eqref{eq:sig_t_explore}, and (b) from Lemma \ref{lem:matrix_ellip_potential}. This gives $|\Phi^1_{\textnormal{exp}}(\ell)| = O\parr*{d\log^4 T}$.

Next, since $u_t(b^i) \le u_t(b^j)$ for bids $i<j$, we consider
\[
\Phi^2_{\textnormal{exp},t}(\ell) \coloneqq \sum_{\tau\in \Phi_{\textnormal{exp}}(\ell):\tau<t}\indic\parq*{u_\tau(b^J) > c_0}
\]
where $b^J=1$ is the largest bid. Note that $u_\tau(b^J) > c_0$ implies $n_\tau^J < c_0'\log T$ for some other constant $c_0'=O(c_0^2)$, by definition in \eqref{eq:cdf_width}. Since the exploration was uniform over $b^1=0$ and $b^J=1$, at every time $t$, the standard Chernoff bound gives us
\[
n_t^J \ge \frac{|\Phi^2_{\textnormal{exp},t}(\ell)|}{2} - 2\sqrt{|\Phi^2_{\textnormal{exp},t}(\ell)|\log T}
\]
with probability at least $1-T^{-4}$. By a union bound, this event holds for all $t\in[T]$ with probability $1-T^{-3}$. Suppose $|\Phi^2_{\textnormal{exp},t}(\ell)| \ge \max\bra{64\log T, 4c_0'\log T}$ at some $t$, which implies $n_t^J \ge \frac{|\Phi^2_{\textnormal{exp},t}(\ell)|}{4}$ from the above high-probability bound. Then we also have $n_t^J \ge c_0'\log T$, so $u_t(b^J)\le c_0$ and subsequently $|\Phi^2_{\textnormal{exp},t}(\ell)|=\cdots=|\Phi^2_{\textnormal{exp},{T+1}}(\ell)|$ for all future time (since $n_t^J$ is increasing in time). Finally, as $|\Phi^2_{\textnormal{exp},t}(\ell)|$ increments by $1$, it holds that $|\Phi^2_{\textnormal{exp}}(\ell)| \le \max\bra{64\log T, 4c_0'\log T}=O(\log T)$.

The desired bound then follows from
\[
|\Phi_{\textnormal{exp}}| = \sum_{\ell=1}^L|\Phi_{\textnormal{exp}}(\ell)| \le \sum_{\ell=1}^L\parr*{|\Phi^1_{\textnormal{exp}}(\ell)| + |\Phi^2_{\textnormal{exp}}(\ell)|} = O(L\cdot d\log^4T) = O(d\log^5T).
\]
\end{proof}

\section{Proof of Lemma \ref{lem:suboptimality_level}}
\begin{proof}
This result is proved via an inductive argument over the levels $\ell\in[L]$. For the sake of clarity, the inductive hypothesis is: $\hb_t^*\in B_\ell$ and $\br_t(\hb_t^*) - \hr_t(b) \le 8\cdot 2^{-\ell}$. For $\ell=1$, the claim trivially holds by the boundedness of $\br_t$. Now suppose the claim holds up to the level $\ell-1$.

For clarity, let $q_\ell\in\bra{0,1}$ denote the UCB index returned by Algorithm \ref{alg:base_alg_te} at level $\ell$. When Algorithm \ref{alg:master_alg_te} reached the elimination step in Line 22 at level $\ell-1$, it holds that
\[
w^{(\ell-1)}_{t,q_{\ell-1}}(b) \le 2^{-(\ell-1)} \text{ for every $b\in B_{\ell-1}$}.
\]
Since $\hb_t^*\in B_{\ell-1}$, by Lemma \ref{lem:valid_reward_width}, it holds that
\[
\hr_{t,q_{\ell-1}}^{(\ell-1)}(\hb_t^*) \ge \br_{t,q_{\ell-1}}(\hb_t^*) - 2^{-(\ell-1)} \ge \br_{t,q_{\ell-1}}(b) - 2^{-(\ell-1)} \ge \hr_{t,q_{\ell-1}}^{(\ell-1)}(b) - 2\cdot 2^{-(\ell-1)}.
\]
Thanks to the exploration condition in Line 13 of Algorithm \ref{alg:master_alg_te}, the confidence width condition in Lemma \ref{lem:good_cdf_interval} (2) holds. Then together with Lemma \ref{lem:good_cdf_interval} that $\hb_t^*\in\Ical_t^{(\ell-1)}$, this implies that $\hb_t^*$ is not eliminated, i.e. $\hb_t^*\in B_\ell$. To see the second claim in the hypothesis, note that for every $b\in B_\ell$,
\begin{align*}
\br_t(\hb_t^*) - \br_t(b) &= \br_{t,q_{\ell-1}}(\hb_t^*) - \br_{t,q_{\ell-1}}(b)\\
&\le \hr_{t,q_{\ell-1}}(\hb_t^*)- \hr_{t,q_{\ell-1}}(b) + w^{(\ell-1)}_{t,q_{\ell-1}}(\hb_t^*) + w^{(\ell-1)}_{t,q_{\ell-1}}(b)\\
&\le \hr_{t,q_{\ell-1}}(\hb_t^*)- \hr_{t,q_{\ell-1}}(b) + 2\cdot 2^{-(\ell-1)}\\
&\stepa{\le} 4\cdot 2^{-(\ell-1)} = 8 \cdot 2^{-\ell}
\end{align*}
where (a) follows from the elimination criterion for $B_\ell$ that 
\[
\max_{b'\in B_{\ell-1}}\hr_{t,q_{\ell-1}}(b') - 2\cdot 2^{-(\ell-1)}\le \hr_{t,q_{\ell-1}}(b) \le \max_{b'\in B_{\ell-1}}\hr_{t,q_{\ell-1}}(b')
\]
for every $b\in B_\ell$ and, in particular, $\hb_t^*$.
\end{proof}

\section{Proof of Lemma \ref{lem:ellip_potential}}
\begin{proof}
Let $b_t=b^{j_t}\in\Bcal$. Fix any level $\ell\in[L]$ and we suppress the notation on $\ell$ for the notations (e.g. $\Phi_t^{(\ell)}$, $u_t^{(\ell)}$, $\gamma$, $A_t$) throughout the remaining proof. Let $\gamma_t$ denote the parameter $\gamma = 1 + 14\log T + 4\sqrt{\sum_{\tau\in\Phi_t} u_\tau^2}$ in Algorithm \ref{alg:base_alg_te}. Recall from the definitions that
\begin{equation}\label{eq:sum_potential_temp0}
\sum_{\tau\in\Phi_t}\min\bra{w_{t,0}(b_t), w_{t,1}(b_t)} = \frac{8}{1-\lambda}\sum_{\tau\in\Phi_t}\parr*{\min\bra{\hG_t(b_t), 1-\hG_t(b_t)}\gamma_t\|x_t\|_{A_t^{-1}} + 4u_t(b_t) + \frac{2}{\sqrt{T}}}.
\end{equation}
To proceed, we consider the sum over each term respectively. First, recall that $A_t = I + \sum_{\tau\in\Phi_t}\sigma_\tau^{-2}x_\tau x_\tau^\top$ where $\sigma_\tau^{-1} = \hG_\tau(b_\tau)(1-\hG_\tau(b_\tau))$. Then 
\begin{align*}
\sum_{t\in\Phi_{T+1}}\min\bra{\hG_t(b_t), 1-\hG_t(b_t)}\gamma_t\|x_t\|_{A_t^{-1}} &\le \sum_{t\in\Phi_{T+1}}2\hG_t(b_t)(1-\hG_t(b_t))\gamma_T\|x_t\|_{A_t^{-1}}\\
&= 2\sum_{t\in\Phi_{T+1}}\gamma_T\|\sigma_t^{-1}x_t\|_{A_t^{-1}}\\
&\stepa{\le} \gamma_T\sqrt{8d\abs{\Phi_{T+1}}\log\parr*{1 + \frac{\abs{\Phi_{T+1}-1}}{d}}}\\
&\stepb{\le} \sqrt{8dT}\log^2 T \numberthis\label{eq:sum_potential_temp1}
\end{align*}
where (a) follows Lemma \ref{lem:matrix_ellip_potential}. To see (b), recall that the initialization guarantees the number of observations $n^j_t \ge \sqrt{T}\log T$ for all $t\in\Phi_{T+1}$. By definition in \eqref{eq:cdf_width}, it holds that
\begin{align*}
\gamma_T &= 1 + 14\log T + 4\sqrt{\sum_{t\in\Phi_{T+1}} u_t^2}\\
&\stepc{=} O\parr*{\log T + \sqrt{\sum_{t\in\Phi_{T+1}}\sum_{k\le j_t}\frac{\log T}{n_t^k}\parr*{\hp_0^k + \frac{\log T}{\sqrt{T}}} + \sum_{t\in\Phi_{T+1}}\frac{\log^2 T}{(n_t^{j_t})^2}}}\\
&\stepd{\le} O\parr*{\log T + \sqrt{\sum_{t\in\Phi_{T+1}}\sum_{k\le j_t}\frac{\log T}{n_t^k}\parr*{p^k + \frac{\log T}{\sqrt{T}}} + \sum_{t\in\Phi_{T+1}}\frac{\log^2 T}{T\log^2T}}}\\
&\le O\parr*{\log T + \sqrt{1+\sum_{t\in\Phi_{T+1}}\sum_{k\le j_t}\frac{\log T}{n_t^k}\parr*{p^k + \frac{\log T}{\sqrt{T}}}}}\\
&\stepe{\le} O\parr*{\log T + \sqrt{1+\log^3 T}} = O(\log^{\frac{3}{2}} T) \numberthis\label{eq:gamma_T_bound}
\end{align*}
where (c) applies the elementary inequality $(a+b)^2\le 2a^2+2b^2$, (d) applies \eqref{eq:hp_le_p}, and (e) invokes Lemma \ref{lem:weighted_sum_han25} with $v_j= p^j + \log T/\sqrt{T}$. The second term is bounded similarly: by Cauchy-Schwartz inequality,
\begin{equation}\label{eq:sum_potential_temp2}
\sum_{t\in\Phi_{T+1}}u_t(b_t) \le \sqrt{T}\sqrt{\sum_{t\in\Phi_{T+1}}u_t(b_t)^2} \le \gamma_T\sqrt{T} = O(\sqrt{T}\log^{\frac{3}{2}} T).
\end{equation}
Plugging \eqref{eq:sum_potential_temp1} and \eqref{eq:sum_potential_temp2} into \eqref{eq:sum_potential_temp0} completes the proof.
\end{proof}

\section{Proof of Lower Bound}

In this section, we prove the lower bound in Theorem \ref{thm:main_lower}. At a high level, we divide the horizon equally into $d$ equal subhorizons and embed an independent lower bound instance to each of them. First, we have an existing non-contextual lower bound (also see \cite{weed2016online}):

\begin{theorem}\label{thm:lower_noncontext}
Let $v_{t,0}\equiv 0$ and i.i.d. HOB that follows a known distribution:
\[
m_t\sim \frac{1}{2}\mathrm{Unif}[0,1] + \frac{1}{2}\delta_{\frac{1}{4}+\Delta}
\]
where $\delta_m$ denotes the point mass at $m_t=m$ and $\Delta=\frac{1}{4\sqrt{T}}$. Consider a special family of instances where $v_{t,1}\sim\mathrm{Bern}(\mu)$ for an unknown mean $\mu$. Then
\[
\inf_\pi \sup_{\mu\in[0,1]}\E\parq*{\sum_{t=1}^T\max_{b^*\in[0,1]}\br_t(b^*) - \br_t(b_t)} = \Omega(\sqrt{T}).
\]
\end{theorem}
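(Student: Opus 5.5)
The plan is a standard two-point (Le Cam) argument combined with the Bretagnolle--Huber inequality \citep{bretagnolle1978estimation}. With $v_{t,0}\equiv 0$ and known $G$, the expected payoff is $\br_t(b) = G(b)\mu - \int_0^b m\,\mathrm{d}G(m)$. Hence for any $b$,
\[
\br_t(\mu)-\br_t(b) = \int_{(b,\mu]}(\mu-m)\,\mathrm{d}G(m) \ge 0,
\]
with the obvious orientation when $b>\mu$. So the truthful bid $b^*=\mu$ is optimal. Since every contribution to this integral is nonnegative, I can lower bound the regret by the contribution of the atom alone.

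\textbf{Two hypotheses.} Let $m^\circ=\frac14+\Delta$ be the atom location, and take $\mu_0=\frac14$ and $\mu_1=\frac14+2\Delta$, so that $\mu_0<m^\circ<\mu_1$. Define the event $E_t=\{b_t\ge m^\circ\}$, i.e.\ the bidder wins the atom.
\begin{itemize}
\item Under $\mu_0$, on $E_t$ the bidder wins the atom at a loss $m^\circ-\mu_0=\Delta$, weighted by mass $\frac12$. So the instantaneous regret is at least $\frac{\Delta}{2}\indic[E_t]$.
\item Under $\mu_1$, on $E_t^c$ the bidder forgoes the atom gain $\mu_1-m^\circ=\Delta$. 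So the instantaneous regret is at least $\frac{\Delta}{2}\indic[E_t^c]$.
\end{itemize}
Therefore, writing $P_0,P_1$ for the laws of the whole interaction under $\mu_0,\mu_1$,
\[
R_0+R_1 \ge \frac{\Delta}{2}\sum_{t=1}^T\bigl(P_0(E_t)+P_1(E_t^c)\bigr).
\]
Bretagnolle--Huber then gives $P_0(E_t)+P_1(E_t^c)\ge \frac12\exp(-\mathrm{KL}(P_0\|P_1))$ for each $t$.

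\textbf{KL bound.} This is the only place needing care. To make the lower bound valid even under full information, I will let the learner additionally observe $m_t$ every round. The law of $m_t$ does not depend on $\mu$, so by the chain rule the only $\mu$-dependent observations are the Bernoulli outcomes $v_{t,1}$. These are observed at most once per round (I may even grant that they are always observed). Since $\mu_0,\mu_1\in[\frac14,\frac12]$ are bounded away from $0$ and $1$,
\[
\mathrm{KL}(P_0\|P_1) \le T\cdot \mathrm{KL}(\mathrm{Bern}(\mu_0)\|\mathrm{Bern}(\mu_1)) = O(T\Delta^2) = O(1),
\]
using $\Delta=\frac{1}{4\sqrt{T}}$.

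\textbf{Conclusion and remaining checks.} Combining the displays gives
\[
\max\{R_0,R_1\}\ge \tfrac12(R_0+R_1)\ge c\,T\Delta=\Omega(\sqrt{T}).
\]
Finally, I would check that this instance is admissible. The values lie in $[0,1]$. The CDF $G$ is $(\omega,\lambda)$-locally-bounded with, e.g., $\lambda=\frac12+\frac{\omega}{2}<1$, since any interval of length $\omega$ carries at most the atom mass $\frac12$ plus uniform mass $\frac{\omega}{2}$. Assumption \ref{assum:linear_model} holds with $d=1$, $x_t\equiv1$, and $\theta_*=\mu$.
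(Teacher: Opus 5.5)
Your proposal is correct and is essentially the paper's proof: the same two hypotheses $\mu=\frac14$ and $\mu=\frac14+2\Delta$ straddling the atom at $\frac14+\Delta$, a per-round separation of $\Delta/2$ coming from the atom's mass, the Bretagnolle--Huber inequality, and a KL bound of order $T\Delta^2=O(1)$. The only differences are minor and slightly tidier: you apply Bretagnolle--Huber directly to the event $\{b_t\ge \frac14+\Delta\}$, which covers all bids without the paper's ``without loss of generality $b_t\in[\mu_1,\mu_2]$'' reduction, and you explicitly justify why only the Bernoulli outcomes carry information about $\mu$.
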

\begin{proof}
This proof will use Le Cam’s two-point lower bound. Let $\mu_1 = \frac{1}{4}$ and $\mu_2 = \frac{1}{4}+2\Delta$. Let $\br_t^{(i)}(b) = G(b)(\mu_i - b) + \int_0^b G(m)\mathrm{d}m$ be the expected payoff for each setting $i=1,2$. Clearly, the maximizing bids are
\[
\argmax_{b\in[0,1]}\br_t^{(1)}(b) = \mu_1 \text{ and } \argmax_{b\in[0,1]}\br_t^{(2)}(b) = \mu_2
\]
respectively \citep{vickrey1961counterspeculation}. Now we show that no bid can perform well under \textit{both} settings. Without loss of generality, we may focus on bids $b_t\in [\mu_1,\mu_2] = \parq*{\frac{1}{4}, \frac{1}{4}+2\Delta}$. For any bid $\frac{1}{4}\le b_t< \frac{1}{4} + \Delta$, we have
\begin{align*}
\br_t^{(1)}(\mu_1) - \br_t^{(1)}(b_t) + \br_t^{(2)}(\mu_2) - \br_t^{(2)}(b_t) & \ge \br_t^{(2)}(\mu_2) - \br_t^{(2)}(b_t)\\
&= \int_0^{\mu_2}G(m)\mathrm{d}m - G(b_t)(\mu_2 - b_t) - \int_0^{b_t}G(m)\mathrm{d}m\\
&= \int_{b_t}^{\mu_2}(G(m) - G(b_t))\mathrm{d}m\\
&\ge \Delta \cdot \frac{1}{2} = \frac{\Delta}{2}\numberthis\label{eq:lower_bound_separation1}
\end{align*}
where the last line follows from the construction of $G$ and that $b_t < \frac{1}{4}+\Delta \le \mu_2-\Delta$. Similarly, if $\frac{1}{4}+\Delta\le b_t\le \frac{1}{4} + 2\Delta$, then
\begin{align*}
\br_t^{(1)}(\mu_1) - \br_t^{(1)}(b_t) + \br_t^{(2)}(\mu_2) - \br_t^{(2)}(b_t) & \ge \br_t^{(1)}(\mu_1) - \br_t^{(1)}(b_t)\\
&= \int_0^{\mu_1}G(m)\mathrm{d}m - G(b_t)(\mu_1 - b_t) - \int_0^{b_t}G(m)\mathrm{d}m\\
&= \int_{\mu_1}^{b_t}(G(b_t)-G(m))\mathrm{d}m\\
&\ge \Delta \cdot \frac{1}{2} = \frac{\Delta}{2}.\numberthis\label{eq:lower_bound_separation2}
\end{align*}
Now fix any policy $\pi$. For each environment defined by $\mu_i$ with $i=1,2$, the interaction with $\pi$ gives rise to the distribution $\mathbb{P}^{\otimes T}_i$ over the horizon $T$. We denote the environment-specific regret as
\[
\reg_i(\pi) = \E_{\mathbb{P}^{\otimes T}_i}\parq*{\sum_{t=1}^T\max_{b^*\in[0,1]}\br_t^{(i)}(b^*) - \br_t^{(i)}(b_t)}
\]
where the bids $b_t$ are chosen by policy $\pi$. Standard Le Cam analysis leads to:
\begin{align*}
\reg_1(\pi) + \reg_2(\pi) &= \sum_{t=1}^{T}\mathbb{P}_{1}^{\otimes t}\parr*{\max_{b^*\in[0,1]}\E[\br^{(1)}_t(b^*) - \br^{(1)}_t(b_t)]} + \mathbb{P}_{2}^{\otimes t}\parr*{\max_{b^*\in[0,1]}\E[\br^{(2)}_t(b^*) - \br^{(2)}_t(b_t)]}\\
&\stepa{\geq} \frac{\Delta}{2}\sum_{t=1}^{T}\int\min\bra*{\mathrm{d}\mathbb{P}_{1}^{\otimes t}, \mathrm{d}\mathbb{P}_{2}^{\otimes t}}\\
&\stepb{=} \frac{\Delta}{2}\sum_{t=1}^{T} \left(1-\|\mathbb{P}_{1}^{\otimes t}-\mathbb{P}_{2}^{\otimes t}\|_{\mathrm{TV}}\right)\\
&\stepc{\geq} \frac{\Delta T}{2} \parr*{1-\nor*{\mathbb{P}_{1}^{\otimes T}-\mathbb{P}_{2}^{\otimes T}}_{\mathrm{TV}}}
\end{align*}
where (a) is by \eqref{eq:lower_bound_separation1} and \eqref{eq:lower_bound_separation2}, (b) uses $\int\min\bra*{\mathrm{d}P, \mathrm{d}Q} = 1-\nor{P-Q}_{\mathrm{TV}}$, and (c) follows from the data processing inequality for the total variation distance. By the chain rule of KL divergence and an elementary inequality for KL divergence between Bernoulli distributions, it holds that 
\[
D_{\mathrm{KL}}\parr*{\mathrm{Bern}(p)^{\otimes T}\|\mathrm{Bern}(q)^{\otimes T}} = T\cdot D_{\mathrm{KL}}\parr*{\mathrm{Bern}(p)\|\mathrm{Bern}(q)} \leq T\frac{(p-q)^2}{q(1-q)}.
\]
Then by Lemma~\ref{lem:KL_bounds_TV}, we arrive at
\begin{align*}
\reg_1(\pi) + \reg_2(\pi) &\geq \frac{\Delta T}{4}\exp\parr*{-cT\Delta^2}.
\end{align*}
for some absolute constant $c>0$.
Finally, plugging in the choice of $\Delta=\frac{1}{4\sqrt{T}}$ leads to
\[
\max_{i=1,2}\reg_i(\pi) \geq \frac{\reg_1(\pi) + \reg_2(\pi)}{2} = \Omega(\sqrt{T}),
\]
which completes the proof. 
\end{proof}

Next, we prove the contextual lower bound in Theorem \ref{thm:main_lower}.

\begin{proof}[Proof of Theorem \ref{thm:main_lower}]
The proof proceeds by embedding the noncontextual lower bound instance in \cref{thm:lower_noncontext} in the contextual case in Theorem \ref{thm:main_lower}. Again, consider the special case where $v_{t,0}\equiv 0$. 

Suppose first $d\ge 2$, and let
\begin{align*}
\theta_{*} = \left(\frac{1}{2},\mathrm{Unif}\parr*{ \Big\{0, 4\Delta \Big\}^{d-1} }\right),
\end{align*}
with $\Delta = \frac{1}{4}\sqrt{\frac{d-1}{T}}$. Since $T\ge d^2$, it holds that $\|\theta_{*}\|_2\le 1$. We divide the time horizon $T$ into $d-1$ sub-horizons $T_n$ for $n=1,\dots,d-1$ with equal length $\frac{T}{d-1}$. The context during the sub-horizon $T_n$ is chosen to be 
\begin{align*}
    x_t = \left(\frac{1}{2}, 0, \dots, 0, \frac{1}{2}, 0, \dots, 0\right)
\end{align*}
where the second $\frac{1}{2}$ appears in the $(n+1)$-th entry. The HOB distribution for $m_t$ is again the i.i.d. distribution in Theorem \ref{thm:lower_noncontext}. Note that by construction, each sub-horizon becomes an independent learning sub-problem. Therefore, we can decompose the regret $\reg(\pi)$ into the sum of regrets from $d-1$ independent sub-problems, each of time duration $\frac{T}{d-1}$. By Theorem \ref{thm:lower_noncontext}, we have
\begin{align*}
\inf_{\pi} \reg(\pi) = (d-1) \cdot \Omega\parr*{\sqrt{\frac{T}{d-1}}} = \Omega(\sqrt{dT}).
\end{align*}
Finally, consider the case $d=1$. We simply let 
\[
\theta_{*}\sim \mathrm{Unif}\parr*{\bra*{\frac{1}{4},\frac{1}{4}+2\Delta}}
\]
with $\Delta = \frac{1}{4\sqrt{T}}$ and $x_t\equiv 1$. Then applying \cref{thm:lower_noncontext} again gives the desired regret lower bound $\Omega(\sqrt{T})$.
\end{proof}

\section{Auxiliary Lemmata}

\begin{lemma}[Bernstein's inequality in \citep{boucheron2003concentration}]
\label{lem:bernstein}
Consider independent random variables $X_1,\dots, X_n\in[a,b]$. We have
\[
\mathbb{P}\parr*{\abs*{\sum_{i=1}^nX_i - \sum_{i=1}^n\E[X_i]}\geq \varepsilon} \leq 2\exp\parr*{-\frac{\varepsilon^2}{2(\sigma^2 + \varepsilon(b-a)/3)}}
\]
for any $\varepsilon>0$, where $\sigma^2 = \sum_{i=1}^n\Var(X_i)$.

In particular, it implies the following confidence bound: for any $\delta\in(0,1)$, with probability at least $1-\delta$, we have
    \[
    \frac{1}{n}\left|\sum_{i=1}^nX_i - \sum_{i=1}^n\E[X_i] \right|\leq \sqrt{\frac{2\sigma^2/n\log(2/\delta)}{n}} + \frac{2(b-a)\log(2/\delta)}{3n}.
    \]
\end{lemma}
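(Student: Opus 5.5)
The statement is the classical Bernstein inequality, and I would prove it by the Chernoff (Cramér) method, then invert the tail bound to get the confidence-interval form. It suffices to treat the upper tail $\mathbb{P}(\sum_i (X_i-\E[X_i])\ge\varepsilon)$; the lower tail follows by applying the same argument to $-X_i$, and a union bound gives the factor $2$.

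\textbf{Step 1: centering.} Set $Y_i = X_i-\E[X_i]$. Then $\E[Y_i]=0$, $|Y_i|\le b-a=:M$ (since both $X_i$ and $\E[X_i]$ lie in $[a,b]$), and $\sum_i\E[Y_i^2]=\sigma^2$.

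\textbf{Step 2: moment generating function of each $Y_i$.} Using $e^{x}=1+x+\sum_{k\ge2}x^k/k!$, $\E[Y_i]=0$, and $|\E[Y_i^k]|\le M^{k-2}\E[Y_i^2]$, I would show that for $0<s<3/M$,
\[
\E[e^{sY_i}] \le 1+\E[Y_i^2]\sum_{k\ge2}\frac{s^kM^{k-2}}{k!} \le 1+\frac{s^2\E[Y_i^2]}{2(1-sM/3)} \le \exp\parr*{\frac{s^2\E[Y_i^2]}{2(1-sM/3)}}.
\]
The middle inequality uses $k!\ge 2\cdot 3^{k-2}$, and the last uses $1+x\le e^x$.

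\textbf{Step 3: Chernoff bound.} By independence and Markov's inequality,
\[
\mathbb{P}\parr*{\sum_i Y_i\ge\varepsilon}\le \exp\parr*{-s\varepsilon+\frac{s^2\sigma^2}{2(1-sM/3)}}.
\]
Choosing $s=\varepsilon/(\sigma^2+\varepsilon M/3)$, which lies in $(0,3/M)$, makes the exponent exactly $-\varepsilon^2/(2(\sigma^2+\varepsilon M/3))$. This is the main computation, and it is the step where care is needed, but it is a direct substitution.

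\textbf{Step 4: inversion.} Set the two-sided bound equal to $\delta$ and write $L=\log(2/\delta)$. Then $\varepsilon$ solves $\varepsilon^2=2L(\sigma^2+\varepsilon M/3)$, i.e.
\[
\varepsilon=\frac{LM}{3}+\sqrt{\frac{L^2M^2}{9}+2L\sigma^2}\le\sqrt{2\sigma^2L}+\frac{2LM}{3},
\]
using $\sqrt{u+v}\le\sqrt u+\sqrt v$. Dividing by $n$ gives $\sqrt{2(\sigma^2/n)L/n}+2ML/(3n)$, which is the stated confidence bound.

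The only potential obstacle is the moment bound in Step 2, specifically controlling $\E[Y_i^k]$ for $k\ge 3$ via boundedness. Since $|Y_i|^k\le M^{k-2}Y_i^2$ pointwise, this bound is immediate.
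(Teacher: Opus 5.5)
Your proof is correct. The paper gives no argument for this lemma; it simply quotes it from Boucheron--Lugosi--Bousquet. So the useful comparison is with the standard derivation in that source, which obtains Bernstein as a corollary of Bennett's inequality.

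The two routes share everything up to the series in your Step 2. Note that $\sum_{k\ge2}s^kM^{k-2}/k!$ is exactly $(e^{sM}-1-sM)/M^2$.

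\begin{itemize}
\item \textbf{Bennett route.} Keep this exact function, so that $\E[e^{sY_i}]\le\exp\bigl(\E[Y_i^2](e^{sM}-1-sM)/M^2\bigr)$. Optimizing $s$ exactly gives the tail bound $\exp\bigl(-\tfrac{\sigma^2}{M^2}h(M\varepsilon/\sigma^2)\bigr)$ with $h(u)=(1+u)\log(1+u)-u$. The elementary inequality $h(u)\ge u^2/(2(1+u/3))$ then yields Bernstein's exponent.
\item \textbf{Your route.} You relax the series immediately through $k!\ge 2\cdot 3^{k-2}$, obtaining the sub-gamma bound $\exp\bigl(s^2\E[Y_i^2]/(2(1-sM/3))\bigr)$. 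An explicit, non-optimal $s$ then produces the Bernstein exponent by direct substitution.
\end{itemize}

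Your version is shorter and avoids the calculus needed to compare $h$ with the rational function. The Bennett version passes through a sharper intermediate tail, with Poisson-type $u\log u$ behavior when $M\varepsilon\gg\sigma^2$, before relaxing to the same final statement.

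Your inversion in Step 4 is also correct and matches the stated confidence bound, since $\sqrt{2(\sigma^2/n)L/n}=\sqrt{2\sigma^2L}/n$.

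One small fix: when $\sigma^2=0$, your choice gives $s=3/M$. This is not in $(0,3/M)$, and it makes $s^2\sigma^2/(2(1-sM/3))$ an indeterminate $0/0$. The case is trivial, because every $Y_i=0$ almost surely and the tail probability is $0$. You should still dispose of it separately rather than assert $s\in(0,3/M)$ unconditionally.
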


\begin{lemma}[Bretagnolle--Huber inequality \citep{bretagnolle1978estimation}]\label{lem:KL_bounds_TV}
Let $P,Q$ be two probability measures on the same probability space. Then
\[
1-\nor{P-Q}_{\mathrm{TV}} \geq \frac{1}{2}\exp\parr*{-D_{\mathrm{KL}}(P\|Q) }
\]
where $\nor{\cdot}_{\mathrm{TV}}$ denotes the total variation distance, and $D_{\mathrm{KL}}$ denotes the KL divergence.
\end{lemma}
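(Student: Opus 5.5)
The plan is the standard Bhattacharyya-coefficient argument. It combines two inequalities: an upper bound on the affinity $\int\sqrt{\mathrm{d}P\,\mathrm{d}Q}$ by the overlap $1-\nor{P-Q}_{\mathrm{TV}}$ (via Cauchy--Schwarz), and a lower bound on the affinity by $\exp(-D_{\mathrm{KL}}(P\|Q)/2)$ (via Jensen).

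\textbf{Setup.} First dispose of the trivial case. If $D_{\mathrm{KL}}(P\|Q)=\infty$, the right-hand side is $0$ and there is nothing to prove. Otherwise $P\ll Q$. I fix a common dominating measure $\mu$ (e.g. $\mu=P+Q$) and write $p=\mathrm{d}P/\mathrm{d}\mu$ and $q=\mathrm{d}Q/\mathrm{d}\mu$. I then record the two identities
\[
\int\min\{p,q\}\,\mathrm{d}\mu = 1-\nor{P-Q}_{\mathrm{TV}},\qquad \int\max\{p,q\}\,\mathrm{d}\mu = 1+\nor{P-Q}_{\mathrm{TV}}.
\]
These follow from $\min+\max=p+q$ and $\max-\min=|p-q|$, together with $\nor{P-Q}_{\mathrm{TV}}=\frac12\int|p-q|\,\mathrm{d}\mu$.

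\textbf{Upper bound on the affinity.} Since $\sqrt{pq}=\sqrt{\min\{p,q\}\max\{p,q\}}$, Cauchy--Schwarz gives
\[
\Big(\int\sqrt{pq}\,\mathrm{d}\mu\Big)^2\le \Big(\int\min\{p,q\}\,\mathrm{d}\mu\Big)\Big(\int\max\{p,q\}\,\mathrm{d}\mu\Big) = (1-\nor{P-Q}_{\mathrm{TV}})(1+\nor{P-Q}_{\mathrm{TV}}) \le 2\,(1-\nor{P-Q}_{\mathrm{TV}}).
\]

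\textbf{Lower bound on the affinity.} I restrict the integral to $\{p>0\}$ and write it as a $P$-expectation:
\[
\int\sqrt{pq}\,\mathrm{d}\mu \ge \E_P\Big[\exp\Big(\tfrac12\log\tfrac{q}{p}\Big)\Big] \ge \exp\Big(-\tfrac12\,\E_P\Big[\log\tfrac{p}{q}\Big]\Big) = \exp\Big(-\tfrac12 D_{\mathrm{KL}}(P\|Q)\Big).
\]
The second inequality is Jensen's inequality applied to the convex function $\exp$. Squaring and combining with the upper bound gives $\exp(-D_{\mathrm{KL}}(P\|Q))\le 2(1-\nor{P-Q}_{\mathrm{TV}})$, which is exactly the claim.

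\textbf{Main obstacle.} The only delicate step is justifying Jensen's inequality when $\log(q/p)$ may take the value $-\infty$ or fail to be integrable. Finiteness of the KL divergence ensures $\E_P[\log(p/q)]$ is well defined and finite; the convention $\log(q/p)=-\infty$ on $\{q=0\}$ has $P$-measure zero because $P\ll Q$. Points where $p=0$ are simply dropped, which only decreases $\int\sqrt{pq}\,\mathrm{d}\mu$ and so is harmless for a lower bound.
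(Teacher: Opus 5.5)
Your proof is correct: the Cauchy--Schwarz bound $\bigl(\int\sqrt{pq}\,\mathrm{d}\mu\bigr)^2\le(1-\|P-Q\|_{\mathrm{TV}})(1+\|P-Q\|_{\mathrm{TV}})$ and the Jensen bound $\int\sqrt{pq}\,\mathrm{d}\mu\ge e^{-D_{\mathrm{KL}}(P\|Q)/2}$ combine as you describe, and you handle the $D_{\mathrm{KL}}=\infty$ case and the null sets properly. The paper gives no proof of this lemma; it only cites Bretagnolle and Huber (1978) as a tool for the Le Cam two-point lower bound. Your Hellinger-affinity argument is the standard textbook proof of that cited result, so there is nothing to reconcile.
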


\begin{lemma}[Elliptical potential lemma \citep{abbasi2011improved,wen2025joint}]\label{lem:matrix_ellip_potential}
For any given vectors $\{z_\tau\}_{\tau=1}^{t-1}$ in $\R^d$ with $\|z_\tau\|_2\leq 1$, let the Gram matrix be $A_s = I + \sum_{\tau<s}z_\tau z_\tau^{\top}$ and for every $1\leq s \leq t$. It holds that
\[
\sum_{\tau<t}\|z_\tau\|^2_{A_{\tau}^{-1}} \leq 2d\log\parr*{1+\frac{t-1}{d}}.
\]
In particular, by Cauchy-Schwartz inequality,
\[
\sum_{\tau<t}\|z_\tau\|_{A_{\tau}^{-1}} \leq \sqrt{2d(t-1)\log\parr*{1+\frac{t-1}{d}}}.
\]
\end{lemma}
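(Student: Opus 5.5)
The plan is the standard determinant-telescoping argument. Fix $t$ and the vectors $\{z_\tau\}_{\tau<t}$ with $\|z_\tau\|_2\le 1$. Since $A_{\tau+1}=A_\tau+z_\tau z_\tau^\top$, I would first apply the matrix determinant lemma to get
\[
\det A_{\tau+1}=\det A_\tau\parr*{1+z_\tau^\top A_\tau^{-1}z_\tau}=\det A_\tau\parr*{1+\|z_\tau\|^2_{A_\tau^{-1}}}.
\]
Taking logarithms and telescoping from $\tau=1$ to $t-1$, and using $A_1=I$ so that $\log\det A_1=0$, this gives
\[
\sum_{\tau<t}\log\parr*{1+\|z_\tau\|^2_{A_\tau^{-1}}}=\log\det A_t .
\]

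Next I need to pass from $\log(1+x)$ back to $x$. The key observation is that each summand is at most one. Since $A_\tau\succeq I$, we have $A_\tau^{-1}\preceq I$, hence $\|z_\tau\|^2_{A_\tau^{-1}}\le\|z_\tau\|_2^2\le 1$. On $[0,1]$ the elementary inequality $x\le 2\log(1+x)$ holds: equality at $x=0$, and $\log(1+x)\ge x\log 2\ge x/2$ by concavity. Therefore
\[
\sum_{\tau<t}\|z_\tau\|^2_{A_\tau^{-1}}\le 2\log\det A_t .
\]
This is the only place where the normalization $\|z_\tau\|_2\le1$ and the identity regularizer matter, and I regard it as the one step needing care.

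It remains to bound $\log\det A_t$. By AM--GM applied to the (positive) eigenvalues of $A_t$,
\[
\det A_t\le\parr*{\frac{\operatorname{tr}A_t}{d}}^d=\parr*{\frac{d+\sum_{\tau<t}\|z_\tau\|_2^2}{d}}^d\le\parr*{1+\frac{t-1}{d}}^d .
\]
Combining this with the previous display yields the claimed bound $2d\log\parr*{1+\frac{t-1}{d}}$.

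The second claim then follows from Cauchy--Schwarz over the $t-1$ terms:
\[
\sum_{\tau<t}\|z_\tau\|_{A_\tau^{-1}}\le\sqrt{(t-1)\sum_{\tau<t}\|z_\tau\|^2_{A_\tau^{-1}}}\le\sqrt{2d(t-1)\log\parr*{1+\frac{t-1}{d}}} .
\]
None of the steps is a real obstacle; the argument is routine once the per-step bound $\|z_\tau\|^2_{A_\tau^{-1}}\le 1$ is in hand.
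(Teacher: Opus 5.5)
Your argument is correct: the matrix determinant lemma with telescoping from $A_1=I$, the per-step bound $\|z_\tau\|^2_{A_\tau^{-1}}\le 1$ (which lets you use $x\le 2\log(1+x)$ on $[0,1]$ without the usual $\min\{1,\cdot\}$ truncation), and AM--GM on the eigenvalues of $A_t$ give exactly the stated constants. The paper gives no proof of its own and defers to the cited elliptical potential lemma of Abbasi-Yadkori et al., whose proof is this same determinant-telescoping argument.
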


\begin{lemma}[Lemma 16 in \citep{han2025optimal}]\label{lem:weighted_sum_han25}
Let $n_t^j$ be defined as in \eqref{eq:n_def} for any fixed level in Algorithm \ref{alg:master_alg_te} and $(v_1,\dots,v_J)$ be a sequence of any nonnegative numbers such that $\sum_{j=1}^J v_j = s$. Then it holds that
\[
\sum_{t\in\Phi_{T+1}}\sum_{j\le j_t}\frac{v_j}{n_t^j} \le s(1+\log T).
\]
\end{lemma}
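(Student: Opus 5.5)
We prove this by exchanging the order of summation and then, for each fixed bid index $j$, reducing the inner sum to a harmonic sum. The counts $n_t^j$ grow by one each time a bid at least $b^j$ is recorded at the current level. Throughout, fix the level $\ell$ and suppress it from the notation.

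\emph{Step 1 (swap the sums).} Write
\[
\sum_{t\in\Phi_{T+1}}\sum_{j\le j_t}\frac{v_j}{n_t^j}=\sum_{j=1}^J v_j\sum_{t\in\Phi_{T+1}}\frac{\indic[j_t\ge j]}{n_t^j}.
\]
Since $v_j\ge 0$ and $\sum_j v_j=s$, it suffices to show that for every fixed $j$,
\[
\sum_{t\in\Phi_{T+1}:\, j_t\ge j}\frac{1}{n_t^j}\le 1+\log T .
\]

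\emph{Step 2 (monotone counts).} Fix $j$ and list the times $t_1<t_2<\dots<t_K$ in $\Phi_{T+1}$ with $j_{t_k}\ge j$, i.e.\ $b_{t_k}\ge b^j$. By \eqref{eq:n_def}, $n_t^j$ counts the indices $\tau\in\Phi_t\cup\widehat{\Phi}_0^{(\ell)}$ with $b_\tau\ge b^j$. Two facts give the increments.
\begin{itemize}
\item[(i)] The sets $\Phi_t$ only grow in $t$, and $t_k$ is added to $\Phi_{t_k+1}$ when it is assigned to this level. Hence $t_k\in\Phi_{t_{k+1}}$.
\item[(ii)] Every initialization time in $\widehat{\Phi}_0^{(\ell)}$ used the bid $b_\tau=1\ge b^j$. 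Hence $n_{t_1}^j\ge T_0\ge 1$.
\end{itemize}
It follows that $n_{t_k}^j\ge n_{t_1}^j+(k-1)\ge k$ for every $k$.

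\emph{Step 3 (harmonic bound).} By Step 2,
\[
\sum_{k=1}^K\frac{1}{n_{t_k}^j}\le\sum_{k=1}^K\frac1k\le 1+\log K\le 1+\log T,
\]
using $K\le T$. Multiplying by $v_j$ and summing over $j$ yields $s(1+\log T)$.

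The only step needing care is Step 2. One must check that every $t$ counted in the outer sum is indeed inserted into $\Phi^{(\ell)}$ at that same level, so that it contributes to all later counts. This holds by the definition of $\Phi_{T+1}^{(\ell)}$ as the set of times added at level $\ell$. The initialization block ensures the counts never vanish. The bound could even be sharpened to $\log(T/T_0)+1$, but that is not needed here.
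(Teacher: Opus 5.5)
Your proof is correct and is the standard argument behind the cited Lemma 16 of Han et al., which the paper invokes without reproducing. You exchange the order of summation, observe that each time $t$ in the level's index set with $j_t\ge j$ raises $n^j$ by one at all later times, and bound the resulting harmonic sum by $1+\log T$; using the initialization block $\widehat{\Phi}_0^{(\ell)}$ (all bids equal to $1$) to guarantee $n_t^j\ge 1$ correctly handles the one point where the paper's specific definitions matter.
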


\subsection{Auction-related Auxiliary Lemmata}

\begin{lemma}[Bounded Optimizer Gap under Value Perturbation]\label{lem:approx_CDF_UCB_Lip}
Let $G$ be a $(\omega,\lambda)$-locally-bounded CDF on $[0,1]$ with $\omega,\lambda\in(0,1)$ (c.f. Definition \ref{def:local_bounded}), and $\widehat{G}$ be another CDF with $\sup_{b\in \Bcal}\abs{G(b)-\widehat{G}(b)} \le \frac{1-\epsilon-\lambda}{2}$ for some $\epsilon>0$. Let $\Bcal\subseteq [0,1]$ be a $\frac{1}{\sqrt{T}}$-discretization with $\sqrt{T}> \frac{4}{\omega}$. Denote $\hb_*(v) = \argmax_{b\in\Bcal}\widehat{G}(b)(v-b)+\int_0^b\hG(m)\mathrm{d}m$ with tie broken by taking the bid closest to the value $v$. For any $v_1\le v_2$, if $v_2 - v_1\le \frac{\omega}{2}$, then
\begin{align*}
\abs{\widehat{G}(\widehat{b}_*(v_2)) - \widehat{G}(\widehat{b}_*(v_1))} \le 1 - \epsilon. 
\end{align*}
\end{lemma}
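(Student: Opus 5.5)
The plan is to show that the empirical optimizer $\hb_*(v)$ always lies within one grid step of $v$. The claimed gap then follows from the local boundedness of $G$ together with the uniform closeness of $\hG$ to $G$ on $\Bcal$.

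\emph{Step 1: unimodality of the surrogate objective.} Write $\widehat{F}_v(b) \coloneqq \hG(b)(v-b)+\int_0^b\hG(m)\mathrm{d}m$. Integration by parts, which is valid for any CDF including ones with atoms, gives
\[
\widehat{F}_v(b) = \int_{[0,b]} (v-m)\,\mathrm{d}\hG(m).
\]
This is the expected second-price utility when the HOB is drawn from $\hG$ and the value is $v$. For $b<b'$ we then have
\[
\widehat{F}_v(b')-\widehat{F}_v(b)=\int_{(b,b']}(v-m)\,\mathrm{d}\hG(m).
\]
This increment is nonnegative when $b'\le v$ and nonpositive when $b\ge v$. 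Hence $\widehat{F}_v$ is nondecreasing on $[0,v]$ and nonincreasing on $[v,1]$. Restricted to $\Bcal$, a maximizer is therefore attained at one of the two grid points adjacent to $v$, namely the largest grid point $\le v$ or the smallest grid point $\ge v$ (only one of them exists if $v$ is outside $[0,1]$, and the argument is the same).

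Because ties are broken toward $v$, the selected maximizer satisfies $\abs{\hb_*(v)-v}\le \frac{1}{\sqrt{T}}$. If $\hG$ is only specified on $\Bcal$, I would extend it to a right-continuous step CDF. The displayed identity then still holds, with the integral being the discrete Riemann sum that Algorithm \ref{alg:ucb_selection} uses.

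\emph{Step 2: closeness of the two optimizers.} Take $v_1\le v_2$ with $v_2-v_1\le \frac{\omega}{2}$. By Step 1,
\[
\abs{\hb_*(v_2)-\hb_*(v_1)} \le \frac{\omega}{2}+\frac{2}{\sqrt{T}} < \frac{\omega}{2}+\frac{\omega}{2}=\omega,
\]
where the last inequality uses $\sqrt{T}>\frac{4}{\omega}$. By Definition \ref{def:local_bounded},
\[
\abs{G(\hb_*(v_2))-G(\hb_*(v_1))}\le\lambda.
\]

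\emph{Step 3: transfer from $G$ to $\hG$.} Both optimizers lie in $\Bcal$, so the assumption $\sup_{b\in\Bcal}\abs{G(b)-\hG(b)}\le\frac{1-\epsilon-\lambda}{2}$ applies at each of them. The triangle inequality gives
\[
\abs{\hG(\hb_*(v_2))-\hG(\hb_*(v_1))}\le \lambda + 2\cdot\frac{1-\epsilon-\lambda}{2} = 1-\epsilon,
\]
which is the claim.

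The main obstacle is Step 1. I need monotonicity of $\widehat{F}_v$ on each side of $v$ even when $\hG$ has atoms, and I need the tie-breaking rule to place the maximizer within $\frac{1}{\sqrt{T}}$ of $v$ rather than just anywhere on a flat stretch. The Stieltjes-increment identity handles both: it shows that any maximizer on a plateau can be moved to a grid neighbor of $v$ without loss, and the tie-breaking rule selects exactly that neighbor.
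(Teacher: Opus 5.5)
Your proof is correct and follows the same route as the paper's. Each optimizer lies within one grid step of its value, so the two optimizers are less than $\omega$ apart, and local boundedness of $G$ plus the uniform closeness of $\hG$ on $\Bcal$ gives the bound; your Stieltjes-increment unimodality argument supplies the justification that the paper only asserts by appealing to truthfulness of the continuous second-price auction.

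One small repair: when $v\notin[0,1]$ the claim $\abs{\hb_*(v)-v}\le \frac{1}{\sqrt{T}}$ is false (e.g.\ $v<0$ gives $\hb_*(v)=0$). However, $\hb_*(v)$ is still within one grid step of the clipped value $\min\{\max\{v,0\},1\}$, which is $1$-Lipschitz in $v$, so your Step 2 goes through unchanged.
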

\begin{proof}
Since we are in an SPA, it is known that $\hb_*(v_1)=v_1$ and $\hb_*(v_2)=v_2$ when the bid space is continuous \citep{vickrey1961counterspeculation}. Here $\Bcal$ is a $\frac{1}{\sqrt{T}}$-discretization, so we have $\abs{\hb_*(v_1)-v_1}\le\frac{1}{\sqrt{T}}$ and $\abs{\hb_*(v_2)-v_2}\le\frac{1}{\sqrt{T}}$, and also $\hb_*(v_1) \le \hb_*(v_2)$. For the sake of simplicity, write $b_1=\hb_*(v_1)$ and $b_2=\hb_*(v_2)$. By monotonicity of $\hG$, we have $\widehat{G}(b_1)\leq \widehat{G}(b_2)$. For the sake of contradiction, assume $\widehat{G}(b_2)-\widehat{G}(b_1) > 1-\lambda$. Since $\abs{b_1-b_2}\le \abs{v_1-v_2} + \frac{2}{\sqrt{T}} < \omega$ and $G$ is $(\omega,\lambda)$-locally-bounded, we have
\[
1-\epsilon < \widehat{G}(b_2)-\widehat{G}(b_1) \le (1-\epsilon-\lambda) + G(b_2) - G(b_1) \le 1-\epsilon,
\]
which gives a contradiction. This completes the proof.

\end{proof}

\end{document}